\newtheorem{thm}{Theorem}[section]
\newtheorem{prop}{Proposition}[section]
\newtheorem{lem}{Lemma}[section]
\newtheorem{cor}{Corollary}[section]
\newtheorem{defn}{Definition}[section]
\newtheorem{rmk}{\it Remark}[section]
\newcommand{\Log}{\mathrm{Log}}
\newcommand{\R}{\mathrm{Re}}
\newcommand{\I}{\mathrm{Im}}
\newcommand{\sign}{\mathrm{sign}}
\newcommand{\const}{\mathrm{const.}}
\newcommand{\cc}{\mathrm{c.c.}}
\newcommand{\Res}{\mathrm{Res}}
\def\Xint#1{\mathchoice
{\XXint\displaystyle\textstyle{#1}}%
{\XXint\textstyle\scriptstyle{#1}}%
{\XXint\scriptstyle\scriptscriptstyle{#1}}%
{\XXint\scriptscriptstyle\scriptscriptstyle{#1}}%
\!\int}
\def\XXint#1#2#3{{\setbox0=\hbox{$#1{#2#3}{\int}$ }
\vcenter{\hbox{$#2#3$ }}\kern-.6\wd0}}
\def\dashint{\Xint-}
\journal{Physica D}
\begin{document}

\begin{frontmatter}



\title{The scattering transform for the Benjamin-Ono equation in the small-dispersion limit\tnoteref{toWhitham}}
\tnotetext[toWhitham]{In memory of G. B. Whitham.}


\author{Peter D. Miller}\ead{millerpd@umich.edu}
\author{Alfredo N. Wetzel}\ead{wreagan@umich.edu}
\address{Department of Mathematics, University of Michigan, East Hall, 530 Church St., Ann Arbor, MI 48109}

\begin{abstract}
Using exact formulae for the scattering data of the Benjamin-Ono equation valid for general rational potentials recently obtained in \cite{MillerWetzel2015}, we rigorously analyze the scattering data in the small-dispersion limit.
In particular, we deduce precise asymptotic formulae for the reflection coefficient, the location of the eigenvalues and their density, and the asymptotic dependence of the phase constant (associated with each eigenvalue) on the eigenvalue itself.  Our results give direct confirmation of conjectures in the literature that have been partly justified by means of inverse scattering, and they also provide new details not previously reported in the literature.
\end{abstract}

\begin{keyword}
Benjamin-Ono equation \sep inverse-scattering transform \sep small-dispersion limit.
\end{keyword}

\end{frontmatter}

\section{Introduction}\label{introduc}

The Benjamin-Ono (BO) equation
\begin{equation}
\frac{\partial u}{\partial t} + 2 u \frac{\partial u}{\partial x} + \epsilon \mathcal{H}\left[ \frac{\partial^2 u}{\partial x^2} \right] =0, \quad -\infty < x < \infty, \quad t >0 \label{BOeqn}
\end{equation}
describes the weakly nonlinear evolution of one-dimensional internal gravity waves in a stratified fluid \cite{Benjamin67, DavisAcrivos67, Ono75}, where $u$ corresponds to the wave profile, $\epsilon > 0$ is a measure of the effect of dispersion, and the operator $\mathcal{H}$ denotes the Hilbert transform defined by the Cauchy principal value integral
\begin{equation}
\mathcal{H}[u](x,t) := \frac{1}{\pi} \dashint_{-\infty}^{\infty} \frac{u(\xi,t)}{\xi-x}\; d\xi.
\end{equation}
The BO equation  \eqref{BOeqn} has been used to model internal waves in deep water \cite{ChoiCamassa1996}, the atmospheric roll cloud wave-train known as the morning glory \cite{PorterSmyth2002}, and nonlinear Rossby waves in shear flow \cite{Ono1981}, to name a few examples.
Beyond fluid dynamics, equation \eqref{BOeqn} has been observed to model the spectral dynamics of incoherent shocks in nonlinear optics \cite{GarnierXuTrilloPicozzi2013}.

In the small-dispersion limit ($\epsilon \to 0$), numerical experiments indicate that, for $t$ sufficiently small (independent of $\epsilon$), the solution of the BO equation with smooth $\epsilon$-independent initial data $u_0$ is well approximated by the solution of the inviscid Burgers equation (Hopf equation) obtained by setting $\epsilon=0$ in \eqref{BOeqn}. 
At the advent of a shock in the dispersionless model, the solution of \eqref{BOeqn} is then  regularized with the formation of a dispersive shock wave (DSW).
This DSW has the well-known structure of an $O(1)$-amplitude modulated periodic traveling wave with wavelength $O(\epsilon)$. 
In the DSW region the solution of the BO equation may be formally approximated using Whitham modulation theory. 
Surprisingly, and unlike the case of the Korteweg-de Vries (KdV) equation, the modulation equations for the BO equation are fully uncoupled \cite{DobrokhotovKrichever1991}, consisting of several independent copies of the inviscid Burgers equation.
A formalism for matching the Whitham modulation approximation for the DSW onto inviscid Burgers solutions in the domain exterior to the DSW was developed and applied by Matsuno \cite{Matsuno1998a, Matsuno1998b} and Jorge, Minzoni, and Smyth \cite{JorgeMinzoniSmyth1999} to analyze the Cauchy problem for \eqref{BOeqn}.
Partial confirmation of these results was given by Miller and Xu \cite{MillerXu2011}, who rigorously computed the weak limit of (essentially, modulo an approximation of the scattering data) the solution of the Cauchy problem for \eqref{BOeqn} for a class of positive initial data using inverse-scattering transform (IST) techniques, by developing an analogue for the BO equation of a method first invented for KdV by Lax and Levermore \cite{LaxLevermore1983}. 
Avoiding the (a priori unjustified) approximation of the scattering data requires its careful analysis in the small-dispersion limit by direct means.

One application of small-dispersion theory in nonlinear waves involves the investigation of universality, for example determining properties of general solutions near the onset of the DSW (the point of gradient catastrophe for the inviscid Burgers equation) that are independent of initial data. 
It has been postulated by Dubrovin \cite{Dubrovin2006} that the solutions of a wide class of dispersive Hamiltonian perturbations of the inviscid Burgers equation are universally modeled by a particular solution of an integrable fourth order ODE (Painlev\'e I$_2$) near the gradient catastrophe point.
This result was subsequently proven rigorously for the KdV equation by Claeys and Grava \cite{ClaeysGrava2008} using Riemann-Hilbert analysis in an appropriate double scaling limit.
Interestingly, the BO equation does not fall within the Dubrovin universality class of equations.  However,  Maseoro, Raimondo, and Antunes \cite{MasoeroRaimondoAntunes2015} have recently generalized Dubrovin's method to the BO and other nonlocal equations.  Based on this analysis, in \cite{MasoeroRaimondoAntunes2015} it is postulated that the solution of the BO Cauchy problem should be universally modeled near the point of gradient catastrophe by a particular solution of a certain nonlocal analogue of Painlev\'e I$_2$.
Rigorous verification of this formal result using the IST would again require, among other tools related to asymptotic analysis of nonlocal Riemann-Hilbert problems yet to be developed, the accurate asymptotic characterization of the scattering data for the BO equation.

In this paper we use recently obtained exact formulae \cite{MillerWetzel2015} to study  the asymptotic behavior of the scattering data for the BO equation \eqref{BOeqn} in the limit $\epsilon \to 0$. 
These formulae hold for rational initial data with simple poles, and for convenience we adopt here the further condition of a single local extremum (Definition~\ref{def:rKS}). 
The formulae for the scattering data are sufficiently explicit to admit rigorous asymptotic analysis by classical methods such as steepest descent (for integrals), thus justifying and extending previously reported formal asymptotic results.
We confirm directly the celebrated formulae of Matsuno for the asymptotic density of the eigenvalues \cite{Matsuno81} (Corollary~\ref{cor-Matsuno-density}) and the magnitude of the reflection coefficient \cite{Matsuno82} (Corollary~\ref{beta_mag_asymp}), the formula postulated in Miller and Xu \cite{MillerXu2011} for the asymptotic values of the phase constants (Theorem~\ref{theorem-gamma-limit}), and Xu's conjecture on the phase of the reflection coefficient \cite{Xuthesis} with some modifications (Theorem~\ref{Reflect_asymptot}).  We also obtain more precise information about the discrete spectrum than had even been conjectured before (Theorem~\ref{theorem-uniform} and Corollary~\ref{cor_asymp_eig_loc}).

\section{Direct Scattering for Rational Initial Data}\label{sec:direct-scattering-rational}

The direct scattering problem for the BO equation can be thought of as the problem of spectral analysis of the operator $\mathcal{L}:=-i\epsilon\partial_x + \mathcal{C}^+u_0\mathcal{C}^+$ acting on the Hardy subspace $H^+(\mathbb{R})$ of $L^2(\mathbb{R})$ consisting of functions analytic in the upper-half $x$-plane.  Here, the potential $u_0:\mathbb{R}\to\mathbb{R}$ is the initial condition $u(x,0)=u_0(x)$ for \eqref{BOeqn}, and the operator $\mathcal{C}^+:L^2(\mathbb{R})\to H^+(\mathbb{R})$ is the self-adjoint orthogonal projection onto the Hardy space.
As first noted in \cite{KodamaAblowitzSatsuma82} and generalized in \cite{MillerWetzel2015}, the direct scattering problem for the BO equation can be effectively solved for rational potentials of the form 
\begin{equation}\label{u0_def_frac}
u(x,0) = u_0(x) = \sum_{p=1}^{P} \frac{c_p}{x-z_p}+\cc
\;\; \text{satisfying} \;\;
u_0 \in L^{1}(\mathbb{R}),
\end{equation}
where $c_p \neq 0$ and the poles $\{z_p\}_{p=1}^{P}$ are points with positive imaginary parts and distinct real parts increasing with $p$.  The condition $u_0\in L^1(\mathbb{R})$ is equivalent to $\R\{c_1+\cdots + c_P\}=0$, in which case $M:=\int_{-\infty}^\infty u_0(x)\,dx=2\pi i(c_1+\cdots + c_P)\in\mathbb{R}$.  For potentials of the form \eqref{u0_def_frac}, the problem of recovering the scattering data is essentially reduced to the problem of studying the linear system
\begin{equation}\label{lin_sys_Abar}
\mathbf{A}(\lambda) \mathbf{v}(\lambda) = \mathbf{b}(\lambda)
\end{equation}
for an unknown vector $\mathbf{v}(\lambda) \in \mathbb{C}^{P}$, where $\mathbf{A}(\lambda) \in \mathbb{C}^{P \times P}$ and $\mathbf{b}(\lambda) \in \mathbb{C}^{P}$ have components
\begin{equation}\label{lin_sys_coeff}
A_{mp}(\lambda) := \int_{C_m} \frac{e^{-ih(z;\lambda)/\epsilon }}{z-z_p} \, dz
\; \;\text{and} \;\;
b_m(\lambda) := -  \lambda \int_{C_m} e^{-ih(z;\lambda)/\epsilon } \, dz
\end{equation}
in which the exponents involve the function
\begin{equation}\label{def_h}
h(x;\lambda) := \lambda x + f(x),
\end{equation}
and $f$ denotes a certain anti-derivative of $u_0$:
\begin{equation} \label{f_eqn}
f(x) := \sum_{p=1}^{P}  c_p \left( \Log \left( i (x-z_p)\right)  +\frac{\pi i}{2} \right)+ \cc,\;\;
f'(x)=u_0(x),
\end{equation}
where $\Log(\cdot)$ is the principal branch ($|\I\{\Log(\cdot)\}|<\pi$).  The contours $C_1,\dots,C_P$
lie in the domain of analyticity of $f$ and are chosen as follows when $\lambda<0$.  If $ic_m/\epsilon$ is not a negative integer, then $C_m=U_m^<$, a contour beginning at $i\infty$ to the left of the line $\R\{z\}=\R\{z_1\}$ and terminating at $i\infty$ to the right of the line $\R\{z\}=\R\{z_m\}$ and (if $m<P$) to the left of the line $\R\{z\}=\R\{z_{m+1}\}$.  If $ic_m/\epsilon$ is a negative integer, then $C_m=\ell_0(z_m)$ is a contour beginning at $i\infty$ to the left of the line $\R\{z\}=\R\{z_1\}$ and terminating at $z_m$.  See Figure~\ref{fig:U-combo-PhysicaD}.
\begin{figure}[h!]
\begin{center}
\includegraphics[scale=0.8]{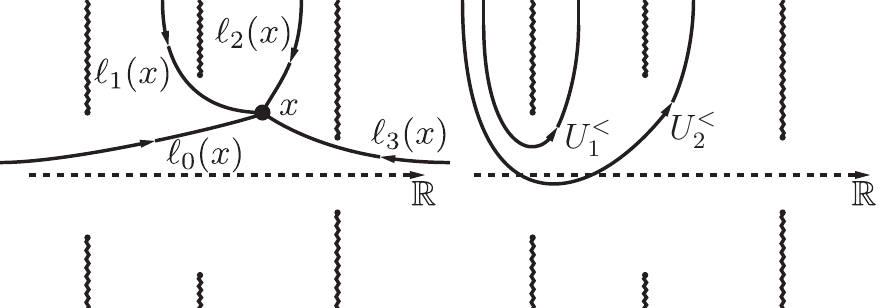}
\caption{The domain of analyticity of the function $f$ omits from $\mathbb{C}$ vertical branch cuts emanating from each of the poles of $u_0$ in the direction away from the real axis, illustrated here for a rational potential $u_0$ with $P=3$.  Left:  contours $\ell_m(x)$.  Right:  contours $U_m^<$.} \label{fig:U-combo-PhysicaD}
\end{center}
\end{figure}
The matrix $\mathbf{A}(\lambda)$ and vector $\mathbf{b}(\lambda)$ can be analytically continued from $\lambda<0$ to the domain $\mathbb{C}\setminus\mathbb{R}^+$, and while the formulae \eqref{lin_sys_coeff} remain valid, due to the dominant term $\lambda x$ in the function $h$, the contours $C_1,\dots,C_P$ must be suitably rotated (continuing $f$ through its branch cuts) to maintain absolute convergence of the integrals and allow the continuation.  

The discrete eigenvalues of the BO spectral problem for the rational potential $u_0$ are the roots of the \emph{Evans function}
\begin{equation}
\label{Evans_function}
D(\lambda):=\det(\mathbf{A}(\lambda)),\quad\lambda\in\mathbb{C}\setminus\mathbb{R}^+.  
\end{equation}
As $\mathcal{L}$ is self-adjoint, the eigenvalues are real numbers.  Moreover, since $-i\epsilon\partial_x$ is positive semi-definite on $H^+(\mathbb{R})$, it is easy to see that the eigenvalues $\lambda$ satisfy 
\begin{equation}
\lambda\ge -\sup_{x\in\mathbb{R}}\{u_0(x)\},\quad\text{$\lambda$ an eigenvalue}.  
\label{eq:eigenvalue-bound}
\end{equation}
Associated with each eigenvalue $\lambda=\lambda_j<0$ there is a unique normalized (although not in the $L^2(\mathbb{R})$-sense) eigenfunction $\Phi_j(x)=\Phi(x;\lambda_j)$, where (convergent improper integral)
\begin{equation}\label{Phij_int}
\Phi(x;\lambda)
= - \frac{i}{\epsilon} e^{ih(x;\lambda)/\epsilon } \int_{-\infty}^{x} e^{-ih(z;\lambda)/\epsilon }  \sum_{p=1}^{P} \frac{\phi_p(\lambda)}{z-z_p} \, dz.
\end{equation}
Here, the coefficients $\phi_p(\lambda)$ are defined for all $\lambda\in\mathbb{C}\setminus\mathbb{R}^+$ by 
\begin{equation}
\sum_{p=1}^PA_{mp}(\lambda)\phi_p(\lambda)=0,\quad m=1,\dots,P-1
\end{equation}
together with
\begin{equation}\label{Phi_norm}
 \sum_{p=1}^{P} \phi_p(\lambda) = \lambda.
\end{equation}
When $\lambda=\lambda_j<0$ is an eigenvalue, $\mathbf{A}(\lambda)$ has rank $P-1$, and thus 
$\boldsymbol{\phi}(\lambda)=(\phi_1(\lambda),\dots,\phi_P(\lambda))^\mathsf{T}$ is the unique nullvector of $\mathbf{A}(\lambda_j)$ normalized by the condition \eqref{Phi_norm}.

The scattering data associated with the rational potential $u_0$ are as follows \cite{MillerWetzel2015}.
\paragraph{Continuous spectrum for $\lambda>0$} 
\begin{itemize}
\item Reflection coefficient $\beta(\lambda)$ for $\lambda > 0$: defined by 
\begin{equation}\label{base_reflect_coeff}
\beta(\lambda) := \frac{i}{\epsilon} e^{iM/\epsilon} \int_{-\infty}^{\infty}  e^{-i h(z;\lambda)/\epsilon }  \left( u_0(z) - \sum_{p=1}^{P} \frac{v_p(\lambda)}{z-z_p}\right) \; dz
\end{equation}
with $M:=2\pi i(c_1+\cdots + c_P)\in\mathbb{R}$. The coefficients $\{v_p(\lambda)\}_{p=1}^{P}$, assembled in a vector $\mathbf{v}(\lambda) := (v_1(\lambda),\dots,v_P(\lambda))^\mathsf{T}$, satisfy the linear algebra problem 
\begin{equation}\label{lin_sys_Jost}
\mathbf{A}^{>}(\lambda) \mathbf{v}(\lambda) = \mathbf{b}^{>}(\lambda), \quad\lambda>0,
\end{equation}
where  $\mathbf{A}^{>}(\lambda)\in\mathbb{C}^{P\times P}$ and $\mathbf{b}^{>}(\lambda)\in\mathbb{C}^{P}$ denote the boundary values taken by $\mathbf{A}(\lambda)$ and $\mathbf{b}(\lambda)$, respectively, on $\mathbb{R}^+$ from the upper-half $\lambda$-plane.
\end{itemize}
\paragraph{Discrete spectrum for $\lambda<0$}
\begin{itemize}
\item Negative eigenvalues $\{\lambda_j\}_{j=1}^{N}$: the zeros of the Evans function 
$D(\lambda)$ defined by \eqref{Evans_function}.
\item Phase constants $\{\gamma_j\}_{j=1}^{N}$: defined by
\begin{equation}\label{Phase_const1}
\gamma_j  := \frac{\epsilon}{2 \pi \lambda_j} \int_{-\infty}^{\infty} \Phi_j(x)^{*} \left( x  \Phi_j(x) - 1 \right) \; dx
\end{equation}
 in terms of the corresponding eigenvalue $\lambda_j$ and normalized eigenfunction $\Phi_j(x)=\Phi(x;\lambda_j)$ defined by \eqref{Phij_int}.
 \end{itemize}

\begin{rmk}
The number $N$ of eigenvalues $\lambda_j$ and phase constants $\gamma_j$ is dependent on the dispersion parameter $\epsilon$. 
In fact, $N = N(\epsilon) = O(\epsilon^{-1})$ as $\epsilon \to 0$ for $P$ fixed; see Section~\ref{subsubsec:evansfunc}.
\end{rmk}

\begin{rmk}
The formula \eqref{Phase_const1} is a relation known to hold for general real potentials $u_0$  \cite{DoktorovLeble2007}.  In \cite{MillerWetzel2015} a different formula for $\gamma_j$ was shown to be valid for rational potentials $u_0$.  While that formula is especially useful for exact calculations (see Section~\ref{sec:example}), the general formula \eqref{Phase_const1} turns out to be better suited to small-dispersion asymptotics; see Section~\ref{subsubsec:phase_const_asymp}.
\end{rmk}

\section{Rational Klaus-Shaw Potentials}\label{subsec:Klaus-Shaw}

The analysis of the BO scattering data in the small-dispersion limit is the most straightforward in the case that the rational potential $u_0$ is of \emph{Klaus-Shaw} type.
\begin{defn}\label{def:rKS}
A rational Klaus-Shaw (rKS) potential $u_0$ is a rational function of the form \eqref{u0_def_frac} for which there exists $x_\mathrm{c}\in\mathbb{R}$ such that $u_0$ is strictly monotone on the intervals $(-\infty,x_\mathrm{c})$ and $(x_\mathrm{c},+\infty)$ (making $x_\mathrm{c}$ the unique real critical point).
\end{defn}

Rational Klaus-Shaw potentials are of one sign, and hence there are distinct classes of positive and negative rKS potentials.  Colloquially one can describe a rKS potential as having a graph consisting of a single ``lobe'' or ``bump,'' 
a property that was found by Klaus and Shaw \cite{KlausShaw2002} to be useful in confining the spectrum of certain non-self-adjoint operators.  This property also provides an important simplification in various asymptotic calculations involving the WKB approximation for differential equations with Klaus-Shaw coefficients, for which there are at most two real turning points.  The utility of the monotonicity condition is similar in the context of the BO scattering problem in the small-dispersion limit.  Indeed, for a rKS potential $u_0$ there are at most two real roots $\lambda\in\mathbb{R}$ of the equation $u_0(x) = -\lambda$, and hence the corresponding inverse function $x(\lambda)$ has two real branches. 
This implies some useful properties of the exponent function $h$ defined in \eqref{def_h}, because the critical points of $h$ satisfy
\begin{equation}\label{crit_point_eqn}
h'(x;\lambda) = \lambda + u_0(x) =0 \quad \text{or} \quad u_0(x) = -\lambda;
\end{equation}
here the prime denotes a derivative in $x$.

\begin{defn}\label{defn:bulk}
The bulk $\mathcal{B}\subset\mathbb{R}$ is the set $\mathcal{B} = \{\lambda: -\sup \{ u_0\} < \lambda < - \inf\{u_0\}\}$.  
We say that $\lambda\in\mathcal{B}$ lies in the bulk, while $\lambda\in\mathbb{R}\setminus\overline{\mathcal{B}}$ lies outside the bulk, where $\overline{\mathcal{B}}$ denotes the closure of $\mathcal{B}$.  
\end{defn}

\begin{rmk}\label{rmk:pos_neg_KlausShaw}
For a strictly positive (resp., negative) rKS potential $u_0$, $\mathcal{B} = (-\sup\{u_0\},0)$ (resp., $\mathcal{B}=(0,-\inf\{u_0\})$) so the bulk is a negative (resp., positive) interval abutting the origin.  We refer to the origin as the ``hard edge'' of the bulk and to the nonzero endpoint of $\mathcal{B}$ as the ``soft edge'' of the bulk.
\end{rmk}

\begin{defn}[Critical points for $\lambda$ in the bulk]\label{crit_point_Bulk}
Let $u_0$ be a rKS potential with corresponding bulk $\mathcal{B}$.  The functions $x_{\pm}: \mathcal{B} \longrightarrow \mathbb{R}$ represent the two real branches of the inverse of $-u_0$ with $x_{-}(\lambda)< x_{+}(\lambda)$ for $\lambda \in \mathcal{B}$ and are the only real critical points of the function $h$; see Figure~\ref{fig:u0_branches}. The remaining $2P-2$ critical points of $h$ form complex-conjugate pairs for $\lambda\in\mathcal{B}$ and will be denoted by $x_p,x_p^*$ with $\I\{x_p\}>0$ for $p = 1, \ldots, P-1$ (not necessarily distinct for all $\lambda\in\mathcal{B}$).
\end{defn}

\begin{figure}[h!]
\begin{center}
\includegraphics[scale=1]{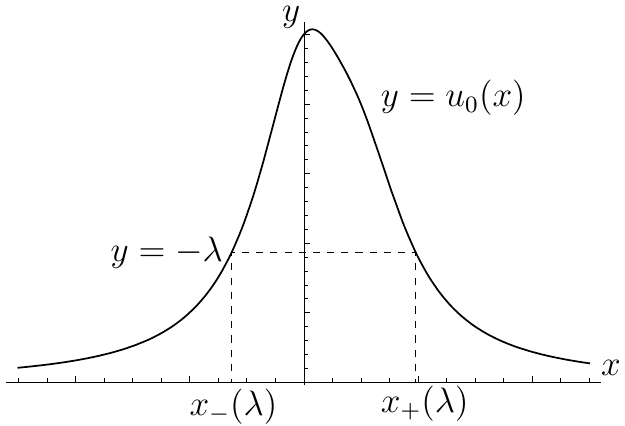}
\caption{A strictly positive rKS potential $u_0$ illustrating the two real branches $x_{\pm}(\lambda)$ of the inverse function of $-u_0$ defined for suitable $\lambda<0$.  For a strictly negative rKS potential the picture is similar, but $\lambda>0$ is necessary for the functions $x_\pm(\lambda)$ to be defined.
 } \label{fig:u0_branches}
\end{center}
\end{figure}

\begin{defn}[Critical points for $\lambda$ outside the bulk]\label{crit_point_notBulk}
Let $u_0$ be a rKS potential with corresponding bulk $\mathcal{B}$.  If $\lambda\in\mathbb{R}\setminus\overline{\mathcal{B}}$, then $h$ has $2P$ critical points in complex-conjugate pairs denoted by $x_p,x_p^*$ with $\I\{x_p\}>0$ for $p = 0, \ldots, P-1$ (not necessarily distinct for all $\lambda\in\mathbb{R}\setminus\overline{\mathcal{B}}$).
\end{defn}

In a neighborhood of $\lambda\in\mathcal{B}$ or $\lambda\in\mathbb{R}\setminus{\mathcal{B}}$ for which all $2P$ critical points of $h$ are simple, each critical point is a locally analytic function $x=x(\lambda)$ of $\lambda$ that satisfies \eqref{crit_point_eqn}.  As $\lambda$ exits the bulk through the soft edge, the two real critical points $x_\pm(\lambda)$ coalesce and bifurcate into the complex plane, where they become renamed as the (nearly real) conjugate pair $x_0(\lambda)$ and $x_0(\lambda)^*$.  As $\lambda$ exits the bulk through the hard edge, the two real critical points $x_\pm(\lambda)$ tend to $\pm\infty$ and then become finite again as a (large) conjugate pair $x_0(\lambda)$ and $x_0(\lambda)^*$.

Since according to \eqref{u0_def_frac} and \eqref{crit_point_eqn} $h'(x;\lambda)$ is a rational function of $x$ with simple poles at $\{z_p,z_p^*\}_{p=1}^P$ such that $h'(x;\lambda) \to \lambda$ as $x \to \infty$, using Definition~\ref{crit_point_Bulk} we may explicitly write
\begin{equation}\label{hprime_lambda}
h'(x;\lambda) = \lambda \Psi^{-}(x;\lambda) \Psi^{+}(x;\lambda) \quad \text{for} \quad \lambda \in \mathcal{B},
\end{equation}
where, assuming the complex critical points $x_p=x_p(\lambda)$ are distinct for $\lambda\in\mathcal{B}$,
\begin{equation}\label{Psi_defn}
\begin{split}
\Psi^{-}(x;\lambda) &:=  \left(x-x_{-}(\lambda) \right) \frac{ \prod_{p=1}^{P-1} \left( x-x_p(\lambda) \right) }{ \prod_{p=1}^{P} \left( x-z_p \right)}\\
\Psi^{+}(x;\lambda) &:=  \left(x-x_{+}(\lambda) \right) \frac{ \prod_{p=1}^{P-1} \left( x-x_p(\lambda)^* \right) }{ \prod_{p=1}^{P} \left( x-z_p^* \right)}.
\end{split}
\end{equation}

\begin{rmk}\label{rmk:analyticity_Psi}
Note that $\Psi^{+}$ ($\Psi^-$) is analytic in the upper-half (lower-half) $x$-plane.  For $\lambda$ outside the bulk, we may also write \eqref{hprime_lambda} provided we extend the definitions of $\Psi^{\pm}$ by replacing $x_-$ with  $x_0$ and $x_{+}$ with $x_0^{*}$ in \eqref{Psi_defn}. These extended definitions satisfy $\Psi^{-}(x;\lambda)^* = \Psi^{+}(x^*;\lambda)$ for all $\lambda\in\mathbb{R}\setminus\overline{\mathcal{B}}$.
\end{rmk}
Differentiating \eqref{hprime_lambda} and using the definitions \eqref{Psi_defn}, we may express $h''(x_\pm(\lambda);\lambda)$ for $\lambda\in\mathcal{B}$ as
\begin{equation}\label{u_prime_poly}
h''(x_{\pm}(\lambda);\lambda) = u_0'(x_\pm(\lambda))=\pm\lambda \frac{\left| \Psi^{\mp}\left( x_{\pm}(\lambda);\lambda \right) \right|^2}{x_{+}(\lambda) - x_{-}(\lambda)} \quad \text{for} \quad \lambda \in \mathcal{B}.
\end{equation}
From equation \eqref{u_prime_poly} we may immediately observe that 
\begin{equation}
\sign \{h''(x_{\pm}(\lambda);\lambda) \}= \pm \sign \{\lambda\} \quad \text{for} \quad \lambda \in \mathcal{B}.
\end{equation}
 This fact can also be deduced from equation~\eqref{crit_point_eqn} and Figure~\ref{fig:u0_branches}. Namely, for a positive rKS potential $u_0$, we have $u_0'\left(x_{-}(\lambda) \right) >0$ and $u_0'\left(x_{+}(\lambda) \right)<0$ for a given $\lambda \in \mathcal{B}$, so $h''(x_{\pm}(\lambda);\lambda) = \mp |u_0'(x_{\pm}(\lambda))|$. 
Similarly, for strictly negative $u_0$, we have that $h''\left(x_{\pm}(\lambda);\lambda\right) = \pm |u_0'\left( x_{\pm}(\lambda) \right)|$  for $\lambda \in \mathcal{B}$.

We close this section by showing how to evaluate the function $h(x;\lambda)$ at its real critical points for $\lambda\in\mathcal{B}$.
\begin{lem}\label{fp_fm}
Let $\lambda \in \mathcal{B}$, and set $h_\pm(\lambda):=h(x_\pm(\lambda);\lambda)$.  Then
\begin{equation}
h_{+}(\lambda) =  \theta_{+}(\lambda) + M
\quad \text{and} \quad 
h_{-}(\lambda) =  \theta_{-}(\lambda),
\end{equation}
where
\begin{equation}
M:= \int_{-\infty}^{\infty} u_0 (y )\, dy\quad\text{and}\quad \theta_\pm(\lambda):=\int_0^\lambda x_\pm(\eta)\,d\eta
\label{h_theta_def}
\end{equation}
are well-defined because $u_0\in L^1(\mathbb{R})$.
\end{lem}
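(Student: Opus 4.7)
The plan is to exploit the envelope identity: since $x_\pm(\lambda)$ is, by definition, a critical point of $h(\cdot;\lambda)$, the chain rule gives
\[
\frac{d h_\pm}{d\lambda}(\lambda) = h'(x_\pm(\lambda);\lambda)\, x_\pm'(\lambda) + \frac{\partial h}{\partial \lambda}(x_\pm(\lambda);\lambda) = 0 + x_\pm(\lambda),\qquad \lambda\in\mathcal{B}.
\]
Thus $h_\pm$ is an antiderivative of $x_\pm$ on $\mathcal{B}$, and the lemma reduces to identifying the constants of integration as one-sided boundary limits at the hard edge of the bulk. I treat the positive rKS case ($\mathcal{B}=(-\sup u_0,0)$); the negative case is analogous with signs reversed.

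Next I identify $f$ as the particular antiderivative of $u_0$ satisfying $f(-\infty)=0$ and $f(+\infty)=M$. From \eqref{f_eqn}, each term $\mathrm{Log}(i(x-z_p))+i\pi/2$ contributes $\mathrm{Log}(x)+i\pi+O(1/x)$ at $x\to+\infty$ and $\mathrm{Log}|x|+O(1/x)$ at $x\to-\infty$; the logarithmic divergences cancel after summation because $\R\{c_1+\cdots+c_P\}=0$, and the surviving constants reproduce $M=2\pi i(c_1+\cdots+c_P)$ at $+\infty$ and $0$ at $-\infty$.

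The third step is the behavior at the hard edge. The partial-fractions decomposition \eqref{u0_def_frac} together with $\R\{c_1+\cdots+c_P\}=0$ forces cancellation of the leading $1/x$ term, so $u_0(x)=O(x^{-2})$ as $|x|\to\infty$. The relation $u_0(x_\pm(\lambda))=-\lambda$ then yields $x_\pm(\lambda)^2=O(|\lambda|^{-1})$, hence $\lambda x_\pm(\lambda)=O(|\lambda|^{1/2})\to 0$ as $\lambda\to 0^-$. Combined with $f(x_\pm(\lambda))\to f(\pm\infty)$, this gives
\[
h_-(\lambda)=\lambda x_-(\lambda)+f(x_-(\lambda))\to 0,\qquad h_+(\lambda)=\lambda x_+(\lambda)+f(x_+(\lambda))\to M.
\]
The same $O(|\eta|^{-1/2})$ bound shows $\theta_\pm(\lambda)=\int_0^\lambda x_\pm(\eta)\,d\eta$ is an absolutely convergent improper integral, so applying the fundamental theorem of calculus to the envelope identity between $\lambda$ and $\lambda_0\in\mathcal{B}$ and then letting $\lambda_0\to 0^-$ yields $h_-(\lambda)=\theta_-(\lambda)$ and $h_+(\lambda)=\theta_+(\lambda)+M$.

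The main obstacle I anticipate is the careful boundary analysis at the hard edge: extracting the $O(x^{-2})$ decay of $u_0$ from the $L^1$ hypothesis (it is really equivalent to it for rational $u_0$), and using this decay simultaneously to guarantee both the integrability of $x_\pm(\eta)$ near $\eta=0$ and the vanishing of $\lambda x_\pm(\lambda)$. Pinning down the correct additive constants $0$ and $M$ from the logarithmic representation \eqref{f_eqn} of $f$ is similarly bookkeeping-heavy but poses no conceptual difficulty; once those identifications are in hand, the lemma is an immediate consequence of the envelope formula.
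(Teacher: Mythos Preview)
Your proof is correct but takes a genuinely different route from the paper's. The paper proceeds directly: starting from $f(x_-(\lambda))=\int_{-\infty}^{x_-(\lambda)} u_0(y)\,dy$, it integrates by parts to obtain $-\lambda x_-(\lambda)-\int_{-\infty}^{x_-(\lambda)} y u_0'(y)\,dy$ and then makes the change of variables $u_0(y)=-\eta$, $y=x_-(\eta)$ in the remaining integral, which converts it to $\int_0^\lambda x_-(\eta)\,d\eta$ with no limit arguments needed. The $h_+$ formula follows the same way after writing $f(x_+(\lambda))=M-\int_{x_+(\lambda)}^\infty u_0(y)\,dy$.

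Your approach instead differentiates (the envelope identity $dh_\pm/d\lambda=x_\pm$) and then recovers the integration constant by a boundary analysis at the hard edge $\lambda\to 0$. This is conceptually clean and makes the ``well-defined because $u_0\in L^1$'' clause visible: you explicitly derive $u_0(x)=O(x^{-2})$ from the $L^1$ condition on the rational $u_0$, and this simultaneously forces $\lambda x_\pm(\lambda)\to 0$ and gives integrability of $x_\pm(\eta)$ near $\eta=0$. The paper's argument avoids the hard-edge limit entirely by working with the integral representation of $f$ and a substitution, which is shorter but hides why $u_0\in L^1$ matters. Both arguments rest on the same identification $f(-\infty)=0$, $f(+\infty)=M$.
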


\begin{proof}
Because $f(x)$ is an antiderivative of $u_0(x)$ that vanishes as $x\to -\infty$,
\begin{equation}\label{fm_eqn}
f \left( x_{-}(\lambda) \right)  = \int_{-\infty}^{x_{-}(\lambda)} u_0 (y )\, dy 
= -\lambda x_{-}(\lambda) - \int_{-\infty}^{x_{-}(\lambda)} y u_0' (y ) \, dy,
\end{equation}
by integration by parts.
We rewrite the final integral on the right-hand side using the substitution $u_0(y) = -\eta$, where $y = x_{-}(\eta)$ for $y \in (-\infty,x_{-}(\lambda)]$. Hence equation~\eqref{fm_eqn} becomes
\begin{equation}
f \left( x_{-}(\lambda) \right) 
= -\lambda x_{-}(\lambda) + \int_{0}^{\lambda} x_{-}(\eta) \, d\eta,
\end{equation}
i.e., $h_-(\lambda)=\theta_-(\lambda)$.
Starting instead from the formula
\begin{equation}
f \left( x_{+}(\lambda) \right) 
=  \int_{-\infty}^{\infty} u_0 (y )\, dy  - \int_{x_{+}(\lambda)}^{\infty} u_0 (y )\, dy
\end{equation}
but making the substitution $y=x_+(\eta)$ after integrating by parts in the second term then gives the desired formula for $h_+(\lambda)$.
\end{proof}

\section{Continuous Spectrum $\lambda>0$}\label{section-reflection-asymptotics}

An asymptotic expansion for the reflection coefficient $\beta(\lambda)$, $\lambda > 0$, can be directly obtained from formula \eqref{base_reflect_coeff} and the linear system \eqref{lin_sys_Jost}.
We first prove that the solution $\mathbf{v}(\lambda)$ of the linear system \eqref{lin_sys_Jost} has a limit $\mathbf{v}^0(\lambda)$ as $\epsilon\to 0$, at which point it only remains to apply the method of stationary phase to the integral in \eqref{base_reflect_coeff} to determine the leading-order behavior  of $\beta(\lambda)$.  

Every term in the $m^\text{th}$ equation of the linear system \eqref{lin_sys_Jost} is an integral over a common contour $\tilde{C}_m$ that is obtained from $C_m$ (appearing in the definition of the system
\eqref{lin_sys_Abar}) by rotation of the vertical tails through $\pi$ radians.  The multi-valued factor $e^{-ih(z;\lambda)/\epsilon}$ appearing in the integrand is analytically continued through its vertical branch cuts as the contours rotate so that when $\lambda>0$, $\mathbf{A}^>(\lambda)$ and $\mathbf{b}^>(\lambda)$ are the boundary values taken on $\mathbb{R}^+$ from the upper half-plane of the corresponding quantities $\mathbf{A}(\lambda)$ and $\mathbf{b}(\lambda)$ analytic for $\lambda\in\mathbb{C}\setminus\mathbb{R}^+$.  We may then define a related linear system in which each row is first taken as a linear combination over $\mathbb{Z}$ of the rows of \eqref{lin_sys_Jost} and then
simplifications accounting for factors arising from continuation of $e^{-ih(z;\lambda)/\epsilon}$ about its singularities $\{z_p\}_{p=1}^P$ are made.  This new system has the same form as \eqref{lin_sys_Jost} but each integral over $\tilde{C}_m$ is replaced by an integral of some analytic branch of the same integrand over a new contour which we denote by $W_m$, and it can be written in the form $\mathbf{N}\mathbf{A}^>(\lambda)\mathbf{v}(\lambda)=\mathbf{N}\mathbf{b}^>(\lambda)$ for some square matrix $\mathbf{N}$.  Our asymptotic analysis of the solution $\mathbf{v}(\lambda)$ of the system \eqref{lin_sys_Jost} will be effective provided the new system is \emph{suitable} in the sense that
(i) each contour $W_m$ passes through exactly one critical point of $h(z;\lambda)$ such that $\R\{-ih(z;\lambda)\}$ is maximized along $W_m$ at that point, (ii) the association of critical points to contours $W_m$ is one-to-one, and (iii) the matrix $\mathbf{N}$ is invertible (this property guarantees the equivalence of the old and new systems).

We believe that in general a suitable linear system can be found.  We present the following proposition as proof of principle in a general case.
\begin{prop}
\label{prop-Reflection-contours}
Let $u_0$ be a negative rKS potential for which $c_p$ is a positive imaginary number for all $p=1,\dots,P$, and let $\lambda>0$ be such that the complex critical points of $h(\cdot;\lambda)$ are all simple and correspond to distinct nonzero values of $\R\{-ih(z;\lambda)\}$.  Then there exists a family of contours $\{W_m\}_{m=1}^P$ such that the modified system
$\mathbf{N}\mathbf{A}^>(\lambda)\mathbf{v}(\lambda)=\mathbf{N}\mathbf{b}^>(\lambda)$ having
integrals over contours $\{W_m\}_{m=1}^P$ in place of $\{\tilde{C}_m\}_{m=1}^P$ is suitable.
\end{prop}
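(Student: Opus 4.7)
The plan is to construct the contours $\{W_m\}_{m=1}^P$ explicitly via steepest-descent deformation of the rotated contours $\{\tilde{C}_m\}_{m=1}^P$, while simultaneously tracking the row operations and residue corrections that together assemble the matrix $\mathbf{N}$. I would proceed in four steps.

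\textbf{Step 1 (Local structure at the critical points).} Since by hypothesis all $2P$ complex critical points of $h(\cdot;\lambda)$ are simple and correspond to distinct nonzero values of $\R\{-ih(z;\lambda)\}$, through each critical point $\zeta$ there are exactly two smooth steepest-descent arcs along which $\R\{-ih\}$ is strictly decreasing away from $\zeta$. The hypothesis $c_p=i\alpha_p$ with $\alpha_p>0$ forces $e^{-if(z)/\epsilon}$ to vanish like $(z-z_p)^{\alpha_p/\epsilon}$ at each pole $z_p$, so letting a contour terminate or be absorbed at such a pole is harmless; it also furnishes a nontrivial multiplicative monodromy $e^{-2\pi\alpha_p/\epsilon}\neq 1$ across each vertical branch cut.

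\textbf{Step 2 (Global steepest-descent trajectories).} Next I would extend each local arc into a maximal steepest-descent trajectory in the cut plane, organized as horizontal trajectories of the quadratic differential $h'(z;\lambda)^2\,dz^2$. Distinctness of the critical values of $-ih$ precludes saddle connections, so each trajectory terminates either at a pole $z_p$ (or its conjugate) or at infinity in a sector where $\R\{-ih\}\to -\infty$. Because $u_0$ is negative rKS and the tails of $\tilde{C}_m$ are the $\pi$-rotations of those of $U_m^<$, the admissible sectors at infinity are precisely those in which the tails of the $\tilde{C}_m$ can be placed after continuation. This produces $P$ candidate steepest-descent contours $\mathcal{S}_1,\dots,\mathcal{S}_P$, one per critical point accessible to the deformation.

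\textbf{Step 3 (Deformation and the matrix $\mathbf{N}$).} Order the accessible critical points $\zeta_1,\dots,\zeta_P$ so that $\R\{-ih(\zeta_m;\lambda)\}$ is strictly decreasing in $m$, and set $W_m:=\mathcal{S}_m$. I would then deform each $\tilde{C}_m$ into an integer-coefficient linear combination of $\{W_k\}_{k=1}^P$, picking up residues at each branch-point singularity $z_p$ swept out by the deformation and using the monodromy factors $e^{-2\pi\alpha_p/\epsilon}$ from Step~1 to re-express those residues as contour integrals along the $W_k$. By the choice of ordering, only saddles $\zeta_k$ with $k<m$ can contribute nontrivially to the $m$-th row, so the resulting transformation matrix $\mathbf{N}$ is lower triangular with nonzero diagonal entries, hence invertible. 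Property (iii) of suitability is then immediate, and properties (i) and (ii) are built into the construction: each $W_m$ passes through exactly one critical point $\zeta_m$ at which $\R\{-ih\}$ is maximized along $W_m$, and $m\mapsto\zeta_m$ is bijective.

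The main obstacle is Step~2 together with the bookkeeping of Step~3: one must rigorously establish the global topology of the Stokes graph of $h'(z;\lambda)^2\,dz^2$ in the cut plane and verify that the ordering by critical values indeed induces the asserted triangular structure of $\mathbf{N}$ once residues at the poles $z_p$ are correctly accounted for. The sign hypotheses on $c_p$ and the negativity of $u_0$ are essential here because they fix the orientations of the branch cuts, the admissible sectors of decay at infinity, and the signs of the residue contributions, preventing the degenerate trajectory configurations that would otherwise obstruct the one-to-one association of critical points to contours.
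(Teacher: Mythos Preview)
Your approach differs substantially from the paper's, and Step~3 contains a genuine gap.

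The paper (Appendix~A) does not use steepest-descent trajectories of a quadratic differential. Instead it builds the $W_m$ by a level-set (``rising tide'') argument on $R(z)=\R\{-ih(z;\lambda)\}$: as the level $L$ increases, the ``lakes'' around the pits $z_1,\dots,z_P$ in $\mathbb{C}^+$ fuse pairwise (or with the ocean at $-i\infty$) at the critical points, and at each fusion one defines a contour $W$ connecting the two nodes involved, crossing the saddle locally in the steepest-descent direction but otherwise merely staying below the critical level. The resulting $W_1,\dots,W_P$ form a \emph{chain} on the node set $\{-i\infty,z_1,\dots,z_P\}$. Invertibility of $\mathbf{N}$ is then obtained in two stages: first a lower-triangular matrix $\mathbf{M}$ relates $\{\tilde{C}_m\}$ to an intermediate chain $\{Y_m\}$ (with $Y_m$ connecting $z_{m-1}$ to $z_m$), and then an invertible matrix $\mathbf{K}$ relates $\{Y_m\}$ to $\{W_m\}$ because both are spanning trees on the same node set. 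The composite $\mathbf{N}=\mathbf{K}\mathbf{M}^{-1}$ is \emph{not} claimed to be triangular.

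Your triangularity claim in Step~3 is the gap. You assert that ordering the saddles by decreasing $\R\{-ih(\zeta_m;\lambda)\}$ forces ``only saddles $\zeta_k$ with $k<m$ [to] contribute nontrivially to the $m$-th row.'' But the row index $m$ on $\tilde{C}_m$ records which poles $z_1,\dots,z_m$ the contour encloses; it has no a~priori relation to the ordering of the critical values of $-ih$. Deforming $\tilde{C}_m$ may well require saddles whose critical values are lower than those already used for $\tilde{C}_{m-1}$, so the transformation need not be triangular. (Indeed, in the companion construction of Appendix~B the paper exhibits an explicit example in which $\mathbf{N}$ is a nontrivial permutation times a non-triangular factor.) Your Step~2 is also underspecified: a global steepest-descent trajectory from a saddle in $\mathbb{C}^+$ need not terminate at a pole $z_p$, and you have not explained why exactly $P$ such trajectories exist and are homologically independent. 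The paper's chain-of-links construction avoids both issues simultaneously, because the links connect nodes rather than follow trajectories, and invertibility comes from the spanning-tree property rather than from any triangular structure.
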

The condition on $\{c_p\}_{p=1}^P$ in Proposition~\ref{prop-Reflection-contours} implies that $u_0$ is a negative linear combination of Lorentzian profiles $2v_p/((x-u_p)^2+v_p^2)$, $z_p=u_p+iv_p$.  The proof of Proposition~\ref{prop-Reflection-contours} can be found in Appendix~\ref{Appendix-Reflection}.  We now show that for suitable systems, the solution $\mathbf{v}(\lambda)$ converges to a well-defined limit as $\epsilon\to 0$.
\begin{prop}
Let $\lambda>0$ be given.  Suppose that the system \eqref{lin_sys_Jost} is equivalent via a matrix $\mathbf{N}$ to a suitable system $\mathbf{N}\mathbf{A}^>(\lambda)\mathbf{v}(\lambda)=\mathbf{N}\mathbf{b}^>(\lambda)$ having integration contours $\{W_m\}_{m=1}^P$ in its $P$ rows.  Then
\begin{equation}
\lim_{\epsilon\to 0}\mathbf{v}(\lambda)=\mathbf{v}^0(\lambda),\quad where
\end{equation}
\begin{equation}\label{W0_sol}
v_p^0(\lambda) = \lambda \lim_{z \to z_p} (z - z_p) \Psi^{-}(z;\lambda) = \lambda \underset{z=z_p}{\Res} \Psi^{-}(z;\lambda),\quad p=1,\dots,P.
\end{equation}
\label{prop-limiting-coefficients-Jost}
\end{prop}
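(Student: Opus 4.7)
The plan is to apply the method of steepest descent to each of the $P$ rows of the suitable system $\mathbf{N}\mathbf{A}^>(\lambda)\mathbf{v}(\lambda)=\mathbf{N}\mathbf{b}^>(\lambda)$, reduce the problem to a Cauchy-matrix linear system for the limit $\mathbf{v}^0(\lambda)$, and then identify its solution with the claimed residue formula via a partial-fraction identity for $\Psi^-(z;\lambda)$.

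First, for each $m=1,\dots,P$ the suitability assumption ensures that the contour $W_m$ passes through a single simple critical point $z_m^\ast$ of $h(z;\lambda)$ at which $\R\{-ih(z;\lambda)\}$ attains its unique maximum along $W_m$.  Classical steepest descent (deforming $W_m$ locally onto the steepest-descent direction at $z_m^\ast$ and using integration by parts to control the tails) applied to the integrals
\begin{equation*}
\int_{W_m}\frac{e^{-ih(z;\lambda)/\epsilon}}{z-z_p}\,dz \qquad\text{and}\qquad \int_{W_m}e^{-ih(z;\lambda)/\epsilon}\,dz
\end{equation*}
yields the same prefactor $K_m(\epsilon):=\sqrt{2\pi\epsilon/(-ih''(z_m^\ast;\lambda))}\,e^{-ih(z_m^\ast;\lambda)/\epsilon}$, multiplied respectively by $(z_m^\ast-z_p)^{-1}$ and by $1$, with relative error $O(\epsilon)$.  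Dividing the $m$th equation of the suitable system by $K_m(\epsilon)$ produces
\begin{equation*}
\sum_{p=1}^P\frac{v_p(\lambda)}{z_m^\ast-z_p} \;=\; -\lambda \;+\; O(\epsilon), \qquad m=1,\dots,P.
\end{equation*}
The leading-order coefficient matrix $[1/(z_m^\ast-z_p)]$ is a Cauchy matrix, hence nonsingular (the $z_m^\ast$ are pairwise distinct by the one-to-one suitability condition, and each critical point is disjoint from $\{z_p\}_{p=1}^P$ because $h'(z;\lambda)=\lambda+u_0(z)$ is singular at every $z_p$).  Invertibility persists under the $O(\epsilon)$ perturbation, so $\mathbf{v}(\lambda)$ converges as $\epsilon\to 0$ to the unique solution $\mathbf{v}^0(\lambda)$ of the limiting Cauchy-matrix system.

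Next, I identify $\mathbf{v}^0(\lambda)$ with the claimed formula.  From \eqref{hprime_lambda}--\eqref{Psi_defn} (extended outside the bulk via Remark~\ref{rmk:analyticity_Psi}), $\Psi^-(z;\lambda)$ is a rational function of $z$ with simple poles at $\{z_p\}_{p=1}^P$ and with numerator and denominator of matching degree $P$, so $\Psi^-(z;\lambda)\to 1$ as $z\to\infty$ and the partial-fraction decomposition reads
\begin{equation*}
\Psi^-(z;\lambda) \;=\; 1 + \sum_{p=1}^P\frac{\underset{z=z_p}{\Res}\,\Psi^-(z;\lambda)}{z-z_p}.
\end{equation*}
Since the $P$ zeros of $\Psi^-(\cdot;\lambda)$ are precisely the critical points of $h(\cdot;\lambda)$ lying in the closed upper half-plane, and these are the $z_m^\ast$ selected by the construction of Proposition~\ref{prop-Reflection-contours} (the contours $W_m$ being designed in the domain where $\Psi^-$ is analytic), evaluating the identity at $z=z_m^\ast$ gives
\begin{equation*}
\sum_{p=1}^P\frac{\underset{z=z_p}{\Res}\,\Psi^-(z;\lambda)}{z_m^\ast-z_p} \;=\; -1, \qquad m=1,\dots,P.
\end{equation*}
Multiplying by $\lambda$ shows that the candidate $v_p^0(\lambda)=\lambda\,\Res_{z=z_p}\Psi^-(z;\lambda)$ satisfies the limiting linear system, and uniqueness (via invertibility of the Cauchy matrix) forces the identification \eqref{W0_sol}.

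The main obstacle is the careful bookkeeping of the steepest-descent step: each $W_m$ carries its own analytic branch of $e^{-ih/\epsilon}$ inherited from continuation through the branch cuts of $f$, and one must confirm that this branch is the same one used to define $K_m(\epsilon)$ so that the common prefactor truly cancels in the $m$th equation.  One must also verify that the $\epsilon$-independent matrix $\mathbf{N}$ does not spoil the leading-order cancellation, i.e., that after left-multiplication by $\mathbf{N}$ each row is still dominated by a single critical-point contribution (which is precisely the content of suitability).  Once these routine but essential consistency checks are in place, the Cauchy-matrix and partial-fraction arguments above deliver the conclusion cleanly.
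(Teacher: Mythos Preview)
Your proposal is correct and follows essentially the same route as the paper: steepest descent reduces each row to a Cauchy-matrix equation with right-hand side $-\lambda$, and the limiting solution is identified via a residue/partial-fraction identity for $\Psi^-$.  The only cosmetic difference is that the paper packages the row normalization as left-multiplication by an explicit diagonal matrix and then verifies \eqref{W0_sol} by computing $\oint_C \Psi^-(z;\lambda)/(z-x_m)\,dz$ with the Residue Theorem, whereas you divide each row by its steepest-descent prefactor and invoke the partial-fraction expansion of $\Psi^-$ evaluated at its zeros $z_m^\ast$; these are equivalent formulations of the same computation, and both rely (the paper implicitly, you explicitly) on the fact that the critical points selected by the suitable contours are precisely the zeros of $\Psi^-$.
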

\begin{proof}
We give the proof in the case that all critical points associated with contours $\{W_m\}_{m=1}^P$ are simple, but the same result is true more generally.
Let $x_m$ denote the unique critical point of $-ih(z;\lambda)$ on $W_m$, which is traversed at the local steepest descent angle $\theta_m$, and define a diagonal matrix $\mathbf{D}$ whose elements are $D_{mm}:=|h''(x_m;\lambda)|^{1/2}e^{ih(x_m;\lambda)/\epsilon}e^{-i\theta_m}/\sqrt{2\pi\epsilon}$.  Set $\hat{\mathbf{A}}^>(\lambda):=\mathbf{D}\mathbf{N}\mathbf{A}^>(\lambda)$ and $\hat{\mathbf{b}}^>:=\mathbf{D}\mathbf{N}\mathbf{b}^>(\lambda)$.  Therefore
\begin{equation}
\begin{split}
\hat{A}^>_{mp}&=\sqrt{\frac{|h''(x_m;\lambda)|}{2\pi\epsilon}}e^{ih(x_m;\lambda)/\epsilon}e^{-i\theta_m}\int_{W_m}\frac{e^{-ih(z;\lambda)/\epsilon}}{z-z_p}\,dz\\
\hat{b}^>_{m}&=-\lambda\sqrt{\frac{|h''(x_m;\lambda)|}{2\pi\epsilon}}e^{ih(x_m;\lambda)/\epsilon}e^{-i\theta_m}\int_{W_m}e^{-ih(z;\lambda)/\epsilon}\,dz,
\end{split}
\end{equation}
so applying the method of steepest descent shows that $\hat{\mathbf{A}}^>(\lambda)\to\hat{\mathbf{A}}^{>0}(\lambda)$ and $\hat{\mathbf{b}}^>(\lambda)\to\hat{\mathbf{b}}^{>0}(\lambda)$ as $\epsilon\to 0$, where
\begin{equation}
\hat{A}^{>0}_{mp}(\lambda):=\frac{1}{x_m-z_p}\quad\text{and}\quad
\hat{b}^{>0}_m(\lambda):=-\lambda.
\end{equation}
The limiting matrix $\mathbf{A}^{>0}(\lambda)$ is of Cauchy type, and it is invertible precisely because the critical points $\{x_m\}_{m=1}^P$ associated with the contours $\{W_m\}_{m=1}^P$ are distinct.  
To show that $\mathbf{v}^0(\lambda)$ given by \eqref{W0_sol} is the (unique) solution of $\hat{\mathbf{A}}^{>0}(\lambda)\mathbf{v}^0(\lambda)=\hat{\mathbf{b}}^{>0}(\lambda)$, we may use the Residue Theorem; multiplying \eqref{W0_sol} on the left by $\hat{\mathbf{A}}^{>0}(\lambda)$ and
dividing by $-1/\lambda$ gives
\begin{equation}
- \frac{1}{\lambda} \sum_{p=1}^{P} \frac{v_p^0(\lambda)}{x_m - z_p} = 
\sum_{p=1}^{P} \frac{\underset{z=z_p}{\Res} \Psi^{-}(z;\lambda)}{z_p - x_m}
= \frac{1}{2 \pi i} \oint_C  \frac{\Psi^{-}(z;\lambda)}{z - x_m} \, dz = 1,
\end{equation}
where $C$ is a counter-clockwise contour encircling the points $\{z_p\}_{p=1}^{P}$ (the poles of the integrand) and the last equality follows by taking a residue at $z=\infty$ where the integrand behaves like $1/z$.  See \cite{Schechter59} for an alternative approach.
\end{proof}

\begin{rmk}
When the critical points are simple, $\hat{\mathbf{A}}^>(\lambda)-\hat{\mathbf{A}}^{>0}(\lambda)$ and
$\hat{\mathbf{b}}^>(\lambda)-\hat{\mathbf{b}}^{>0}(\lambda)$ both have complete asymptotic expansions in ascending integer powers of $\epsilon$.  It follows that the same is true of $\mathbf{v}(\lambda)-\mathbf{v}^0(\lambda)$.    If one or more of the critical points is not simple, fractional powers may appear in the expansions.
\label{rmk:power-series}
\end{rmk}

\begin{thm}\label{Reflect_asymptot}
Let $u_0$ be a rKS potential, and suppose that $\lambda>0$ is such that there exists a suitable modification of the linear system \eqref{lin_sys_Jost}.  If $\lambda$ lies outside the bulk (true for all $\lambda>0$ if $u_0$ is a positive rKS potential) then $\beta(\lambda)$ is exponentially small as $\epsilon\to 0$.  If $\lambda$ lies in the bulk, then
\begin{equation}\label{beta_asymp_form}
\beta(\lambda) 
= - \sqrt{\frac{2 \pi \lambda \left(x_{+}(\lambda) - x_{-}(\lambda) \right)}{\epsilon }} e^{- \frac{i}{\epsilon} \theta_{+}(\lambda) - i \psi_{+}(\lambda)} + o(\epsilon^{-1/2})
\end{equation}
as $\epsilon \to 0$, where $\theta_{+}$ is defined in \eqref{h_theta_def} and $\psi_{+}$ is defined by
\begin{equation}\label{psi_def_pos}
e^{\mp i \psi_{\pm}(\lambda) - i \pi /4} := \frac{\Psi^{\mp}\left( x_{\pm}(\lambda);\lambda \right)}{\left| \Psi^{\mp}\left( x_{\pm}(\lambda);\lambda \right)\right|}
\end{equation}
with $\Psi^{\pm}$ given in \eqref{Psi_defn}.  The error term in \eqref{beta_asymp_form} can be written as $O(\epsilon^{1/2})$ if the critical points of $h(z;\lambda)$ are all simple.
\end{thm}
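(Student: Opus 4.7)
The plan is to replace $\mathbf{v}(\lambda)$ in the integrand of \eqref{base_reflect_coeff} by its limit $\mathbf{v}^0(\lambda)$ supplied by Proposition \ref{prop-limiting-coefficients-Jost} and then apply classical stationary phase on the real line. First I would rewrite the amplitude in closed form. Since $\Psi^-(z;\lambda)$ is rational with simple poles at $\{z_p\}$ whose residues are $v_p^0(\lambda)/\lambda$ and with $\Psi^-(\infty;\lambda)=1$, partial fractions give $\sum_p v_p^0(\lambda)/(z-z_p)=\lambda(\Psi^-(z;\lambda)-1)$; combining this with the factorization $\lambda+u_0(z)=\lambda\Psi^-(z;\lambda)\Psi^+(z;\lambda)$ from \eqref{hprime_lambda} yields
\[
u_0(z)-\sum_{p=1}^P\frac{v_p^0(\lambda)}{z-z_p}=\lambda\,\Psi^-(z;\lambda)\bigl(\Psi^+(z;\lambda)-1\bigr).
\]
A crucial structural fact is that this amplitude vanishes at $x_-(\lambda)$ (via the factor $z-x_-(\lambda)$ in $\Psi^-$), so the $x_-$ stationary-phase contribution is subleading and the entire leading-order behavior of $\beta(\lambda)$ comes from $x_+(\lambda)$.

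For $\lambda$ in the bulk I would then apply the standard stationary-phase formula at $x_+(\lambda)$, where the amplitude evaluates to $-\lambda\Psi^-(x_+;\lambda)$ (since $\Psi^+(x_+;\lambda)=0$). Using $\sign(h''(x_+;\lambda))=\sign(\lambda)=+1$ in the only relevant case (a negative rKS potential with $\lambda>0$, by Remark \ref{rmk:pos_neg_KlausShaw}) and $|h''(x_+;\lambda)|=\lambda|\Psi^-(x_+;\lambda)|^2/(x_+(\lambda)-x_-(\lambda))$ from \eqref{u_prime_poly}, the stationary-phase asymptotics collapse: the modulus of $\Psi^-(x_+;\lambda)$ cancels between the amplitude and the $|h''|^{-1/2}$ factor; $h(x_+;\lambda)=\theta_+(\lambda)+M$ from Lemma \ref{fp_fm} cancels the $e^{iM/\epsilon}$ prefactor of $\beta$; and the product of $\Psi^-(x_+;\lambda)/|\Psi^-(x_+;\lambda)|$ with the stationary-phase factor $e^{-i\pi/4}$ and the $i$ from the $i/\epsilon$ prefactor collapses to $e^{-i\psi_+(\lambda)}$ via definition \eqref{psi_def_pos}. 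What survives is precisely \eqref{beta_asymp_form}.

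For $\lambda$ outside the bulk, $h(\cdot;\lambda)$ has no real critical points, and I would deform the integration contour slightly off $\mathbb{R}$ into a region where $\R\{-ih(z;\lambda)\}$ is strictly negative uniformly in $z$; the dominant behavior $-i\lambda z$ at infinity makes this possible, and the factors $\Psi^\pm$ introduce no obstruction near the real axis. A standard nonstationary-phase/steepest-descent bound then gives $\beta(\lambda)=O(e^{-c/\epsilon})$ for some $c=c(\lambda)>0$. For positive rKS potentials the bulk $\mathcal{B}$ lies entirely in $\mathbb{R}^-$ (Remark \ref{rmk:pos_neg_KlausShaw}), so every $\lambda>0$ is automatically covered by this case.

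The hardest part will be pinning down the sharp error term. The contribution to $\beta$ coming from $\mathbf{v}(\lambda)-\mathbf{v}^0(\lambda)$ has size $\epsilon^{-1}\|\mathbf{v}(\lambda)-\mathbf{v}^0(\lambda)\|\cdot O(\epsilon^{1/2})$, where the $O(\epsilon^{1/2})$ comes from stationary phase applied to $\int e^{-ih/\epsilon}/(z-z_p)\,dz$. To reach the claimed $O(\epsilon^{1/2})$ error when all critical points of $h(\cdot;\lambda)$ are simple I would invoke Remark \ref{rmk:power-series}, which upgrades the plain convergence of Proposition \ref{prop-limiting-coefficients-Jost} to an asymptotic expansion with $\|\mathbf{v}(\lambda)-\mathbf{v}^0(\lambda)\|=O(\epsilon)$; when a critical point fails to be simple only fractional-power rates are available, yielding the weaker $o(\epsilon^{-1/2})$ statement.
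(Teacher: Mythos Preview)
Your proposal is correct and follows essentially the same route as the paper: invoke Proposition~\ref{prop-limiting-coefficients-Jost} for the limiting coefficients, apply stationary phase on $\mathbb{R}$ with the two real critical points $x_\pm(\lambda)$, observe that the contribution from $x_-(\lambda)$ vanishes, and simplify the surviving $x_+(\lambda)$ contribution using \eqref{u_prime_poly}, Lemma~\ref{fp_fm}, and \eqref{psi_def_pos}; the outside-the-bulk case and the error bookkeeping are handled the same way.

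The one organizational difference is worth recording. The paper applies stationary phase term-by-term with the exact coefficients $v_p(\lambda)$, obtains $G^\pm_\epsilon(\lambda)=-\lambda-\sum_p v_p(\lambda)/(x_\pm(\lambda)-z_p)$, passes to the limit $G_0^\pm$, and then uses a residue calculation to show $G_0^-(\lambda)=0$ and $G_0^+(\lambda)=-\lambda\Psi^-(x_+(\lambda);\lambda)$. You instead substitute $v_p^0$ first and recognize the partial-fraction identity $\sum_p v_p^0/(z-z_p)=\lambda(\Psi^-(z;\lambda)-1)$, which together with \eqref{hprime_lambda} gives the closed form $u_0(z)-\sum_p v_p^0/(z-z_p)=\lambda\Psi^-(z;\lambda)(\Psi^+(z;\lambda)-1)$ for the limiting amplitude. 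This is the same residue computation repackaged, but it has the advantage of making the vanishing at $x_-(\lambda)$ manifest (via the explicit factor $z-x_-(\lambda)$ in $\Psi^-$) rather than emerging from an a posteriori contour integral, and it yields the value $-\lambda\Psi^-(x_+;\lambda)$ at $x_+$ by inspection. Either ordering is fine; yours is arguably a bit more transparent.
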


\begin{proof}
By Proposition~\ref{prop-limiting-coefficients-Jost}, the coefficients $v_p(\lambda)$ appearing in \eqref{base_reflect_coeff} tend to limits $v_p^0(\lambda)$ as $\epsilon\to 0$.  Thus it remains to analyze the absolutely convergent integral over $\mathbb{R}$ of $e^{-ih(z;\lambda)/\epsilon}u_0(z)$ and convergent improper integrals over $\mathbb{R}$ of $e^{-ih(z;\lambda)/\epsilon}(z-z_p)^{-1}$ for $p=1,\dots,P$.

For $\lambda\in\mathbb{R}^+\setminus\overline{\mathcal{B}}$, there are no real critical points of $h(z;\lambda)$, and hence the contour of integration in \eqref{base_reflect_coeff} may be deformed away from the real axis in the direction of decrease of $\R\{-ih(z;\lambda)\}$ until the nearest critical point or singular point is reached, which proves the exponential decay of $\beta(\lambda)$ as $\epsilon\to 0$ (the distance of the nearest critical/singular point to $\mathbb{R}$ determines the exponential rate of decay).

For $\lambda\in\mathcal{B}$ there are exactly two simple stationary phase points at $z=x_\pm(\lambda)$, so by the method of stationary phase, if $g(z)=u_0(z)$ or $g(z)=(z-z_p)^{-1}$,
\begin{equation}
\frac{1}{\sqrt{2\pi\epsilon}}\int_{-\infty}^\infty e^{-ih(z;\lambda)/\epsilon}g(z)\,dz=\frac{e^{i\pi/4}e^{-ih_-(\lambda)/\epsilon}g(x_-(\lambda))}{\sqrt{|h''(x_-(\lambda);\lambda))|}} + 
\frac{e^{-i\pi/4}e^{-ih_+(\lambda)/\epsilon}g(x_+(\lambda))}{\sqrt{|h''(x_+(\lambda);\lambda)|}} + O(\epsilon)
\end{equation}
where $h_\pm(\lambda):=h(x_\pm(\lambda);\lambda)$, and we used the fact that as $u_0$ is necessarily a negative rKS potential (to have positive $\lambda\in\mathcal{B}$) we have $\sign(u_0'(x_\pm(\lambda)))=\pm 1$.
Therefore,
\begin{equation}\label{eq:reflection-coeff-expansion-1}
-ie^{-iM/\epsilon}\sqrt{\frac{\epsilon}{2\pi}}\beta(\lambda)=\frac{e^{i\pi/4}e^{-ih_-(\lambda)/\epsilon}G^-_\epsilon(\lambda)}{\sqrt{|h''(x_-(\lambda);\lambda)|}} +\frac{e^{-i\pi/4}e^{-ih_+(\lambda)/\epsilon}G^+_\epsilon(\lambda)}{\sqrt{|h''(x_+(\lambda);\lambda)|}} + O(\epsilon),
\end{equation}
where, using $u_0(x_\pm(\lambda))=-\lambda$,
\begin{equation}
G^\pm_\epsilon(\lambda):=-\lambda-\sum_{p=1}^P\frac{v_p(\lambda)}{x_\pm(\lambda)-z_p}.
\end{equation}
The $\epsilon$-dependence in $G^\pm_\epsilon(\lambda)$ enters through the coefficients $\{v_p(\lambda)\}_{p=1}^P$; we may replace $G^\pm_\epsilon(\lambda)$ with $G^\pm_0(\lambda)$ (in which $v_p(\lambda)$ is replaced with $v_p^0(\lambda)$ given by \eqref{W0_sol}) in \eqref{eq:reflection-coeff-expansion-1} provided the error term is replaced with $o(1)$ (unless the critical points of $h$ are all simple; see Remark~\ref{rmk:power-series}).  Now using \eqref{W0_sol}, 
we apply the Residue Theorem to obtain
\begin{equation}
G_0^\pm(\lambda)+\lambda=-\lambda\sum_{p=1}^P\underset{z=z_p}{\Res}\frac{\Psi^-(z;\lambda)}{x_\pm(\lambda)-z}=-\frac{\lambda}{2\pi i}\oint_C\frac{\Psi^-(z;\lambda)}{x_\pm(\lambda)-z}\,dz
\end{equation}
where $C$ encircles the poles $z_1,\dots,z_P$ positively but excludes the points $x_\pm(\lambda)$.
Now $\Psi^-(z;\lambda)$ includes a factor of $z-x_-(\lambda)$, so to compute $G_0^-(\lambda)+\lambda$ the singularity of the integrand at $z=x_-(\lambda)$ outside of $C$ is removable, so as the integrand behaves like $-1/z$ as $z\to\infty$ we obtain $G_0^-(\lambda)+\lambda=\lambda$ or simply $G_0^-(\lambda)=0$.  To compute $G_0^+(\lambda)+\lambda$ we have the same behavior of the integrand for large $z$ but now there is a simple pole outside $C$ at $z=x_+(\lambda)$, so taking it into account gives $G_0^+(\lambda)+\lambda=\lambda-\lambda\Psi^-(x_+(\lambda);\lambda)$ or simply 
$G_0^+(\lambda)=-\lambda\Psi^-(x_+(\lambda);\lambda)$.  Therefore, using \eqref{psi_def_pos}, \eqref{eq:reflection-coeff-expansion-1} may be written as
\begin{equation}
\beta(\lambda)=-\sqrt{\frac{2\pi}{\epsilon}}\frac{\lambda |\Psi^-(x_+(\lambda);\lambda)|}{\sqrt{|h''(x_+(\lambda);\lambda)|}}e^{iM/\epsilon}e^{-ih_+(\lambda)/\epsilon}e^{-i\psi_+(\lambda)} + o(\epsilon^{-1/2}).
\end{equation}
The phase is then simplified using Lemma~\ref{fp_fm}, and the amplitude is simplified by \eqref{u_prime_poly}.
\end{proof}

\begin{cor}[Matsuno's modulus formula \cite{Matsuno82}]\label{beta_mag_asymp}
Let $u_0$ be a negative rKS potential.  If $\lambda \in \mathcal{B}$ then
\begin{equation}
\lim_{\epsilon\to 0} \epsilon |\beta(\lambda) |^2
=2 \pi \lambda \left(x_{+}(\lambda) - x_{-}(\lambda) \right).
\end{equation}
\end{cor}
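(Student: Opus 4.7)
The plan is to deduce the corollary directly from the leading-order asymptotic expansion of $\beta(\lambda)$ given in Theorem~\ref{Reflect_asymptot}. Because a negative rKS potential makes $\mathcal{B}\subset\mathbb{R}^+$ (Remark~\ref{rmk:pos_neg_KlausShaw}), for $\lambda\in\mathcal{B}$ the factor $x_+(\lambda)-x_-(\lambda)>0$ and $\lambda>0$, so the amplitude $\sqrt{2\pi\lambda(x_+-x_-)/\epsilon}$ is a positive real number. Hence all the work reduces to verifying that the complex exponentials in \eqref{beta_asymp_form} have modulus one, and then squaring.

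First, I would check that the phase $-\theta_+(\lambda)/\epsilon - \psi_+(\lambda)$ is real. By \eqref{h_theta_def}, $\theta_+(\lambda) = \int_0^\lambda x_+(\eta)\,d\eta$ is an integral of a real-valued function of a real variable over a real interval, so $\theta_+(\lambda)\in\mathbb{R}$. The constant $\psi_+(\lambda)$ is defined by \eqref{psi_def_pos} as the argument (up to the shift $-\pi/4$) of a unit-modulus complex number, and is therefore real as well. Consequently the exponential factor in Theorem~\ref{Reflect_asymptot} has unit modulus.

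Next I would write $\beta(\lambda) = A(\lambda)\epsilon^{-1/2} e^{i\Phi(\lambda,\epsilon)} + r(\lambda,\epsilon)$ with $A(\lambda):=-\sqrt{2\pi\lambda(x_+(\lambda)-x_-(\lambda))}\in\mathbb{R}$, $\Phi(\lambda,\epsilon):=-\theta_+(\lambda)/\epsilon-\psi_+(\lambda)\in\mathbb{R}$, and remainder $r(\lambda,\epsilon)=o(\epsilon^{-1/2})$. Taking the modulus squared,
\begin{equation}
|\beta(\lambda)|^2 = \frac{A(\lambda)^2}{\epsilon} + 2\,\mathrm{Re}\!\left( A(\lambda)\epsilon^{-1/2}e^{i\Phi(\lambda,\epsilon)}\overline{r(\lambda,\epsilon)} \right) + |r(\lambda,\epsilon)|^2.
\end{equation}
The middle term is $O(\epsilon^{-1/2})\cdot o(\epsilon^{-1/2}) = o(\epsilon^{-1})$ and the last term is $o(\epsilon^{-1})$, so multiplying by $\epsilon$ and letting $\epsilon\to 0$ yields $\epsilon|\beta(\lambda)|^2\to A(\lambda)^2 = 2\pi\lambda(x_+(\lambda)-x_-(\lambda))$.

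There is no real obstacle here; the only subtlety is bookkeeping the cross-term when squaring $\beta=\text{main}+o(\epsilon^{-1/2})$, which must be shown to remain $o(\epsilon^{-1})$ after multiplication by $\epsilon$. Since the main term has modulus exactly $\epsilon^{-1/2}|A(\lambda)|$, this cross term estimate is immediate, and no additional hypothesis (such as simplicity of the critical points of $h$) is required for the corollary.
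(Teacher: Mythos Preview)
Your proposal is correct and matches the paper's approach: the corollary is stated immediately after Theorem~\ref{Reflect_asymptot} with no separate proof, so it is meant to follow directly from \eqref{beta_asymp_form} by taking the modulus squared and multiplying by $\epsilon$, exactly as you have done. Your explicit verification that $\theta_+(\lambda)$ and $\psi_+(\lambda)$ are real, and your careful handling of the cross term, fill in the routine details the paper leaves to the reader.
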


Matsuno obtained a result slightly more general than this (he omits the negative rKS condition and accordingly generalizes the limit formula) by an argument using trace formulae, identities that involve the modulus of $\beta$ but not its phase.  Theorem~\ref{Reflect_asymptot} both makes Matsuno's result rigorous for a class of potentials $u_0$ and also corrects and proves a conjecture in the thesis of Xu \cite{Xuthesis} regarding the phase of $\beta$.

\section{Discrete Spectrum $\lambda<0$}
\subsection{Evans Function and Eigenvalues}\label{subsubsec:evansfunc}

We study the asymptotic properties of the Evans function \eqref{Evans_function} for rKS potentials $u_0$ in the limit $\epsilon \to 0$ for $\lambda<0$;  according to 
\eqref{eq:eigenvalue-bound} it is sufficient to consider only \emph{positive} rKS potentials.
The essence of our strategy is to calculate the determinant in \eqref{Evans_function} for $\epsilon>0$ small by first applying the method of steepest descent to the individual entries \eqref{lin_sys_coeff} of the matrix $\mathbf{A}(\lambda)$, and then compute the leading term of the (finite) determinant $D(\lambda)$.  However, as in Section~\ref{section-reflection-asymptotics}, it is useful to apply this method not to $D(\lambda)=\det(\mathbf{A}(\lambda))$ but rather to $\tilde{D}(\lambda)=\det(\tilde{\mathbf{A}}(\lambda))$ where $\tilde{\mathbf{A}}(\lambda)=\mathbf{N}\mathbf{A}(\lambda)$ for some invertible $\mathbf{N}$. Otherwise, the computation of the determinant may produce cancellation to all orders of the steepest descent expansions of the matrix elements, which only means that the dominant contribution to $D(\lambda)$ arises from terms beyond all orders in those expansions i.e., terms arising from sub-dominant critical points.  Obviously $D(\lambda)$ and $\tilde{D}(\lambda)$ have exactly the same zeros, which is all that is important for obtaining the discrete spectrum.  Here we choose $\mathbf{N}$ so that the entries $\tilde{A}_{mp}(\lambda)$ of $\tilde{\mathbf{A}}(\lambda)$ have exactly the same form as \eqref{lin_sys_coeff} but with the contours $\{C_m\}_{m=1}^P$ replaced with others $\{W_m\}_{m=1}^P$ such that the resulting matrix $\tilde{\mathbf{A}}(\lambda)$ is suitable for asymptotic analysis in the sense described in Section~\ref{section-reflection-asymptotics} with one exception:  the contour of all integrals appearing in at most one row of $\tilde{\mathbf{A}}(\lambda)$ will be allowed to pass over either one critical point \emph{or two real critical points on the zero level of $\R\{-ih(z;\lambda)\}$}.
Such a choice of $\mathbf{N}$ has the precise effect of making the sub-dominant yet essential contributions to $D(\lambda)$ the dominant contributions to $\tilde{D}(\lambda)$. 

Again, as proof of principle that in a general case a suitable matrix $\tilde{\mathbf{A}}(\lambda)=\mathbf{N}\mathbf{A}(\lambda)$ can be found, we offer the following.
\begin{prop}\label{Wm_prop}
Let $u_0$ be a positive rKS potential for which $c_p$ is a negative imaginary number for all $p=1,\dots,P$, and let $\lambda<0$ be such that the complex critical points of $h(\cdot;\lambda)$ are all simple and correspond to distinct nonzero values of $\R\{-ih(z;\lambda)\}$.  Then there exists a family of contours $\{W_m\}_{m=1}^P$ such that the modified matrix $\tilde{\mathbf{A}}(\lambda):=\mathbf{N}\mathbf{A}(\lambda)$ having integrals over contours $\{W_m\}_{m=1}^P$ in place of $\{C_m\}_{m=1}^P$ is suitable and at most one of the contours, say $W_Q$, may pass over either one complex critical point or at most two real critical points $x_\pm(\lambda)$ for which $\R\{-ih(x_\pm(\lambda);\lambda)\}=0$.
\end{prop}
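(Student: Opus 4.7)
The plan is to mirror the proof of Proposition~\ref{prop-Reflection-contours}, adapted to the geometry that arises for $\lambda<0$ and positive rKS potentials. The central object is the topography of $R(z;\lambda):=\R\{-ih(z;\lambda)\}$ on the domain of analyticity of $h$, which (because $c_p$ is a negative imaginary number) is the slit plane $\mathbb{C}$ minus vertical branch cuts emanating upward from each $z_p$ and downward from each $z_p^*$. I would first identify the valleys at infinity: since $-ih(z;\lambda)=-i\lambda z+O(\log|z|)$ and $\lambda<0$, we have $R(z;\lambda)\to-\infty$ as $z\to i\infty$ along any ray into the upper half-plane, so the upper half-plane is a single valley at infinity cut into $P+1$ components by the branch cuts, one component for each ``slot'' where a contour $C_m$ terminates.

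Next I would locate the critical points. By \eqref{crit_point_eqn}, $h'(z;\lambda)=\lambda+u_0(z)$ is a rational function vanishing at the $2P$ zeros of a polynomial of degree $2P$. Under the standing hypothesis, the $2P-2$ or $2P$ complex critical points are simple saddles of $R$ with distinct nonzero heights; and if $\lambda\in\mathcal{B}$ there is in addition a pair of real critical points $x_\pm(\lambda)$ on the zero level of $R$. Each simple complex saddle sits at the meeting of two steepest-descent arcs and two steepest-ascent arcs; the descent arcs terminate either at a lower saddle, at a branch point, or in a valley at infinity. I would then follow these descent arcs from each saddle in the upper half-plane and use the ordering of saddle heights (combined with the symmetry $R(z^*;\lambda)=-R(z;\lambda)$ inherited from $h(z;\lambda)^*=h(z^*;\lambda)$ for real $\lambda$) to build a steepest-descent network connecting saddles to valleys.

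With this network in hand, I would construct the contours $W_m$ one at a time, in order of decreasing saddle height. For each complex critical point $x_p(\lambda)$ in the upper half-plane I produce a contour beginning and ending in valleys at infinity and passing through $x_p(\lambda)$ along its steepest-descent arcs; the branch-cut geometry, combined with the signs of the $c_p$, dictates which two of the $P+1$ valley components are joined, and the monodromy picked up by $e^{-ih/\epsilon}$ when the contour is pulled across a branch cut is absorbed into a row operation with a nonzero factor, so that the cumulative $\mathbf{N}$ is upper-triangular (after a suitable permutation) with nonzero diagonal, hence invertible. One contour remains unassigned once all other valley pairings are exhausted; this is the exceptional $W_Q$, and for $\lambda\in\mathcal{B}$ it is forced to pass through the two real critical points $x_\pm(\lambda)$ along the zero level of $R$ (a segment of the real axis together with steepest-descent arcs rising into the upper half-plane), while for $\lambda\notin\overline{\mathcal{B}}$ it is the contour through the lowest complex saddle. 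In either case one verifies suitability conditions (i)–(ii) directly from the construction and (iii) from the triangular structure of $\mathbf{N}$.

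The main obstacle is the combinatorial bookkeeping in the third step: the deformation of each original $C_m$ to its $W_m$ crosses a specific collection of branch cuts in a specific order, and one must check that the resulting row-operation matrix $\mathbf{N}$ is actually invertible and that the one-to-one assignment of critical points to contours really is realizable—in particular, that exactly one contour (rather than none or more) needs to inherit the exceptional form. Showing that the Klaus–Shaw single-lobe hypothesis, together with the sign condition on $c_p$, provides just enough rigidity in the steepest-descent network to force this count to come out correctly is the substantive technical content, and parallels the analysis carried out for Proposition~\ref{prop-Reflection-contours} in Appendix~\ref{Appendix-Reflection}.
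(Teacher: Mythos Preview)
Your high-level picture is right: for $\lambda<0$ and $c_p$ negative imaginary, the valleys of $R(z)=\R\{-ih(z;\lambda)\}$ lie at $i\infty$, the poles $z_p$ are peaks, and the goal is to build U-shaped contours through the upper-half-plane saddles with both ends at $i\infty$. That is precisely what the paper does. However, your plan to simply ``mirror'' the proof of Proposition~\ref{prop-Reflection-contours} does not transfer cleanly. In Appendix~\ref{Appendix-Reflection} the poles $z_p$ are \emph{pits} with integrable singularities (because there $\I\{c_p\}>0$), which is what allows the chain-of-contours-ending-at-singularities construction and the monodromy factors you mention. Here $\I\{c_p\}<0$, the singularities are not integrable, and the contours can never terminate at $z_p$; instead each $W_m$ must \emph{enclose} a subset of the poles and go to $i\infty$ along two branch-cut slots. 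Consequently there are no monodromy factors at all---the transformation $\mathbf{N}$ is built entirely from zero-one matrices (a permutation $\mathbf{P}$ of poles, a lower-triangular $\mathbf{M}$ from $C_m=U_m^<=\tilde U_1+\cdots+\tilde U_m$, and a matrix $\mathbf{K}$ encoding which poles each $W_m$ encloses).

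The substantive gap is the combinatorics you flag at the end. The paper resolves it not by following steepest-descent arcs directly but by encoding the topology of the sublevel sets $\{R>L\}$ in a rooted tree whose leaves are the poles and whose internal nodes are the critical points, and then running a \emph{tree-pruning algorithm}: split the tree at the node $x_\pm$ into two subtrees, prune leaves in order of decreasing parent height, and assign to each removed node the formal sum of already-pruned descendant leaves. This gives $\mathbf{K}$ an explicit block structure (two lower-triangular blocks separated by the all-ones row for $W_Q$) from which invertibility is immediate---but $\mathbf{K}$ is \emph{not} triangular after a permutation, so your triangularity claim would not go through as stated. The pruning algorithm is also what guarantees that exactly one contour ($W_Q$) is left to cross the real saddles $x_\pm(\lambda)$, which in your sketch is asserted rather than forced by the construction.
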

We prove this proposition 
in Appendix~\ref{Appendix-Evans}, and therein one can also find an elementary example for which $\mathbf{N}=\mathbb{I}$, i.e., the original matrix $\mathbf{A}(\lambda)$ is already suitable, and another for which $\mathbf{N}$ is nontrivial. 
Note that the condition on the coefficients $\{c_p\}_{p=1}^P$ in Proposition~\ref{Wm_prop} implies that $u_0$ is a positive linear combination of Lorentzian profiles $2v_p/((x-u_p)^2+v_p^2)$ where $z_p=u_p+iv_p$.  We now show how to calculate $\tilde{D}(\lambda)$ for a suitable matrix $\tilde{\mathbf{A}}(\lambda)$.

\begin{prop}
\label{prop-Dtilde-expand}
Let $u_0$ be a positive rKS potential, and let $\lambda<0$ be given.  Suppose that for some invertible matrix $\mathbf{N}$, the matrix $\tilde{\mathbf{A}}(\lambda):=\mathbf{N}\mathbf{A}(\lambda)$ is suitable.  If in addition $\lambda$ lies outside the bulk, then
\begin{equation}
\tilde{D}(\lambda)=U_\epsilon(\lambda)(1+O(\epsilon)),
\label{D-tilde-outside}
\end{equation}
where $U_\epsilon(\lambda)$ is a nonvanishing function defined by \eqref{eq:U-epsilon},
while for $\lambda$ in the bulk,
\begin{equation}
\tilde{D}(\lambda)=V_\epsilon(\lambda)\left(\sin(\pi T_\epsilon(\lambda))+O(\epsilon)\right)
\label{D-tilde-inside}
\end{equation}
where $V_\epsilon(\lambda)$ is a nonvanishing function defined by \eqref{eq:V-epsilon},
and 
\begin{equation}\label{Phase_e_defn}
T_\epsilon(\lambda):=
\frac{1}{2\pi\epsilon}\int_{-\max\{u_0\}}^\lambda(x_+(\eta)-x_-(\eta))\,d\eta + \frac{1}{2\pi}(\psi_+(\lambda)-\psi_-(\lambda))
\end{equation}
with $\psi_\pm(\lambda)$ being defined by \eqref{psi_def_pos}.  The error terms in \eqref{D-tilde-outside} and \eqref{D-tilde-inside} are uniform on compact subintervals of the indicated sets.
\end{prop}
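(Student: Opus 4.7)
The plan is to apply the method of steepest descent entry-by-entry to the matrix $\tilde{\mathbf{A}}(\lambda)$ and then to compute $\tilde D(\lambda)$ using the resulting expansions together with multilinearity of the determinant. By the suitability of $\tilde{\mathbf{A}}$, each generic row $m$ (i.e., $m\neq Q$) has contour $W_m$ passing through a single simple critical point $x_m$ of $-ih(\cdot;\lambda)$, and classical steepest descent (as in the proof of Proposition~\ref{prop-limiting-coefficients-Jost}) yields
\begin{equation*}
\tilde{A}_{mp}(\lambda) = \sqrt{\frac{2\pi\epsilon}{|h''(x_m;\lambda)|}}\,e^{i\theta_m}\,e^{-ih(x_m;\lambda)/\epsilon}\left[\frac{1}{x_m-z_p} + O(\epsilon)\right],
\end{equation*}
where $\theta_m$ is the local steepest descent angle. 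For $\lambda$ outside the bulk $W_Q$ also passes through a single simple complex critical point, so the same expansion is valid for row $Q$; factoring the row prefactors out of the determinant reduces $\tilde D(\lambda)$ to the Cauchy determinant $\det\bigl(1/(x_m-z_p)\bigr)_{m,p=1}^P$, which is nonvanishing because the critical points $\{x_m\}_{m=1}^P$ are distinct from one another and from the poles $\{z_p\}_{p=1}^P$. Packaging the extracted prefactors with the Cauchy determinant as $U_\epsilon(\lambda)$ then establishes \eqref{D-tilde-outside}.

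For $\lambda$ in the bulk the exceptional contour $W_Q$ passes over both real stationary phase points $x_\pm(\lambda)$, both lying on the zero level of $\R\{-ih(\cdot;\lambda)\}$. Stationary phase on $W_Q$ yields two comparable contributions,
\begin{equation*}
\tilde{A}_{Qp}(\lambda) = \alpha_+(\lambda)\left[\frac{1}{x_+-z_p}+O(\epsilon)\right] + \alpha_-(\lambda)\left[\frac{1}{x_--z_p}+O(\epsilon)\right],
\end{equation*}
with
\begin{equation*}
\alpha_\pm(\lambda):= \sqrt{\frac{2\pi\epsilon}{|h''(x_\pm(\lambda);\lambda)|}}\,e^{\mp i\pi/4}\,e^{-ih_\pm(\lambda)/\epsilon},
\end{equation*}
the stationary phase angles $\mp\pi/4$ coming from $\sign(h''(x_\pm(\lambda);\lambda))=\mp 1$ for a positive rKS $u_0$ at $\lambda<0$ (paragraph following \eqref{u_prime_poly}). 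Multilinearity of the determinant along row $Q$ then gives
\begin{equation*}
\tilde{D}(\lambda) = \Bigl(\prod_{m\neq Q}\alpha_m(\lambda)\Bigr)\bigl(\alpha_+(\lambda)\det M_+ + \alpha_-(\lambda)\det M_-\bigr)(1+O(\epsilon)),
\end{equation*}
where $M_\pm$ is the Cauchy matrix obtained by placing $x_\pm$ in row $Q$ and $x_m$ in each row $m\neq Q$, paired against the columns $\{z_p\}_{p=1}^P$.

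The final step is to recognize the bracket as $\sin(\pi T_\epsilon(\lambda))$ up to a common nonvanishing prefactor. The explicit Cauchy product formula combined with \eqref{Psi_defn} yields $|\det M_+|/|\det M_-| = |\Psi^-(x_+(\lambda);\lambda)|/|\Psi^+(x_-(\lambda);\lambda)|$, while \eqref{u_prime_poly} gives $|\alpha_+|/|\alpha_-|=|\Psi^+(x_-(\lambda);\lambda)|/|\Psi^-(x_+(\lambda);\lambda)|$, so that $|\alpha_+\det M_+|=|\alpha_-\det M_-|$. The sum therefore combines as $2i\,e^{i\Sigma/2}\sin(\Delta/2)$ times this common modulus, where $\Sigma$ and $\Delta$ are the sum and difference of the phases of $\alpha_\pm\det M_\pm$. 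Lemma~\ref{fp_fm} rewrites $h_+(\lambda)-h_-(\lambda)=\theta_+(\lambda)-\theta_-(\lambda)+M = \int_{-\max\{u_0\}}^\lambda(x_+(\eta)-x_-(\eta))\,d\eta - K + M$ with $K:=\int_{-\max\{u_0\}}^0(x_+-x_-)\,d\eta$ a $\lambda$-independent constant, and folding the $\mp\pi/4$ stationary phase angles together with the arguments of $\Psi^\mp(x_\pm;\lambda)$ into $\psi_\pm$ via \eqref{psi_def_pos} identifies $\Delta/2 \equiv \pi T_\epsilon(\lambda) \pmod{2\pi}$. The $\lambda$-independent shifts ($K/\epsilon$, $M/\epsilon$, and numerical constants) are absorbed into a common factor $V_\epsilon(\lambda)$, manifestly nonvanishing whenever all critical points of $h$ are simple; this yields \eqref{D-tilde-inside}, with uniformity on compact subintervals inherited from uniform steepest/stationary phase remainder estimates.

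The main obstacle will be the phase bookkeeping in the last step: meticulously tracking the arguments of $\alpha_\pm$, the Cauchy determinant ratio $\det M_+/\det M_-$ (rewritten through $\Psi^\pm$), and the representation of $h_\pm$ from Lemma~\ref{fp_fm}, so as to verify that the antisymmetric combination reproduces exactly the $T_\epsilon$ of \eqref{Phase_e_defn}. A secondary subtlety is that the endpoints of the bulk (the soft edge where $x_\pm$ coalesce and the hard edge where $x_\pm\to\pm\infty$) force degeneration of the critical-point structure and must be excluded, consistent with the compact-subinterval restriction stated in the proposition.
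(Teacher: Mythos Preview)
Your approach is essentially the paper's: row-by-row steepest descent, multilinearity splitting row $Q$ into two Cauchy determinants, then phase identification via $\Psi^\pm$, \eqref{u_prime_poly}, \eqref{psi_def_pos}, and Lemma~\ref{fp_fm}. Two points in the final bookkeeping need correction.

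First, the stationary-phase angles at $x_\pm$ are $\pm\pi/4$, not $\mp\pi/4$. For $\int e^{-ih/\epsilon}$ the Gaussian contribution carries the factor $e^{-i(\pi/4)\sign h''(x_\pm)}$, and since $\sign h''(x_\pm)=\mp 1$ for a positive rKS potential at $\lambda<0$, this gives $e^{\pm i\pi/4}$ (compare the paper's expansion \eqref{DN_asymp}). Your sign convention is flipped.

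Second, and more substantively, the claim that the $\lambda$-independent shifts $K/\epsilon$ and $M/\epsilon$ can be ``absorbed into $V_\epsilon$'' is not a legitimate move: these shifts sit inside the argument of the sine via $\Delta/2$, and an additive shift $c/\epsilon$ in $\sin(\cdot)$ cannot be extracted as a multiplicative prefactor. What actually makes the argument work is that $K=M$ \emph{exactly}, because
\[
M=\int_{-\infty}^\infty u_0(x)\,dx=\int_{-\max\{u_0\}}^0\bigl(x_+(\eta)-x_-(\eta)\bigr)\,d\eta=K,
\]
the area under the graph of $u_0$ computed by horizontal slices. Hence $h_+(\lambda)-h_-(\lambda)=\int_{-\max\{u_0\}}^\lambda(x_+(\eta)-x_-(\eta))\,d\eta$ with no residual constant, and the argument of the sine is exactly $\pi T_\epsilon(\lambda)$. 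Once you incorporate this identity (and fix the angle signs), the remaining phase bookkeeping goes through as in the paper, with the explicit Cauchy determinant formulas $\det(\mathbf{C}^\pm)$ expressed through $\Psi^\mp(x_\pm;\lambda)$ and the common nonzero factor $P(\lambda)$.
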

Note that the formula \eqref{D-tilde-outside} is consistent with the bound \eqref{eq:eigenvalue-bound} on the discrete spectrum; there can be no eigenvalues outside the bulk $\mathcal{B}$, so it is impossible for $D(\lambda)$ (or equivalently $\tilde{D}(\lambda)$) to vanish there.  Note also that at the soft edge $\lambda_\mathrm{min}=-\max\{u_0\}$, we have $\psi_+(\lambda_\mathrm{min})-\psi_-(\lambda_\mathrm{min})=\pi/2$.
\begin{proof}
Let $Q$ be the index of the row of $\tilde{\mathbf{A}}(\lambda)$ whose integration contour may pass through one or two real critical points.  
Since $W_m$ contains exactly one dominant critical point $x_m=x_m(\lambda)\in\mathbb{C}^+$ for $m\neq Q$, the method of steepest descent gives
\begin{equation}
\label{Dm_asymp}
\tilde{A}_{mp}(\lambda)=\sqrt{\frac{2\pi\epsilon}{|h''(x_m;\lambda)|}}e^{i\theta_m}e^{-ih_m(\lambda)/\epsilon}\left(\frac{1}{x_m-z_p}+O(\epsilon)\right)
\end{equation}
as $\epsilon\to 0$, where $h_m(\lambda):=h(x_m(\lambda);\lambda)$, and $\theta_m$ is the steepest descent direction with which $W_m$ traverses the critical point $x_m$.  This same formula also holds for $m=Q$, provided $\lambda\in\mathbb{R}^-\setminus\overline{\mathcal{B}}$.  On the other hand, if $\lambda\in\mathcal{B}$, the contour $W_Q$ traverses two simple real critical points $x_\pm=x_\pm(\lambda)$ by the Klaus-Shaw condition, both of which contribute at the leading order to $\tilde{A}_{Qp}(\lambda)$:
\begin{multline}\label{DN_asymp}
\tilde{A}_{Qp}(\lambda) = 
\sqrt{\frac{2 \pi \epsilon}{|h''(x_{+}(\lambda);\lambda)|}} e^{i\pi/4}e^{-i h_{+}(\lambda)/\epsilon} \left( \frac{1}{x_{+}(\lambda)-z_p} + O(\epsilon)\right)\\
{}+  \sqrt{\frac{2 \pi \epsilon}{|h''(x_{-}(\lambda);\lambda)|}} e^{-i\pi/4}e^{-i h_{-}(\lambda)/\epsilon} \left( \frac{1}{x_{-}(\lambda) - z_p} + O(\epsilon)\right),
\end{multline}
where $h_\pm(\lambda)$ are defined in Lemma~\ref{fp_fm}.  Aside from the Cauchy factors $(x_m-z_p)^{-1}$ and $(x_\pm(\lambda)-z_p)^{-1}$ there is no dependence on the column index $p$ in these leading terms.  Therefore, by row-multilinearity of the determinant, if $\lambda\in\mathbb{R}^-\setminus\overline{\mathcal{B}}$, 
\begin{equation}
\tilde{D}(\lambda)=(2\pi\epsilon)^{P/2}\left(\prod_{m=1}^P\frac{e^{i\theta_m}e^{-ih_m(\lambda)/\epsilon}}{\sqrt{|h''(x_m;\lambda)|}}\right)\det\left(\mathbf{C}+O(\epsilon)\right),
\end{equation}
where $\mathbf{C}$ is the $P\times P$ Cauchy matrix $C_{mp}:=(x_m-z_p)^{-1}$.  
This matrix is invertible because the complex critical points $\{x_m\}_{m=1}^P$ are distinct, as are the poles $\{z_p\}_{p=1}^P$.  Hence if the nonvanishing function $U_\epsilon(\lambda)$ is defined 
by
\begin{equation}
\label{eq:U-epsilon}
U_\epsilon(\lambda):=(2\pi\epsilon)^{P/2}\left(\prod_{m=1}^P\frac{e^{i\theta_m}e^{-ih_m(\lambda)/\epsilon}}{\sqrt{|h''(x_m;\lambda)|}}\right)\det\left(\mathbf{C}\right),
\end{equation}
the proof of \eqref{D-tilde-outside} is complete.

On the other hand, if $\lambda\in\mathcal{B}$, then at the cost of a factor $(-1)^{P-Q}$ we may assume that $Q=P$ and we then split the last row before computing determinants to obtain
\begin{multline}
\tilde{D}(\lambda)=(2\pi\epsilon)^{P/2}\left(\prod_{m=1}^{P-1}\frac{e^{i\theta_m}e^{-ih_m(\lambda)/\epsilon}}{\sqrt{|h''(x_m;\lambda)|}}\right)\\
{}\cdot\left(\frac{e^{i\pi/4}e^{-ih_+(\lambda)/\epsilon}\det\left(\mathbf{C}^+\right)}{\sqrt{|h''(x_+(\lambda);\lambda)|}}
+
\frac{e^{-i\pi/4}e^{-ih_-(\lambda)/\epsilon}\det\left(\mathbf{C}^-\right)}{\sqrt{|h''(x_-(\lambda);\lambda)|}}+O(\epsilon)\right)
\end{multline}
where the matrix $\mathbf{C}^\pm$ is the Cauchy matrix $\mathbf{C}$ with the last row replaced by $((x_\pm(\lambda)-z_1)^{-1},\dots,(x_\pm(\lambda)-z_P)^{-1})$, and we used the fact that $|e^{-ih_\pm(\lambda)/\epsilon}|=1$.  Next we observe that \cite{Schechter59}
\begin{equation}
\begin{split}
\det\left(\mathbf{C}^+\right)&=P(\lambda)\frac{\Psi^-(x_+(\lambda);\lambda)}{x_+(\lambda)-x_-(\lambda)}\\
\det\left(\mathbf{C}^-\right)&=-P(\lambda)\frac{\Psi^+(x_-(\lambda);\lambda)^*}{x_+(\lambda)-x_-(\lambda)},
\end{split}
\end{equation}
$\Psi^\pm$ are defined by \eqref{Psi_defn} and where $P(\lambda)\neq 0$ is given by
\begin{equation}
P(\lambda):=\frac{\prod_{1\le i<j\le P-1}(x_j-x_i)\prod_{1\le i<j\le P}(z_i-z_j)}{\prod_{m=1}^{P-1}\prod_{p=1}^P(x_m-z_p)}.
\end{equation}
Using \eqref{u_prime_poly} and Lemma~\ref{fp_fm}, noting that because $u_0$ is positive Klaus-Shaw
\begin{equation}
\begin{split}
h_+(\lambda)-h_-(\lambda)&=\int_{-\infty}^\infty u_0(x)\,dx-\int_{\lambda}^0(x_+(\eta)-x_-(\eta))\,d\eta \\
&= 
\int_{-\max\{u_0\}}^\lambda(x_+(\eta)-x_-(\eta))\,d\eta,
\end{split}
\end{equation}
and defining (restoring the factor $(-1)^{P-Q}$)
\begin{equation}\label{eq:V-epsilon}
V_\epsilon(\lambda):=\frac{-2i(-1)^{P-Q}(2\pi\epsilon)^{P/2}P(\lambda)}{\sqrt{-\lambda (x_+(\lambda)-x_-(\lambda))}} \left(\prod_{m=1}^{P-1}\frac{e^{i\theta_m}e^{-ih_m(\lambda)/\epsilon}}{\sqrt{|h''(x_m;\lambda)|}}\right)
e^{-i(h_+(\lambda)+h_-(\lambda))/(2\epsilon)}e^{-i(\psi_+-\psi_-)/2}
\end{equation}
then establishes \eqref{D-tilde-inside} and completes the proof.
\end{proof}

\begin{rmk}
\label{remark-steepest-descent-lambda-perturb}
It is easy to see that if the fixed value $\lambda\in\mathcal{B}$ is replaced with an $\epsilon$-dependent value $\lambda_\epsilon$ with the property that for some fixed $\lambda\in\mathcal{B}$ and some constant $C>0$, $|\lambda-\lambda_\epsilon|\le C\epsilon$ holds for all $\epsilon>0$ sufficiently small,
then the asymptotic behavior of the matrix elements $\tilde{A}_{mp}(\lambda_\epsilon)$ can also be calculated by the method of steepest descent, with results similar to \eqref{Dm_asymp}--\eqref{DN_asymp}.  One need only replace the functions $h_m(\lambda)=h(x_m(\lambda);\lambda)$ and $h_\pm(\lambda)=h(x_\pm(\lambda);\lambda)$ in the exponents on the right-hand sides of \eqref{Dm_asymp}--\eqref{DN_asymp} by $h(x_m(\lambda);\lambda_\epsilon)$ and
$h(x_\pm(\lambda);\lambda_\epsilon)$ respectively.
\end{rmk}

Proposition~\ref{prop-Dtilde-expand} allows us to establish a number of important asymptotic properties of the discrete eigenvalues for positive rKS potentials.
\begin{thm}[Uniform approximation of eigenvalues]
\label{theorem-uniform}
Let $u_0$ be a positive rKS potential and suppose that for each $\lambda\in K\subset\mathcal{B}$, $K$ compact, a suitable modification $\tilde{\mathbf{A}}(\lambda)$ of $\mathbf{A}(\lambda)$ can be found.  
Then there is a constant $C_K$ such that for each eigenvalue $\lambda$ in $K$ there exists $\lambda_0<0$ satisfying $T_\epsilon(\lambda_0)\in\mathbb{Z}$, such that $|\lambda-\lambda_0|\le C_K\epsilon^2$ holds for all sufficiently small $\epsilon>0$.  Likewise, for each $\lambda_0\in K$ 
satisfying $T_\epsilon(\lambda_0)\in\mathbb{Z}$ there is an eigenvalue $\lambda$ such that the same estimate holds true.  
\end{thm}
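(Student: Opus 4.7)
The plan is to read the eigenvalues directly off the asymptotic formula
\[
\tilde{D}(\lambda) = V_\epsilon(\lambda)\bigl(\sin(\pi T_\epsilon(\lambda)) + r_\epsilon(\lambda)\bigr), \qquad |r_\epsilon(\lambda)| \le C_K\epsilon,
\]
furnished by Proposition~\ref{prop-Dtilde-expand}, where $V_\epsilon$ is nonvanishing and the error is uniform in $\lambda$ on a slight enlargement of $K$. Since $\tilde{D} = \det(\mathbf{N})\cdot D$ with $\det\mathbf{N}\ne 0$, eigenvalues are exactly the zeros of $\tilde D$ in $\mathbb{C}\setminus\mathbb{R}^+$, and all such zeros are automatically real because the spectrum of $\mathcal{L}$ is real. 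The decisive structural fact is that differentiating \eqref{Phase_e_defn} yields
\[
T_\epsilon'(\lambda) = \frac{x_+(\lambda)-x_-(\lambda)}{2\pi\epsilon} + O(1),
\]
and because $x_+(\lambda)-x_-(\lambda)$ is continuous and strictly positive on $\mathcal B$ by Definition~\ref{crit_point_Bulk}, there is a constant $c_K>0$ such that $T_\epsilon'(\lambda)\ge c_K/\epsilon$ uniformly for $\lambda\in K$ and all sufficiently small $\epsilon$.

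For the forward direction, let $\lambda\in K$ be an eigenvalue. Then $\sin(\pi T_\epsilon(\lambda)) = -r_\epsilon(\lambda) = O(\epsilon)$, so $|T_\epsilon(\lambda) - n| = O(\epsilon)$ for the integer $n$ nearest to $T_\epsilon(\lambda)$. By the intermediate value theorem applied to the strictly increasing function $T_\epsilon$, there exists $\lambda_0<0$ with $T_\epsilon(\lambda_0)=n$, and the mean value theorem together with the lower bound on $T_\epsilon'$ gives
\[
|\lambda - \lambda_0| \le \frac{|T_\epsilon(\lambda) - n|}{c_K/\epsilon} = O(\epsilon^2),
\]
which is the desired estimate.

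For the converse direction, fix $\lambda_0\in K$ with $T_\epsilon(\lambda_0)=n\in\mathbb{Z}$ and apply Rouché's theorem to $\tilde D$ on the complex disk $\Omega_M:=\{|\lambda-\lambda_0|\le M\epsilon^2\}$, where $M$ will be chosen. Remark~\ref{remark-steepest-descent-lambda-perturb}, combined with the analytic dependence of the saddle points $x_\pm(\lambda)$ and $x_m(\lambda)$ on $\lambda$ near $K$, extends the expansion above to $\Omega_M$ with the same uniform $O(\epsilon)$ error. On $\partial\Omega_M$ the Taylor estimate $T_\epsilon(\lambda)-n = T_\epsilon'(\lambda_0)(\lambda-\lambda_0) + O(\epsilon^{-1}|\lambda-\lambda_0|^2)$ gives $|T_\epsilon(\lambda)-n|\ge \tfrac12 c_K M\epsilon$, whence $|\sin(\pi T_\epsilon(\lambda))| \ge \tfrac14\pi c_K M\epsilon$, while the perturbation satisfies $|r_\epsilon(\lambda)|\le C_K\epsilon$. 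Choosing $M$ so large that $\tfrac14\pi c_K M > C_K$ makes the Rouché comparison strict on $\partial\Omega_M$. Since $V_\epsilon(\lambda)\sin(\pi T_\epsilon(\lambda))$ has a unique simple zero in $\Omega_M$ (at $\lambda_0$), Rouché produces a unique zero of $\tilde D$ in $\Omega_M$; this zero is necessarily real, hence an eigenvalue, and it lies within $M\epsilon^2$ of $\lambda_0$. Taking $C_K:=M$ covers both directions.

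The main obstacle I foresee is the rigorous upgrade of the expansion of $\tilde D$ from the real line to a complex $\epsilon^2$-neighborhood of $K$ with the same $O(\epsilon)$ error. This demands a careful verification that the steepest-descent contours $\{W_m\}$ of Proposition~\ref{Wm_prop} and the saddle points survive, uniformly, as $\lambda$ acquires an imaginary part — essentially by tracking the analytic dependence of $x_\pm(\lambda)$ and $x_m(\lambda)$ on $\lambda$ near the bulk and re-examining the row-reduction that produced $\mathbf{N}$. Once that analytic uniformity is in hand, both the IVT/MVT step and the Rouché step proceed exactly as described.
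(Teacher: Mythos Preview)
Your proof is correct and follows essentially the same line as the paper's. Both start from Proposition~\ref{prop-Dtilde-expand}, observe that an eigenvalue forces $\sin(\pi T_\epsilon(\lambda))=O(\epsilon)$, and then exploit the fact that $\epsilon T_\epsilon(\lambda)$ has a strictly positive $\lambda$-derivative on $K$ to convert this into an $O(\epsilon^2)$ distance estimate. The paper compresses both directions into a single appeal to the Implicit Function Theorem; your treatment is more explicit, separating the forward direction (IVT/MVT on the real line) from the converse (Rouch\'e on a complex $\epsilon^2$-disk). The Rouch\'e argument is a genuine refinement: since $\tilde D(\lambda)/V_\epsilon(\lambda)$ is complex-valued even for real $\lambda$, a purely real IVT is not available, and the paper's IFT invocation implicitly requires exactly the complex-analytic uniformity you flag in your final paragraph. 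That obstacle is real but mild---the saddle points $x_\pm(\lambda),x_m(\lambda)$ depend analytically on $\lambda$, and an $O(\epsilon^2)$ imaginary perturbation of $\lambda$ is absorbed by the same mechanism as in Remark~\ref{remark-steepest-descent-lambda-perturb}---so your plan goes through.
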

\begin{proof}
The eigenvalues are characterized exactly by $\tilde{D}(\lambda)=0$, or using Proposition~\ref{prop-Dtilde-expand} in the case $\lambda\in\mathcal{B}$, $\sin(\pi T_\epsilon(\lambda))=O(\epsilon)$, the error term being uniform for $\lambda\in K\subset\mathcal{B}$.  Solving this equation for $T_\epsilon(\lambda)$ and multiplying by $\epsilon$ gives, for some $n\in\mathbb{Z}$,
\begin{equation}
\frac{1}{2\pi}\int_{-\min\{u_0\}}^\lambda (x_+(\eta)-x_-(\eta))\,d\eta +\frac{1}{2\pi}\epsilon (\psi_+(\lambda)-\psi_-(\lambda))=\epsilon n +O(\epsilon^2)
\end{equation}
uniformly on $K$.  Since the left-hand side is differentiable with respect to $\lambda$ with a derivative that is strictly positive on $K$, the result follows from the Implicit Function Theorem.
\end{proof}

\begin{cor}[Local approximation of eigenvalues]\label{cor_asymp_eig_loc}
Fix a closed interval $K\subset\mathcal{B}$ and a positive integer $J$.  Under the hypotheses of Theorem~\ref{theorem-uniform}, for each $\Lambda\in K$ the $2J+1$ eigenvalues $\lambda$ closest to $\Lambda$ are given by
\begin{equation}
\lambda=\Lambda+2 \pi \epsilon  \left(\frac{j+[T_\epsilon(\Lambda)]-T_\epsilon(\Lambda)}{x_{+}(\Lambda)-x_{-}(\Lambda)}\right) + O(\epsilon^2), \quad  j=-J,-(J-1),\dots,J-1,J,
\label{eq:eigenvalue-expansion}
\end{equation}
in the limit $\epsilon\to 0$, where the error term depends on $J$ and $K$, and where $[\cdot]$ denotes the nearest integer function.  Note that $\left|[T_\epsilon(\Lambda)]-T_\epsilon(\Lambda)\right| \leq 1/2$.
\end{cor}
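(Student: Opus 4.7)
The plan is to bootstrap from the uniform approximation in Theorem~\ref{theorem-uniform}, which already tells us that eigenvalues in $K$ are $O(\epsilon^2)$-close to the solutions of the quantization condition $T_\epsilon(\lambda)\in\mathbb{Z}$. The task thus reduces to a careful Taylor expansion of $T_\epsilon$ around the reference point $\Lambda$ and to identifying which integer $n\in\mathbb{Z}$ labels each of the $2J+1$ nearest eigenvalues.

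First I would differentiate the defining formula \eqref{Phase_e_defn} to get
\begin{equation}
T_\epsilon'(\lambda)=\frac{x_+(\lambda)-x_-(\lambda)}{2\pi\epsilon}+\frac{\psi_+'(\lambda)-\psi_-'(\lambda)}{2\pi},
\end{equation}
so on the compact set $K\subset\mathcal{B}$ we have $T_\epsilon'(\lambda)=(x_+(\lambda)-x_-(\lambda))/(2\pi\epsilon)+O(1)$, bounded below by a positive multiple of $\epsilon^{-1}$, and $T_\epsilon''(\lambda)=O(\epsilon^{-1})$ uniformly. By Theorem~\ref{theorem-uniform} and the Implicit Function Theorem argument in its proof, the eigenvalues in $K$ are in one-to-one correspondence (up to $O(\epsilon^2)$) with the values $\lambda_n$ solving $T_\epsilon(\lambda_n)=n$, $n\in\mathbb{Z}$, and consecutive $\lambda_n$ are separated by $2\pi\epsilon/(x_+-x_-)+O(\epsilon^2)$. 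In particular the $2J+1$ eigenvalues nearest to $\Lambda$ correspond precisely to the integers $n=[T_\epsilon(\Lambda)]+j$ for $j=-J,\dots,J$, since $[T_\epsilon(\Lambda)]$ is by definition the integer closest to $T_\epsilon(\Lambda)$.

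Next I would solve the equation $T_\epsilon(\lambda)=n$ by Taylor expansion around $\Lambda$. Writing $\delta:=\lambda-\Lambda$, the equation reads
\begin{equation}
T_\epsilon(\Lambda)+T_\epsilon'(\Lambda)\delta+\tfrac{1}{2}T_\epsilon''(\xi)\delta^2=n
\end{equation}
for some $\xi$ between $\Lambda$ and $\lambda$. Since $n-T_\epsilon(\Lambda)=j+[T_\epsilon(\Lambda)]-T_\epsilon(\Lambda)$ is bounded by $J+1/2$ and $T_\epsilon'(\Lambda)\sim\epsilon^{-1}$, the leading balance gives $\delta=O(\epsilon)$, so the quadratic remainder is $O(\epsilon^{-1})\cdot O(\epsilon^2)=O(\epsilon)$, which when divided by $T_\epsilon'(\Lambda)\sim\epsilon^{-1}$ contributes $O(\epsilon^2)$ to $\delta$. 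Substituting the expression for $T_\epsilon'(\Lambda)$ gives
\begin{equation}
\delta=\frac{j+[T_\epsilon(\Lambda)]-T_\epsilon(\Lambda)}{T_\epsilon'(\Lambda)}+O(\epsilon^2)=\frac{2\pi\epsilon(j+[T_\epsilon(\Lambda)]-T_\epsilon(\Lambda))}{x_+(\Lambda)-x_-(\Lambda)}+O(\epsilon^2),
\end{equation}
where in the second equality I expand $1/T_\epsilon'(\Lambda)=2\pi\epsilon/(x_+(\Lambda)-x_-(\Lambda))+O(\epsilon^2)$ and absorb the correction into the remainder. Finally, adding back the $O(\epsilon^2)$ deviation between the true eigenvalue and $\lambda_n$ from Theorem~\ref{theorem-uniform} yields the formula \eqref{eq:eigenvalue-expansion}.

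The only step that requires any care is making the error estimates genuinely uniform in the prescribed sense. The bound on the remainder depends on $|j+[T_\epsilon(\Lambda)]-T_\epsilon(\Lambda)|\le J+1/2$ (hence the dependence on $J$ in the error) and on uniform bounds of $T_\epsilon''$ and of $(x_+-x_-)^{-1}$ on $K$ (hence the dependence on $K$). The uniformity of the $O(\epsilon)$ remainder in Proposition~\ref{prop-Dtilde-expand}, pushed through the Implicit Function Theorem exactly as in the proof of Theorem~\ref{theorem-uniform}, supplies the uniformity of the residual $O(\epsilon^2)$ correction coming from the approximate quantization condition. I do not anticipate any serious obstacle beyond bookkeeping of these error terms.
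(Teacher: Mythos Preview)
Your proposal is correct and follows essentially the same route as the paper: write the eigenvalue condition as $T_\epsilon(\lambda)=n$ with $n=[T_\epsilon(\Lambda)]+j$, Taylor-expand about $\Lambda$, use $\epsilon T_\epsilon'(\Lambda)=(x_+(\Lambda)-x_-(\Lambda))/(2\pi)+O(\epsilon)$ to invert at leading order, and absorb the quadratic remainder and the $O(\epsilon^2)$ discrepancy from Theorem~\ref{theorem-uniform} into the error term. The only cosmetic difference is that the paper multiplies through by $\epsilon$ and works with $\epsilon T_\epsilon$ (whose derivatives are uniformly bounded on $K$) rather than tracking the $\epsilon^{-1}$ factors in $T_\epsilon'$ and $T_\epsilon''$ directly as you do.
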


\begin{proof}
We set $\lambda=\Lambda+\Delta$ and write the exact eigenvalue condition 
as $\epsilon T_\epsilon(\Lambda+\Delta)=\epsilon n+O(\epsilon^2)$ for some integer $n$.  Writing the integer $n$ in the form $n=[T_\epsilon(\Lambda)]+j$ for an integer $j\in [-J,J]$ and applying Taylor expansion,
\begin{equation}\label{eq:uniform_estimate}
\epsilon T_\epsilon'(\Lambda)\Delta + O(\Delta^2)=\epsilon\left(j+[T_\epsilon(\Lambda)]-T_\epsilon(\Lambda)\right) + O(\epsilon^2),
\end{equation}
because derivatives of $\epsilon T_\epsilon$ are bounded uniformly for $\lambda\in K$.  Now we seek solutions $\Delta=\lambda-\Lambda$ that are $O(\epsilon)$, so the error terms may be combined and we may solve for $\Delta$:
\begin{equation}
\lambda-\Lambda=\Delta=\frac{\epsilon(j+[T_\epsilon(\Lambda)]-T_\epsilon(\Lambda)) + O(\epsilon^2)}{\epsilon T'_\epsilon(\Lambda)}
\end{equation}
Since $\epsilon T'_\epsilon(\Lambda)=(x_+(\Lambda)-x_-(\Lambda))/(2\pi) + O(\epsilon)$ uniformly for $\Lambda\in K$ the proof is complete.
\end{proof}

This result shows that the eigenvalues are locally equally spaced with $O(\epsilon)$ spacing that depends on the point $\Lambda$ of local expansion.  Ignoring the details of the equal spacing and the offset of the grid given by the term $[T_\epsilon(\Lambda)]-T_\epsilon(\Lambda)$, we reproduce a result obtained by Matsuno \cite{Matsuno81} by formal asymptotic analysis of trace formulae (conservation laws).
\begin{cor}[Matsuno's density formula \cite{Matsuno81}]
\label{cor-Matsuno-density}
The asymptotic density of eigenvalues at a point $\lambda$ in the bulk is $\rho_\mathrm{M}(\lambda)/\epsilon$, where Matsuno's density is
\begin{equation}\label{Matsunos_density}
\rho_\mathrm{M}(\lambda):=\frac{1}{2 \pi}(x_+(\lambda)-x_-(\lambda)).
\end{equation}
In other words, the number $N[a,b]$ of eigenvalues in a subinterval $[a,b]$ of the bulk $\mathcal{B}$ satisfies
\begin{equation}
N[a,b]=\frac{1}{\epsilon}\int_a^b\rho_\mathrm{M}(\lambda)\,d\lambda + O(1),\quad\epsilon\to 0.
\end{equation}
\end{cor}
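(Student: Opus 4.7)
The plan is to reduce the counting problem to counting integers in an interval, via Theorem~\ref{theorem-uniform}. First, I would observe that $K:=[a,b]\subset\mathcal{B}$ is compact, and that the function $T_\epsilon$ is strictly increasing on $K$ for all sufficiently small $\epsilon$: indeed, from \eqref{Phase_e_defn},
\begin{equation}
\epsilon T_\epsilon'(\lambda)=\frac{1}{2\pi}(x_+(\lambda)-x_-(\lambda)) + O(\epsilon)
\end{equation}
uniformly on $K$, and $x_+(\lambda)-x_-(\lambda)$ is strictly positive and bounded away from zero on $K\subset\mathcal{B}$. Thus $T_\epsilon:K\to T_\epsilon(K)$ is a diffeomorphism onto its image for small $\epsilon$.

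Next, Theorem~\ref{theorem-uniform} provides an approximate bijection, with displacement $O(\epsilon^2)$, between the eigenvalues in $K$ and the preimages under $T_\epsilon$ of the integers in $[T_\epsilon(a),T_\epsilon(b)]$. Since consecutive eigenvalues in $K$ are separated by $O(\epsilon)$ (Corollary~\ref{cor_asymp_eig_loc}) while the displacement is $O(\epsilon^2)$, at most a uniformly bounded number of eigenvalues in any neighborhood of $a$ or of $b$ can be miscounted when comparing eigenvalues in $[a,b]$ with integer-preimages of $T_\epsilon$ in $[T_\epsilon(a),T_\epsilon(b)]$. Hence
\begin{equation}
N[a,b]=\#\bigl(\mathbb{Z}\cap[T_\epsilon(a),T_\epsilon(b)]\bigr)+O(1),\quad\epsilon\to 0.
\end{equation}

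The elementary estimate $\#(\mathbb{Z}\cap[\alpha,\beta])=\beta-\alpha+O(1)$ then gives $N[a,b]=T_\epsilon(b)-T_\epsilon(a)+O(1)$. Substituting the explicit formula \eqref{Phase_e_defn},
\begin{equation}
T_\epsilon(b)-T_\epsilon(a)=\frac{1}{2\pi\epsilon}\int_a^b(x_+(\eta)-x_-(\eta))\,d\eta+\frac{1}{2\pi}\bigl(\psi_+(b)-\psi_-(b)-\psi_+(a)+\psi_-(a)\bigr),
\end{equation}
and the second term is $O(1)$ since $\psi_\pm$ are bounded on $K$. Recognizing the remaining integral as $\epsilon^{-1}\int_a^b\rho_\mathrm{M}(\lambda)\,d\lambda$ completes the argument.

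The only subtle point is the book-keeping at the endpoints $a,b$: one must confirm that even though Theorem~\ref{theorem-uniform} pairs each eigenvalue in $K$ with an integer-level set of $T_\epsilon$ lying at distance $O(\epsilon^2)$, eigenvalues just outside $[a,b]$ could a priori have their paired integer fall inside $[T_\epsilon(a),T_\epsilon(b)]$ (or vice versa). However, since both the shift $|\lambda-\lambda_0|$ and the typical eigenvalue spacing are controlled ($O(\epsilon^2)$ versus $\Theta(\epsilon)$ respectively), only finitely many (uniformly in $\epsilon$) such boundary discrepancies occur, and these are absorbed into the $O(1)$ remainder — this is the only step that is not a direct substitution, and it is the main technical obstacle to watch in carrying out the proof cleanly.
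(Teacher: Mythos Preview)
Your proposal is correct and is precisely the argument implicit in the paper. The paper does not spell out a proof of this corollary; it simply remarks, after Corollary~\ref{cor_asymp_eig_loc}, that the eigenvalues are locally equally spaced with spacing determined by $x_+(\Lambda)-x_-(\Lambda)$, so that ``ignoring the details of the equal spacing and the offset of the grid'' one recovers Matsuno's density. Your write-up via the monotonicity of $T_\epsilon$ and the integer-counting estimate $\#(\mathbb{Z}\cap[\alpha,\beta])=\beta-\alpha+O(1)$ is exactly how one makes that remark precise, and your handling of the endpoint discrepancies (comparing the $O(\epsilon^2)$ displacement from Theorem~\ref{theorem-uniform} against the $\Theta(\epsilon)$ eigenvalue spacing) is the right way to justify the $O(1)$ error.
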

For all positive rKS potentials $u_0$, Matsuno's density $\rho_\mathrm{M}$ vanishes at the soft edge and blows up at the hard edge $\lambda=0$.  This should be contrasted with the better-known Weyl density of eigenvalues for semiclassical $1$-D Schr\"odinger operators, which is typically finite and nonzero at both hard and soft edges.
\begin{rmk}
In \cite{KodamaAblowitzSatsuma82} it was shown that if $u_0(x)=2/(1+x^2)$, then for $\epsilon=1/N$ for a positive integer $N$, the eigenvalues are precisely the roots of the equation $L_N(-2N\lambda)=0$ where $L_N$ is the Laguerre polynomial of degree $N$.  In this case, the distribution of eigenvalues is well-known in the theory of orthogonal polynomials and random matrix theory.  In the latter context Matsuno's density for the case $u_0(x)=2/(1+x^2)$ is known as the Mar\v{c}enko-Pastur law
\cite{MarchenkoP67}; from this context that we adapt the terminology of soft and hard edges of the (discrete) spectrum $\mathcal{B}$.
From the formula \eqref{base_reflect_coeff} one can see that for this potential, when $\epsilon=1/N$ the reflection coefficient $\beta$ vanishes identically.
\end{rmk}

\subsection{Phase Constants}\label{subsubsec:phase_const_asymp}

Fix a number $\Lambda\in\mathcal{B}$ and for each $\epsilon>0$ sufficiently small let $\lambda_\epsilon$ be the eigenvalue closest to $\Lambda$, choosing $\lambda_\epsilon>\Lambda$ if there are two equidistant.  According to Corollary~\ref{cor_asymp_eig_loc}, there is some constant $C>0$ depending only on $\Lambda$, such that $|\Lambda-\lambda_\epsilon|\le C\epsilon$ holds for all $\epsilon>0$ sufficiently small.  Denote by $\gamma_\epsilon$ and $\Phi_\epsilon(x)$ the phase constant and normalized eigenfunction, respectively, associated to the eigenvalue $\lambda_\epsilon$.  Then \eqref{Phase_const1} takes the form
\begin{equation}
\begin{split}
\gamma_\epsilon&=\frac{\epsilon}{2\pi\lambda_\epsilon}\int_{-\infty}^{\infty}\Phi_\epsilon(x)^*\left(x\Phi_\epsilon(x)-1\right)\,dx \\ {}&= \lim_{R\to +\infty}\left[\int_{-R}^R
xI_\epsilon(x)\,dx - \frac{\epsilon}{2\pi\lambda_\epsilon}\int_{-R}^R\Phi_\epsilon(x)^*\,dx\right]\\
{}&=\lim_{R\to+\infty}\int_{-R}^R xI_\epsilon(x)\,dx -\frac{i\epsilon}{2\lambda_\epsilon},\;\;
I_\epsilon(x):=\frac{\epsilon}{2\pi\lambda_\epsilon}|\Phi_\epsilon(x)|^2.
\end{split}
\label{gamma_eps}
\end{equation}
In the last step we deformed the contour $[-R,R]$ to a semicircle in the lower half-plane where $\Phi(z^*)^*$ is analytic and satisfies $\Phi(z^*)^*=z^{-1}+O(z^{-2})$ as $z\to\infty$.

\begin{prop}\label{proposition-I-epsilon}
Let $\Lambda\in\mathcal{B}$ be such that the complex critical points of $h(\cdot;\Lambda)$ are simple, and set $L=L(\Lambda):=1+\max\{|x_+(\Lambda)|,|x_-(\Lambda)|\}$.  The function $I_\epsilon(x)$ defined for $x\in\mathbb{R}$ in \eqref{gamma_eps} has the following properties.
\begin{enumerate}
\item There is a constant $K_0 = K_0(\Lambda)$ such that 
\begin{equation}
\left|I_\epsilon(x)-\frac{\epsilon}{2\pi\lambda_\epsilon x^2}\right| \leq \frac{K_0\epsilon}{|x|^3},\quad |x|\geq L
\label{I_eps_tail_bound}
\end{equation}
holds for all $\epsilon>0$ sufficiently small.  
\item There exists a constant $K_1 = K_1(\Lambda)$ such that 
\begin{equation}
|I_\epsilon(x)|\leq K_1,\quad |x|\le L
\label{I_eps_uniform_bound}
\end{equation}
holds for all $\epsilon>0$ sufficiently small.
\item For each real $x\neq x_\pm(\Lambda)$, 
\begin{equation}
\lim_{\epsilon\to 0} I_\epsilon(x)=I_0(x):=-\frac{\chi_{(x_-(\Lambda),x_+(\Lambda))}(x)}{x_+(\Lambda)-x_-(\Lambda)},
\label{I_eps_pointwise_limit}
\end{equation}
where $\chi_{(a,b)}(x)$ denotes the characteristic function of $(a,b)$.
\end{enumerate}
\end{prop}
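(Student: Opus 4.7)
The plan is to combine integration by parts in regions free of stationary phase points with the method of stationary phase near the real critical points $x_\pm(\lambda_\epsilon)$ of $h(\cdot;\lambda_\epsilon)$, applied to the integral representation \eqref{Phij_int} of $\Phi_\epsilon(x):=\Phi(x;\lambda_\epsilon)$. Corollary~\ref{cor_asymp_eig_loc} gives $\lambda_\epsilon=\Lambda+O(\epsilon)$ and hence $x_\pm(\lambda_\epsilon)=x_\pm(\Lambda)+O(\epsilon)$, so for all sufficiently small $\epsilon$ both turning points lie in $(-L,L)$. Throughout I write $G(z;\lambda):=\sum_p\phi_p(\lambda)/(z-z_p)$, so that $\Phi_\epsilon(x)=-(i/\epsilon)e^{ih(x;\lambda_\epsilon)/\epsilon}\int_{-\infty}^x e^{-ih(z;\lambda_\epsilon)/\epsilon}G(z;\lambda_\epsilon)\,dz$. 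Since $\lambda_\epsilon<0$, the required decay of $\Phi_\epsilon$ at $+\infty$ (Hardy-space membership) forces the exact identity $\int_{-\infty}^\infty e^{-ih(z;\lambda_\epsilon)/\epsilon}G(z;\lambda_\epsilon)\,dz=0$, which I will use to flip $\int_{-\infty}^x$ to $-\int_x^\infty$ when $x\to+\infty$.

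For part (1), since $|x|\geq L$ places $x$ strictly outside $[x_-(\lambda_\epsilon),x_+(\lambda_\epsilon)]$ and $h(\cdot;\lambda_\epsilon)$ has no real stationary points in $|z|\geq L$, iterated integration by parts (on $\int_x^\infty$ for $x>L$ using the eigenvalue identity, or directly on $\int_{-\infty}^x$ for $x<-L$) yields $|\Phi_\epsilon(x)|=|G(x;\lambda_\epsilon)/h'(x;\lambda_\epsilon)|+O(\epsilon/x^2)$ uniformly in $\epsilon$. Because $u_0$ is rational with $u_0\in L^1(\mathbb{R})$ (forcing $\R\{c_1+\cdots+c_P\}=0$), one has $u_0(x)=O(1/x^2)$ as $|x|\to\infty$, so $h'(x;\lambda_\epsilon)=\lambda_\epsilon+O(1/x^2)$; simultaneously $\sum_p\phi_p(\lambda_\epsilon)=\lambda_\epsilon$ gives $G(x;\lambda_\epsilon)=\lambda_\epsilon/x+O(1/x^2)$. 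Combining, $|\Phi_\epsilon(x)|^2=1/x^2+O(1/|x|^3)$, and multiplying by $\epsilon/(2\pi\lambda_\epsilon)$ yields \eqref{I_eps_tail_bound}. For part (2), I would cover $|x|\leq L$ by three sub-regions: (a) bounded away from $[x_-(\Lambda),x_+(\Lambda)]$, where the argument above gives $|\Phi_\epsilon|=O(1)$ and hence $|I_\epsilon|=O(\epsilon)$; (b) compact subintervals of $(x_-(\Lambda),x_+(\Lambda))$, where stationary phase through $x_-(\lambda_\epsilon)$ yields $|\Phi_\epsilon|=O(\epsilon^{-1/2})$ and hence $|I_\epsilon|=O(1)$; (c) shrinking neighborhoods of the turning points, where a uniform Airy-type treatment still bounds $|\Phi_\epsilon|$ by $O(\epsilon^{-1/2})$. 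A single constant $K_1$ then dominates all three cases.

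Part (3) is the heart of the matter. For $x$ outside $[x_-(\Lambda),x_+(\Lambda)]$, part (2a) gives $|\Phi_\epsilon|=O(1)$, so $I_\epsilon(x)=O(\epsilon)\to 0=I_0(x)$. For $x\in(x_-(\Lambda),x_+(\Lambda))$ bounded away from the turning points, the integration range in \eqref{Phij_int} contains precisely the stationary point $x_-(\lambda_\epsilon)$, whose contribution dominates; stationary phase gives
\[
|\Phi_\epsilon(x)|^2=\frac{2\pi|G(x_-(\lambda_\epsilon);\lambda_\epsilon)|^2}{\epsilon\,|h''(x_-(\lambda_\epsilon);\lambda_\epsilon)|}(1+o(1)),
\]
so that
\[
I_\epsilon(x)\longrightarrow\frac{|G(x_-(\Lambda);\Lambda)|^2}{\Lambda\,|h''(x_-(\Lambda);\Lambda)|}.
\]
Using \eqref{u_prime_poly} and $\Lambda<0$, this simplifies to the claimed $-1/(x_+(\Lambda)-x_-(\Lambda))$ precisely when
\[
|G(x_-(\Lambda);\Lambda)|=\frac{|\Lambda|\,|\Psi^+(x_-(\Lambda);\Lambda)|}{x_+(\Lambda)-x_-(\Lambda)}.
\]

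The main obstacle is establishing this last identity, which requires an analogue for the discrete spectrum of Proposition~\ref{prop-limiting-coefficients-Jost}, namely the computation of $\phi_p^0(\Lambda):=\lim_{\epsilon\to 0}\phi_p(\lambda_\epsilon)$. My plan is to deform the contours $\{C_m\}_{m=1}^P$ defining $\mathbf{A}(\lambda_\epsilon)$ to a suitable family $\{W_m\}_{m=1}^P$ (as in Proposition~\ref{Wm_prop}) via an invertible matrix $\mathbf{N}$, apply steepest descent to the first $P-1$ rows of $\mathbf{N}\mathbf{A}(\lambda_\epsilon)\boldsymbol{\phi}(\lambda_\epsilon)=\mathbf{0}$, and close the system using the normalization $\sum_p\phi_p(\lambda_\epsilon)=\lambda_\epsilon$. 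A Residue Theorem calculation, encircling $\{z_p\}$ and accounting for the residue at infinity exactly as in the proof of Proposition~\ref{prop-limiting-coefficients-Jost}, should then yield a closed-form expression for $G(x_-(\Lambda);\Lambda)$ in terms of $\Psi^+$ consistent with the required modulus identity. The subtle point is that $\mathbf{A}(\lambda_\epsilon)$ is exactly rank-deficient at eigenvalues, so the steepest-descent limit of $\mathbf{N}\mathbf{A}$ must itself inherit the rank defect, and the normalization $\sum_p\phi_p=\lambda$ is crucial for selecting the correct limiting null vector.
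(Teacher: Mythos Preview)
Your approach is essentially the paper's: the lemma you identify as the main obstacle (computing $\phi_p^0$ by applying steepest descent to $P-1$ rows of $\mathbf{N}\mathbf{A}$, closing the system with the normalization $\mathbf{1}^\mathsf{T}\boldsymbol{\phi}=\lambda$, and solving the resulting Cauchy-type limit by residues) is exactly the paper's first step in Appendix~\ref{Appendix-Phase}, and parts 1--3 then proceed via the same integration-by-parts/stationary-phase decomposition you outline. One small correction on part 2(c): the uniform treatment near a \emph{simple} stationary point with a moving integration endpoint is governed by Fresnel/error-function asymptotics rather than Airy, and the paper implements this concretely via the quadratic conformal change $h(z;\Lambda)-h(x_-(\Lambda);\Lambda)=s^2$ followed by contour deformation in the $s$-plane, rather than by invoking a special-function expansion.
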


\begin{rmk}
Comparing \eqref{I_eps_pointwise_limit} with the definition of $I_\epsilon(x)$ given in \eqref{gamma_eps} shows that for $x\in\mathbb{R}$ the modulus of the normalized eigenfunction $\Phi_{\epsilon}(x)$ behaves, in the limit $\epsilon \to 0$, like a large (of size $\epsilon^{-1/2}$) multiple of the characteristic function of the interval bounded by the turning points $x_\pm(\Lambda)$. This asymptotic behavior can be compared with the more familiar result, based on the WKB method, that eigenfunctions of a Sturm-Liouville problem, while also being asymptotically confined to a ``classically allowed'' interval $(x_-(\Lambda),x_+(\Lambda))$, nonetheless have a decidedly non-constant amplitude on this interval, diverging at the endpoints typically like $|x-x_\pm(\Lambda)|^{-1/4}$.  This draws a strong contrast between the nonlocal BO scattering problem and scattering problems for many other integrable equations like KdV involving spectral analysis of purely differential operators.
\end{rmk}

Proposition~\ref{proposition-I-epsilon} is proved in Appendix~\ref{Appendix-Phase}.
 Its main purpose is to prove the following theorem.
 \begin{thm}\label{theorem-gamma-limit}
For each $\Lambda\in\mathcal{B}$, 
\begin{equation}
\lim_{\epsilon \to 0} \gamma_{\epsilon} =  -\overline{x}(\Lambda),\quad\overline{x}(\Lambda):= \frac{1}{2} \left( x_{+}(\Lambda) + x_{-}(\Lambda) \right),
\label{equation-gamma-limit}
\end{equation}
where $\gamma_{\epsilon}$ is the phase constant for the eigenvalue $\lambda_\epsilon$ nearest  $\Lambda$.
\end{thm}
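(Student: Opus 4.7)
The plan is to evaluate the limit $\epsilon\to 0$ directly from the representation \eqref{gamma_eps} for $\gamma_\epsilon$, using the three estimates of Proposition~\ref{proposition-I-epsilon} as the only substantive input. Since $\lambda_\epsilon\to\Lambda\in\mathcal{B}$ is bounded away from zero by Corollary~\ref{cor_asymp_eig_loc}, the term $-i\epsilon/(2\lambda_\epsilon)$ appearing in \eqref{gamma_eps} is manifestly $O(\epsilon)$ and vanishes; the whole task reduces to showing $\lim_{R\to\infty}\int_{-R}^R x I_\epsilon(x)\,dx \to -\overline{x}(\Lambda)$ as $\epsilon\to 0$.

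First I would split this integral at $|x|=L$, where $L=L(\Lambda)$ is the constant from Proposition~\ref{proposition-I-epsilon}. For the tail $L\le|x|\le R$, the key is to isolate the non-integrable leading behavior of $I_\epsilon$ by writing
\begin{equation*}
x I_\epsilon(x) = \frac{\epsilon}{2\pi\lambda_\epsilon x} + x\Bigl(I_\epsilon(x) - \frac{\epsilon}{2\pi\lambda_\epsilon x^2}\Bigr).
\end{equation*}
The first summand is odd in $x$, so its integral over the symmetric set $[-R,-L]\cup[L,R]$ vanishes exactly for every $R$. By part (1) of Proposition~\ref{proposition-I-epsilon}, the second summand is bounded pointwise by $K_0\epsilon/x^2$, which is absolutely integrable on $|x|\ge L$; its contribution is therefore $O(\epsilon)$ uniformly in $R$, so that $\lim_{R\to\infty}\int_{L\le|x|\le R} x I_\epsilon(x)\,dx = O(\epsilon)$.

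For the interior $|x|\le L$, parts (2) and (3) of the proposition furnish, respectively, a uniform-in-$\epsilon$ bound $|x I_\epsilon(x)|\le L K_1$ and almost-everywhere pointwise convergence $x I_\epsilon(x)\to x I_0(x)$. The Dominated Convergence Theorem then gives
\begin{equation*}
\lim_{\epsilon\to 0}\int_{-L}^L x I_\epsilon(x)\,dx = -\int_{x_-(\Lambda)}^{x_+(\Lambda)}\frac{x}{x_+(\Lambda)-x_-(\Lambda)}\,dx = -\frac{x_+(\Lambda)+x_-(\Lambda)}{2} = -\overline{x}(\Lambda).
\end{equation*}
Adding the three contributions ($-\overline{x}(\Lambda)$ from the core, and two $O(\epsilon)$ remainders from the tail and from the $-i\epsilon/(2\lambda_\epsilon)$ term) yields \eqref{equation-gamma-limit}.

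The substantive difficulty is hidden inside Proposition~\ref{proposition-I-epsilon}, whose proof (in Appendix~\ref{Appendix-Phase}) carries out the steepest-descent analysis of the eigenfunction integral \eqref{Phij_int}, accounts carefully for the two real turning points $x_\pm(\Lambda)$, and establishes the concentration of $|\Phi_\epsilon|^2$ on the classically allowed interval together with the precise $1/x^2$ far-field tail needed for the odd-symmetry cancellation above. Granting the proposition, the theorem itself amounts only to the symmetry-based tail estimate plus a one-line invocation of Dominated Convergence.
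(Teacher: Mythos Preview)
Your proof is correct and follows essentially the same route as the paper's own argument: the identical splitting at $|x|=L$, the same odd-symmetry cancellation of $\epsilon/(2\pi\lambda_\epsilon x)$ on the symmetric tail together with the $K_0\epsilon/x^2$ bound from part~(1), and the same application of Dominated Convergence on $[-L,L]$ using parts~(2) and~(3). Your closing paragraph also correctly identifies that all of the analytic work is contained in Proposition~\ref{proposition-I-epsilon}.
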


\begin{proof}
Since $\lambda_\epsilon\to\Lambda<0$ as $\epsilon\to 0$, from \eqref{gamma_eps} we see that
$\I\{\gamma_\epsilon\}\to 0$ as $\epsilon\to 0$, so it remains to analyze the real part.  Next, recalling $L=L(\Lambda)$ as defined in Proposition~\ref{proposition-I-epsilon} and using property~1 from the same yields, for each $R>L$,
\begin{equation}
\begin{split}
\left|\int_{\mathcal{I}_L^R} xI_\epsilon(x)\,dx\right|&\le
\left|\int_{\mathcal{I}_L^R}\frac{\epsilon\,dx}{2\pi\lambda_\epsilon x}\right| +\int_{\mathcal{I}_L^R}|x|\left|I_\epsilon(x)-\frac{\epsilon}{2\pi\lambda_\epsilon x^2}\right|\,dx\\
&=\int_{\mathcal{I}_L^R}|x|\left|I_\epsilon(x)-\frac{\epsilon}{2\pi\lambda_\epsilon x^2}\right|\,dx\\
&\le\int_{\mathcal{I}_L^R}\frac{K_0\epsilon\,dx}{x^2} \\
&\le \frac{2K_0\epsilon}{L}, \quad \mathcal{I}_L^R:=[-R,-L]\cup [L,R],
\end{split}
\end{equation}
an upper bound that is independent of $R>L$ and that tends to zero with $\epsilon$.  Hence
\begin{equation}
\lim_{\epsilon\to 0}\gamma_\epsilon = \lim_{\epsilon\to 0}\int_{-L}^L x I_\epsilon(x)\,dx,
\end{equation}
so applying the Lebesgue Dominated Convergence Theorem to calculate the latter limit, making use 
of properties~2 (integrable $\epsilon$-independent domination) and 3 (pointwise limit), gives
\begin{equation}
\lim_{\epsilon \to 0} \gamma_\epsilon = \int_{-L}^{L} x   I_{0}(x) \, dx =  
-\int_{x_-(\Lambda)}^{x_+(\Lambda)}\frac{x\,dx}{x_+(\Lambda)-x_-(\Lambda)}.
\end{equation}
Evaluating the integral then proves \eqref{equation-gamma-limit}.
\end{proof}

\begin{rmk}
Theorem~\ref{theorem-gamma-limit} shows that the limit of $\gamma_{\epsilon}$ is a purely real number. On the other hand, $\I \{\gamma_\epsilon\} = -i \epsilon/(2 \lambda_{\epsilon})$ cannot always be neglected outright.  For example, it plays a pivotal role in the small-dispersion analysis of certain determinantal $\tau$-functions arising in inverse-scattering theory \cite{MillerXu2011}.
\end{rmk}

A version of the asymptotic formula \eqref{equation-gamma-limit} characterizing the real part of $\gamma_\epsilon$ for small $\epsilon$ was hypothesized, based on asymptotic analysis of the BO inverse scattering problem, in \cite{MillerXu2011}.  Theorem~\ref{theorem-gamma-limit} provides a careful statement and is the first rigorous result on the asymptotic behavior of the phase constants, which (like the phase of the reflection coefficient) cannot be captured via trace formulae.

\section{Numerical Verification}\label{sec:example}

We illustrate the accuracy of our asymptotic formulae for the scattering data by comparing them to exact calculations in the case of the rKS potential:
\begin{equation}\label{example_IC}
u_0(x) = -\nu i \left(\frac{2}{x-i} + \frac{1}{x-(i+1)}\right) + c.c.
\end{equation}
for $\nu = \pm 1$.  The graph of $u_0$ is plotted in Figure~\ref{fig:u0_branches} for $\nu=1$, which confirms the Klaus-Shaw condition. 
This rKS potential is positive when $\nu = 1$ (negative when $\nu =-1$).
The bulk $\mathcal{B}$ consists of the interval $ (-\max\{u_0\},0)$ when $\nu =1$ with $\max\{u_0\} \approx 5.07308$ and $(0,-\min \{u_0\})$ when $\nu = -1$ with $\min\{u_0\} \approx - 5.07308$; see Section~\ref{subsec:Klaus-Shaw} and Definition~\ref{defn:bulk}.
We selected this rKS potential so that it is not even about any point, a property making the phase constants $\gamma_j$ nontrivial to calculate and interesting to compare with small-dispersion asymptotics. 

We compute the exact eigenvalues $\lambda_j$ for the potential \eqref{example_IC} with $\nu =1$ using the Evans function \eqref{Evans_function}.
To simplify computations only values of $\epsilon>0$ for which $ic_1/\epsilon\in\mathbb{Z}$ and $ic_2/\epsilon\in\mathbb{Z}$ are considered.
Since $ic_1$ and $ic_2$ are integers, this requires $\epsilon =1/m$ for some $m \in \mathbb{N}$. In this case, the integrals \eqref{lin_sys_Abar} defining the elements of the matrix $\mathbf{A}(\lambda)$ can be calculated explicitly by the Residue Theorem and the exact eigenvalues are thus obtained as the roots of a polynomial; see \cite{MillerWetzel2015}.

In Figure~\ref{fig:eigs_eps} we show the exact eigenvalues (black dots) and compare them (i) with  their uniform approximations obtained from Theorem~\ref{theorem-uniform} by solving $T_\epsilon(\lambda)=n$ for positive integers $n$ (overlaid squares) and (ii) with their local equally-spaced approximations described by \eqref{eq:eigenvalue-expansion} in Corollary~\ref{cor_asymp_eig_loc} (overlaid circles), under the scaling $(\lambda-\Lambda)/\epsilon$. 
As expected, the uniform approximation's squares track the black dots very well, while the circles do so best near the point $\lambda = \Lambda$. 
 In Figure~\ref{fig:eigs_hist} we present histograms of the exact eigenvalues to highlight their distribution on $\mathbb{R}^{-}$. The histograms clearly match better and better, in the limit $\epsilon\to 0$, the density $\rho_\mathrm{M}(\lambda)$ of Matsuno (Corollary~\ref{cor-Matsuno-density}), here normalized to have  integral equal to unity.

\begin{figure}[h!]
\begin{center}
\includegraphics[scale=1]{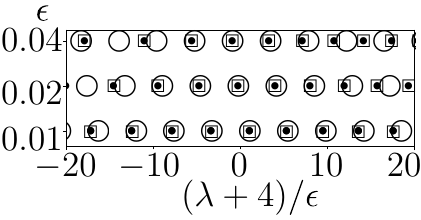}
\includegraphics[scale=1]{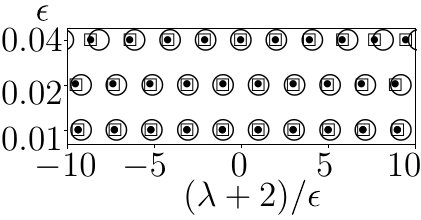}\\
\includegraphics[scale=1]{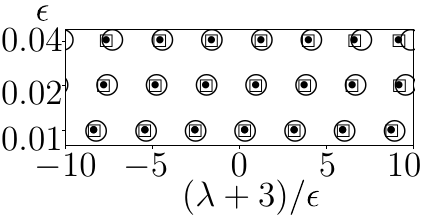}
\includegraphics[scale=1]{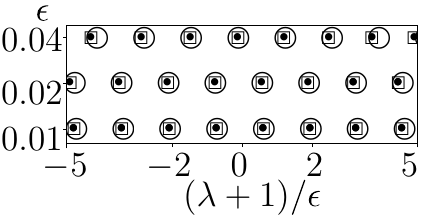}
\caption{
Exact eigenvalues for the rKS potential \eqref{example_IC} with $\nu = 1$ as a function of $\epsilon$ (Black dots) with the uniform approximations (squares) and the local approximations based at various values of $\Lambda\in\mathcal{B}$ (circles) overlaid. 
Note that the horizontal axis is the rescaled local coordinate $(\lambda-\Lambda)/\epsilon$ with $\Lambda=-4$ (top left), $\Lambda =-3$ (top right), $\Lambda=-2$ (bottom left), and $\Lambda =-1$ (bottom right).
} \label{fig:eigs_eps}
\end{center}
\end{figure}

\begin{figure}[h!]
\begin{center}
\includegraphics[scale=1]{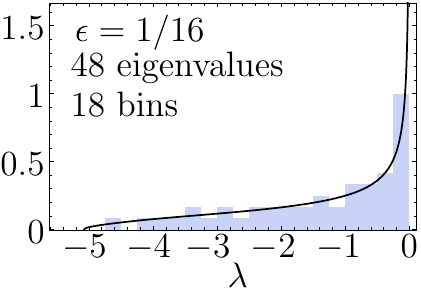}
\includegraphics[scale=1]{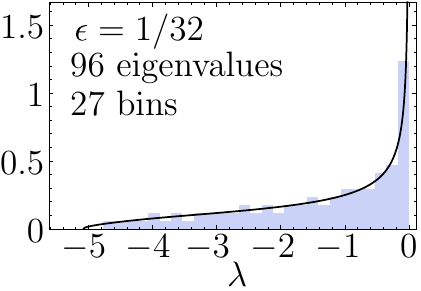}\\
\includegraphics[scale=1]{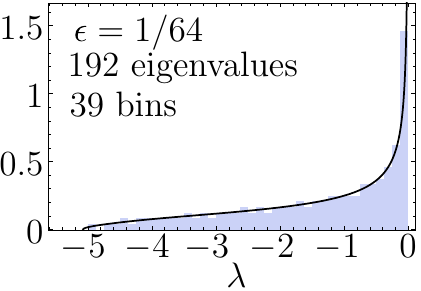}
\includegraphics[scale=1]{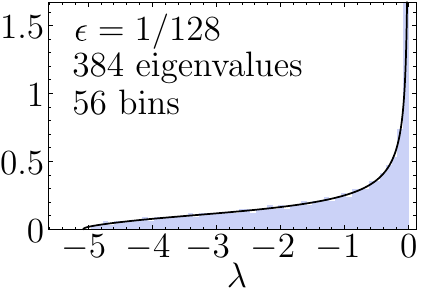}
\caption{Histograms of exact eigenvalues for the rKS potential \eqref{example_IC} with $\nu=1$ illustrating the distribution of eigenvalues. The limiting curve (solid black) is Matsuno
s density $\rho_\mathrm{M}(\lambda)$ normalized to unit mass.
} \label{fig:eigs_hist}
\end{center}
\end{figure}

In Figure~\ref{fig:phase_consts_asymp} we illustrate Theorem~\ref{theorem-gamma-limit} by plotting the real parts of the three exact phase constants $\gamma_\epsilon$ corresponding to eigenvalues $\lambda_\epsilon$ closest to $\Lambda=-1,-4,-5$ in the bulk $\mathcal{B}$.  The limiting values predicted by Theorem~\ref{theorem-gamma-limit} are indicated with dashed lines.  
The exact values of $\R\{\gamma_{\epsilon}\}$ were computed using the alternate formula for $\gamma_{\epsilon}$ presented in \cite{MillerWetzel2015}.

\begin{figure}[h!]
\begin{center}
\includegraphics[scale=1]{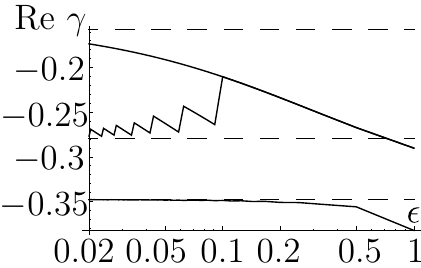}
\caption{
Three values of $\R\{\gamma_{\epsilon}\}$ (solid lines) as a function of $\epsilon$ for the rKS potential \eqref{example_IC} with $\nu = 1$. The limiting values (dashed lines) are from \eqref{equation-gamma-limit} evaluated for $\Lambda = -1,-4,-5$. The values $\R\{\gamma_{\epsilon}\}$ are computed at each $\epsilon$ with the exact eigenvalue $\lambda_{\epsilon}$ chosen to be the closest to $\Lambda$.
} \label{fig:phase_consts_asymp}
\end{center}
\end{figure}

The reflection coefficient is calculated exactly from \eqref{base_reflect_coeff} with the coefficients $v_p(\lambda)$ computed by solving the linear system \eqref{lin_sys_Abar}  for $\nu = \pm 1$. 
To aid in numerical computation we deform the path of integration in \eqref{base_reflect_coeff} into the complex plane to exploit the exponential decay of the integrand. 
We compare the exact reflection coefficient $\beta$ with its asymptotic approximation (Theorem~\ref{Reflect_asymptot}) in Figures~\ref{fig:beta_mag_neg} and \ref{fig:beta_phase_pos}. In Figure~\ref{fig:beta_mag_neg} we plot the normalized magnitude $\sqrt{\epsilon} |\beta|$ as a function of the spectral parameter $\lambda$.
The left panel ($\nu=-1$) shows that as $\epsilon\to 0$, $\sqrt{\epsilon}|\beta|$ indeed approaches the expected limit --- whose support is the bulk --- given by Matsuno's modulus formula (Corollary~\ref{beta_mag_asymp}).  The right panel ($\nu = 1$) shows that $\sqrt{\epsilon}|\beta|\to 0$ for positive rKS potentials as predicted by Theorem~\ref{Reflect_asymptot}.
In Figure~\ref{fig:beta_phase_pos} we plot the derivative of the phase of $\beta$ (computed indirectly using $\epsilon\I\{\beta'(\lambda)/\beta(\lambda)\}$) and compare with the corresponding limiting curve predicted by Theorem~\ref{Reflect_asymptot}, namely $-\theta_{+}'(\lambda)=-x_{+}(\lambda)$.

\begin{figure}[h!]
\begin{center}
\includegraphics[scale=1]{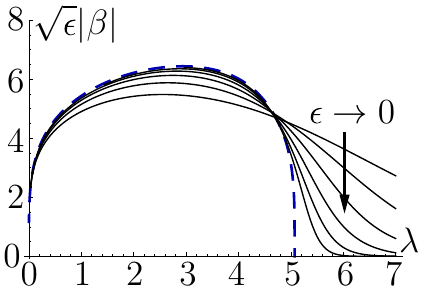}
\includegraphics[scale=1]{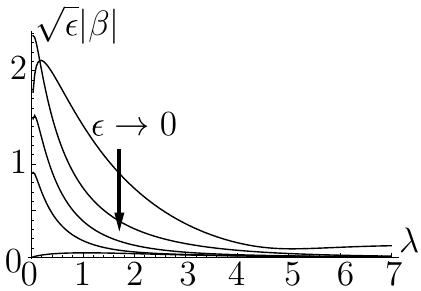}
\caption{Left: $\sqrt{\epsilon}|\beta|$ as a function of $\lambda$ for the rKS potential \eqref{example_IC} with $\nu = -1$ for $\epsilon = 2$, $1$, $1/2$, $1/4$, $1/8$, $1/16$ (solid black curves). The apparent limiting curve (dashed-blue) is obtained from Corollary~\ref{beta_mag_asymp}.  For $\epsilon$ and $\lambda$ both small the graphs become difficult to compute and are not plotted. 
Right: Same as the left panel but for $\nu = 1$ and $\epsilon = 4$, $2$, $7/4$, $13/8$, $3/2$, showing convergence to zero.
} \label{fig:beta_mag_neg}
\end{center}
\end{figure}

\begin{figure}[h!]
\begin{center}
\includegraphics[scale=1]{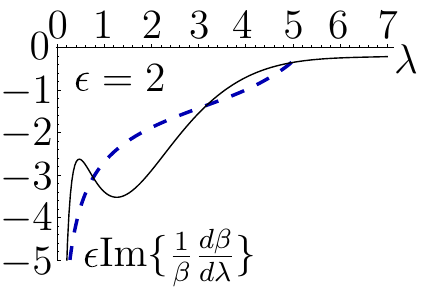}
\includegraphics[scale=1]{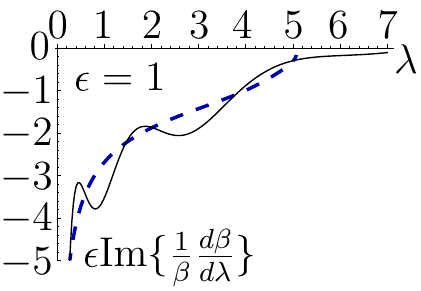}\\
\includegraphics[scale=1]{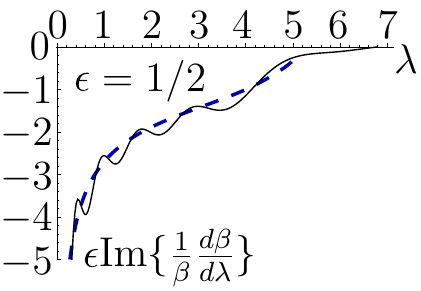}
\includegraphics[scale=1]{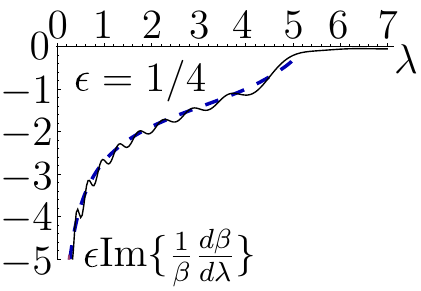}
\caption{Plots of the rescaled derivative of the phase for the function $\beta$ as a function $\lambda$. The limiting curve $-x_{+}(\lambda)$ (dashed-blue) is obtained from taking the derivative of the leading order phase $-\theta_{+}(\lambda)$ in \eqref{beta_asymp_form}.
} \label{fig:beta_phase_pos}
\end{center}
\end{figure}

\section*{Acknowledgments}
The authors were supported by the National Science Foundation under grant DMS-1206131 and wish to thank Sheehan Olver and Martin Strauss for useful conversations.

 \appendix
 \section*{Appendices}
 \renewcommand*{\thesection}{\Alph{section}}

\section{Proof of Proposition~\ref{prop-Reflection-contours}}\label{Appendix-Reflection}

Since $\I\{c_p\}>0$ for each $p$, the integrands for the elements of the matrix $\mathbf{A}(\lambda)$ and the vector $\mathbf{b}(\lambda)$ defined for $\lambda<0$ by \eqref{lin_sys_coeff} are integrable at all of the singular points $\{z_p\}_{p=1}^P$ in $\mathbb{C}^+$, so when $\lambda<0$, in each row for which the contour is $C_m=U_m^<$ (the general case) we may write the integrals in \eqref{lin_sys_coeff} in the form
\begin{equation}
\int_{C_m}e^{-ih(z;\lambda)/\epsilon}g(z)\,dz=\sum_{n=1}^{m}(1-e^{2\pi c_{n}/\epsilon})\int_{\ell_{n-1}(z_n)}e^{-ih(z;\lambda)/\epsilon}g(z)\,dz
\label{eq:collapsed-contours}
\end{equation}
where either $g(z)=(z-z_p)^{-1}$ or $g(z)=1$; see Figure~\ref{fig:U-combo-PhysicaD}.  Note that the term with $n=m$ in the sum in \eqref{eq:collapsed-contours} necessarily has a nonzero coefficient because otherwise $C_m=\ell_0(z_m)$ as $ic_m/\epsilon$ is a negative integer (the special case).  
Carrying out the analytic continuation of $\mathbf{A}(\lambda)$ and $\mathbf{b}(\lambda)$ through $\mathbb{C}^+$ and taking the boundary value on $\mathbb{R}^+$ we arrive at formulae for the matrix elements of $\mathbf{A}^>(\lambda)$ and $\mathbf{b}^>(\lambda)$ for $\lambda>0$, in which the infinite tails of all of the contours $\ell_{n-1}(z_n)$ for $n=1,\dots,m$ or $\ell_0(z_m)$ are all rotated through the left-half $z$-plane so that they now originate at $-i\infty$ to the left of the line $\R\{z\}=\R\{z_1\}$, and at the same time the integrand is analytically continued as necessary.
We define an intermediate set of contours $\{Y_m\}_{m=1}^P$ such that $Y_1$ connects $-i\infty$ (left of $\R\{z\}=\R\{z_1\}$) to $z_1$ and then for $m=2,\dots,P$, $Y_m$ connects $z_{m-1}$ to $z_m$, with $Y_m$ in the domain of $f(z)$.  Then the rotation of the integral over $\ell_{n-1}(z_n)$ in \eqref{eq:collapsed-contours} may be written as a linear combination of integrals with the same integrand over contours $Y_1,\dots,Y_n$, with the coefficient of the integral over $Y_n$ being $1$.  The same holds for the integral over $\ell_0(z_m)$ that arises in the special case.  The other coefficients in these combinations are nonzero factors arising from analytic continuation and the multi-valuedness of $f(z)$.  It is then clear that there is a lower triangular matrix $\mathbf{M}$ with nonzero diagonal entries $M_{mm}=(1-e^{2\pi c_m/\epsilon})$ in the general case and $M_{mm}=1$ in the special case such that for $m=1,\dots,P$, for $\lambda>0$ we have
\begin{equation}
\int_{\tilde{C}_m}e^{-ih(z;\lambda)/\epsilon}g(z)\,dz=\sum_{n=1}^m M_{mn}\int_{Y_n}e^{-ih(z;\lambda)/\epsilon}g(z)\,dz.
\end{equation}

Next we show how to exchange the contours $\{Y_m\}_{m=1}^P$ for a different set $\{W_m\}_{m=1}^P$ more amenable to asymptotic analysis and defined by the following procedure.
Consider the level curves of $R(z):=\R\{-ih(z;\lambda)\}=L$ for levels $L$ increasing from $L=-\infty$ to a sufficiently positive value.  Under the assumption that $\R\{c_p\}=0$ for all $p$, $R(z)$ is a harmonic function in the domain $\mathbb{C}\setminus\{z_1,\dots,z_P,z_1^*,\dots,z_P^*\}$.  The landscape of $R(z)$ can be visualized as being based upon a plane with slope $\lambda>0$ in the positive imaginary direction; this is the contribution of the term $\lambda z$ in $h(z;\lambda)$.  Superimposed on the plane are $P$ infinite mountain peaks at points $z_1^*,\dots,z_P^*$ in the lower half-plane and an abyss consisting of bottomless pits at the corresponding points $z_1,\dots,z_P$ in the upper half-plane.  
The latter are all features contributed from the term $f(z)$ in $h(z;\lambda)$.  If we visualize the level $L$ as the sea-level on the landscape of $R(z)$, then when $L$ is large and negative there will be small distinct lakes formed in each of the $P$ pits and otherwise the upper half-plane will be dry land.  Under a steady rain the sea level $L$ will rise, and at the critical values of $L=R(z)$ corresponding to the critical points of $h$ with positive imaginary parts, some lakes will fuse; the number of lakes thus decreases in steps as $L$ increases.  Meanwhile, the tide steadily rises from the lower half-plane, surrounding the mountains and forming islands that become isolated from each other at critical values corresponding to the critical points of $h$ with negative imaginary parts.  Eventually the lakes in the upper half-plane will begin to merge with the ocean as well as with each other, and ultimately (for large enough $L$) the shoreline in the upper half-plane will consist of a single curve stretching from horizon $(\R\{z\}=-\infty)$ to horizon ($\R\{z\}=+\infty$) as all lakes will have been subsumed by the rising tide.  Note that when $L=0$ the entire real axis is a component of the level curve.

Since the values of $R(z)$ at its critical points in $\mathbb{C}^+$ are distinct and nonzero, as the sea-level $L$ increases through a critical value, exactly two lakes will fuse, or a single lake will fuse with the ocean, with the point of fusion in each case being a critical point of $R$.  If $L=0$ is a critical value of $R$, then there are at most two real critical points $x_\pm(\lambda)$, distinct for $\lambda\in\mathcal{B}$, and at $L=0$ these are distinct fusion points where a single lake meets the ocean.   At the lowest critical level $L$, the fusing lakes each contain exactly one of the pits $\{z_p\}_{p=1}^P$ (or none, in the special case of fusion with the ocean), and we define a contour $W$ connecting the two pits (or connecting one pit with the point $-i\infty$ in the ocean) and lying below sea-level except at the critical point which it crosses locally in the steepest descent direction (if the lowest critical level is $L=0$ we make the connection via $x_-(\lambda)\in\mathbb{R}$).  Increasing $L$ then completely encloses $W$ within the new lake or ocean formed by the fusion.  Suppose that at some non-critical level $L$, each lake contains a chain of contours $W$ each of which connects a pair of pits (or for the ocean, exactly one of the links $W$ connects a pit with $-i\infty$) such that all enclosed pits are vertices of the chain.  This inductive hypothesis is certainly true for levels $L$ just above the lowest critical level $L$.  Increasing $L$ to the next higher critical level, define a new contour $W$ to pass over the fusion point locally in the steepest descent direction (choosing $x_-(\lambda)$ as the fusion point if $L=0$) but otherwise lying below the sea-level and connecting one of the finite endpoints of the chain (a pit) in each fusing lake or ocean.  Thus two chains are joined into a longer chain that is enclosed within the formed lake/ocean for slightly larger $L$, and the induction argument closes.  

Once all of the lakes have been subsumed by the ocean, we have assembled a single chain of contours $W$ with one finite endpoint (a pit), one infinite endpoint at $-i\infty$, and whose internal vertices are all of the remaining pits.  We now label the links of the chain in order from the infinite end to the finite end as $W_1,\dots,W_P$, and we orient each link consistent with this labeling.  Each contour $W_m$ contains exactly one critical point $z=x_m$ of $h(z;\lambda)$, the global maximizer of $R$ restricted to $W_m$, with $x_m\in\mathbb{C}^+$ unless both $m=1$ and $\lambda\in\mathcal{B}$, in which case $x_1=x_-(\lambda)\in\mathbb{R}$.  We concretely define the multi-valued function $h(z;\lambda)$ on $W_m$ by arbitrarily selecting a branch at $z=x_m$ and analytically continuing along $W_m$.  The last step is to observe that we may express integrals of $e^{-ih(z;\lambda)/\epsilon}g(z)$ along the contours $\{Y_m\}_{m=1}^P$ as linear combinations of integrals of the same integrand, with the branch chosen as indicated above, along the contours $\{W_m\}_{m=1}^P$, and vice-versa.  
But this is easy, because both chains of contours $\{W_m\}_{m=1}^P$ and $\{Y_m\}_{m=1}^P$ span the same set of nodes $\{z_p\}_{p=0}^P$, $z_0:=-i\infty$, so taking into account possible monodromy of the integrand, there indeed exists an invertible matrix $\mathbf{K}$ such that
\begin{equation}
\int_{W_n}e^{-ih(z;\lambda)/\epsilon}g(z)\,dz=\sum_{n=1}^P K_{mn}\int_{Y_n}e^{-ih(z;\lambda)/\epsilon}g(z)\,dz.
\end{equation}

Combining our results, we see that with $\mathbf{N}:=\mathbf{K}\mathbf{M}^{-1}$ the modified system $\mathbf{N}\mathbf{A}^>(\lambda)\mathbf{v}(\lambda)=\mathbf{N}\mathbf{b}^>(\lambda)$ having contours $\{W_m\}_{m=1}^P$ in place of $\{\tilde{C}_m\}_{m=1}^P$ is indeed suitable as claimed.

\section{Proof of Proposition~\ref{Wm_prop}}\label{Appendix-Evans}

We prove Proposition~\ref{Wm_prop} in the most interesting case that $\lambda\in\mathcal{B}$ and hence there exist two real critical points $x_\pm(\lambda)$, first associating a certain tree graph to the level sets of $R(z):=\R\{-ih(z;\lambda)\}$ generated by the
 potential $u_0$. The description of the level sets $R(z) = L$ is analogous to that presented in Appendix~\ref{Appendix-Reflection}, but the landscape is inverted, sloping downward as $\I\{z\}$ increases, and with ``pits'' and ``mountains'' interchanged.

\subsection{Construction of a Tree}
To construct a unique tree associated to the initial condition \eqref{u0_def_frac}, we consider the level sets $R(z) = L$ as $L$ is varied.
For large $L>0$, the level sets of $R$ in $\mathbb{C}^+$ will consist of $P$ disjoint loops each bounding a single ``island'' containing exactly one of the poles $\{z_p\}_{p=1}^P$, each of which is an infinite mountain peak 
since $i c_p/\epsilon >0$, and each of which is associated with a distinct leaf of the tree.  The $P$ islands will be associated with edges in the tree terminating at the leaves.
As $L$ is decreased to the first (largest) critical value, two islands will fuse 
at a critical point of the function $R(z)$ to form a single island containing two of the points $\{z_p\}_{p=1}^P$ as $L$ is decreases slightly further.
We associate these fusion points of islands with tree nodes having two outward-directed edges representing the two fusing islands and one inward-directed edge representing the newly-formed larger island.  
We include an additional node with the label $x_{\pm}$ for the island that merges onto the real line\footnote{Since the potential is assumed to be Klaus-Shaw, the level sets (islands) merge onto the real line only at the points $x_{\pm}$.}  (part of the level set $L=0$); for $L<0$ a ``continent'' invades $\mathbb{C}^+$ that may fuse with islands if there are any negative critical values of $R$. 
For convenience, we include an additional edge in the tree connecting the node associated with the lowest critical value $L$ to a root node labeled ``$-\infty$'' at height $L=-\infty$. 
Finally, we label the leaves of the tree  $\ell_1, \ldots, \ell_P$ in order of decreasing ``height'' $L$ of their parent nodes (breaking ties arbitrarily) and we label each node of the tree $x_p$ to correspond with its associated critical point.
The labeling of the leaves induces a permutation matrix $\mathbf{P}$ mapping the indices of the poles $\{z_p\}_{p=1}^P$ to those of the corresponding leaves.
We have thus constructed a rooted tree with $P$ internal nodes,
$P$ leaves $\{\ell_p\}_{p=1}^{P}$ at height $L = \infty$, and a root node $-\infty$ at height $L=-\infty$; see tree $A$ in Figure~\ref{fig:tree_pruning} for an example.

\subsection{Tree Pruning Algorithm}
We obtain suitable contours $\{W_m\}_{m=1}^P$ from the tree graph by using a ``pruning'' algorithm to identify all but one of the leaves of the tree (associated to poles $\{z_p\}_{p=1}^P$ via $\mathbf{P}$) with corresponding nodes (associated to the critical points $\{x_p\}_{p=1}^{P-1}$). 
In a manner to be explained, it is thus possible to associate a critical point and hence a level set which will be partly enclosed by the contour $W_m$; see the bottom panels of Figures~\ref{fig:ValleyHillPlot} and \ref{fig:ValleyHillPlotA}.

The first step is to chop (divide) the tree at the node $x_{\pm}$, with the split node $x_{\pm}$ becoming simultaneously a leaf of one subtree and the root of the other. We are thus left with two trees: one having $Q$ leaves, $Q-1$ internal nodes, and a root node labeled $x_{\pm}$, and another having $P-Q+1$ leaves (one of which is labeled $x_{\pm}$), $P-Q$ internal nodes, and a root node labeled $-\infty$.
The leaves of the subtree with root $x_\pm$ are necessarily those labeled $\ell_1,\dots,\ell_Q$ (all descendants of an internal node from a level $L>0$) while the leaves of the other subtree are $x_\pm$ and $\ell_{Q+1},\dots,\ell_P$.
The tree-pruning algorithm for the subtree with root $x_\pm$ is the following.

For $m = 1, \ldots, Q-1$
\begin{itemize}
\item Prune the leaf $\ell_m$ and remove its incident edge and parent node (joining the two remaining edges incident on the removed node).
\item Define $s_m$ as the formal sum of leaves $\ell_k$ already pruned from the subtree that were originally descendants of the removed node and associate $s_m$ to the removed node (and hence to a critical point $x_m$).
\end{itemize}

Note that for each $m=1,\dots,Q-1$ the sum $s_m$ contains $\ell_m$ and possibly some of the leaves $\ell_1,\dots,\ell_{m-1}$, but no leaves with indices exceeding $m$, and that only the leaf $\ell_Q$ does not appear in any sum.  We now repeat the algorithm on the subtree with root $-\infty$, with the index $m$ running over the range $Q+1,\dots,P$.  Again the sums $s_m$ will have a lower-triangular structure with $\ell_m$ appearing in the sum $s_m$, but the leaf $x_\pm$ will not appear in any sum.
After the algorithm is applied to both subtrees, we set $s_Q=\ell_1+\cdots + \ell_P$ and associate it with the node $x_{\pm}$.  

In this manner, the algorithm associates the formal sums $\{s_m\}_{m=1,\neq Q}^{P}$ uniquely to the complex critical points $\{x_{p}\}_{p=1}^{P-1}$.   Moreover, the coefficients in the formal sums $\{s_m\}_{m=1}^P$ form a $P\times P$ zero-one matrix $\mathbf{K}$ with block structure:
\begin{equation}
\mathbf{K}=\begin{bmatrix}\mathbf{L}_{Q-1} & \mathbf{0}_{(Q-1)\times (P-Q+1)}\\
\hdotsfor{2}\\
\mathbf{0}_{(P-Q)\times Q} & \mathbf{L}_{P-Q}\end{bmatrix}
\end{equation}
where $\mathbf{L}_{Q-1}$ and $\mathbf{L}_{P-Q}$ are square lower-triangular matrices of the indicated dimension with ones on the diagonal, and where the dots stand for a single row of $P$ ones:  $\mathbf{1}^{\mathsf{T}}$.  The matrix $\mathbf{K}$ is obviously invertible.

We now relate the formal sums $\{s_m\}_{m=1}^{P}$ to contours in the complex plane as follows. First consider $m\neq Q$, and let $x_m$ be the complex critical point associated with $s_m$; the level curve $R(z)=R(x_m)=L$ is therefore a ``figure-eight'' shoreline of two fusing islands (or one island fusing with the continent), and one lobe of the figure-eight contains exactly the poles $z_p$ corresponding to the leaves in the formal sum $s_m$.  We then take $W_m$ to be the contour consisting of (i) a short arc of the steepest descent path for $R(z)$ over the critical point $x_m$ and descending to the level $L-\delta$ for some small $\delta>0$, (ii) arcs of the level curve $R(z)=L-\delta$ meeting the arc from (i) and ending on one or two branch cuts of $f$ emanating from singularities associated to leaves in the formal sum $s_m$, and (iii) two vertical rays along those branch cuts going to $i\infty$.  
In the case that the figure-eight has an unbounded lobe (the continent), the algorithm guarantees that the bounded lobe will always be selected.  
So, the contour $W_m$ thus contains a unique critical point of $h$, which is the maximizer of $R$ on the contour.  The exceptional contour $W_Q$ is then related to the corresponding formal sum $s_Q=\ell_1,\dots,\ell_P$ by being defined to be homotopic to $U_P^<$ (thus enclosing \emph{all} of the singularities $\{z_p\}_{p=1}^P$) and to pass over the real critical points $x_\pm(\lambda)$ such that they act as simultaneous maximizers.

Finally, let $\tilde{U}_p$, $p=1,\dots,P$, be a U-shaped contour like those in Figure~\ref{fig:U-combo-PhysicaD} but enclosing only the point $z_p$ along with its branch cut.
The obvious relation $C_m=U_m^<=\tilde{U}_1+\cdots+\tilde{U}_m$ for $m=1,\dots,P$ clearly defines
a lower-triangular invertible zero-one matrix $\mathbf{M}$.
The contours $\{W_m\}_{m=1}^P$ therefore appear in the rows of the matrix $\tilde{\mathbf{A}}(\lambda)=\mathbf{N}\mathbf{A}(\lambda)$, where $\mathbf{N}:=\mathbf{K}\mathbf{P}\mathbf{M}^{-1}$ is invertible, and as each contour $W_m$ contains exactly one dominating critical point, with the exception of $W_Q$ which contains two both at level $L=0$, the proof is complete.

\begin{figure}[h!]
\begin{center}
\includegraphics[scale=1]{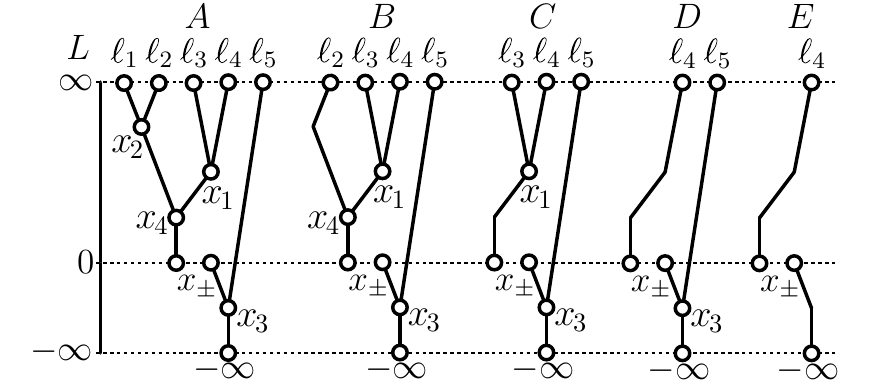}
\caption{
Example tree ($A$) and the tree pruning algorithm ($B$--$E$). Note that tree $A$ has been chopped at the node $x_{\pm}$ which is ``shared'' by the subtrees. The pruning algorithm on the subtree with root $x_\pm$ gives rise to the following assignments:
$x_2\mapsto (s_1 = \ell_1)$, $x_4\mapsto (s_2 = \ell_1 + \ell_2)$, and $x_1\mapsto (s_3 =  \ell_3)$.  Similarly, the pruning algorithm on the subtree with root $-\infty$ gives rise to the assignment $x_3\mapsto (s_5 = \ell_5)$. Finally, $x_\pm\mapsto (s_4 = \ell_1 + \ell_2 + \ell_3 + \ell_4 + \ell_5)$.
 } \label{fig:tree_pruning}
\end{center}
\end{figure}

\subsection{Elementary Examples}
Consider the positive rKS potential $u_0$ with $P=3$ and data
\begin{equation}\label{u0_sample1}
\begin{split}
\left(z_1,z_2,z_3\right)&=\tfrac{1}{2}\left(2i,2+4i,-2+i\right)\\
\left(c_1,c_2,c_3\right) &= -\tfrac{1}{2}i\left(2,1,4\right),
\end{split}
\end{equation}
and let $\lambda = -1$.  While it is not obvious from the data \eqref{u0_sample1}, from a graph of $u_0$ one can see that it satisfies the Klaus-Shaw condition.
The scheme for selection of appropriate contours $\{W_m\}_{m=1}^3$ in this case is illustrated in
Figure~\ref{fig:ValleyHillPlot}.
Obviously $W_j$ is homotopic on the domain of analyticity of $f$ to $C_j=U_j^<$ (see Figure~\ref{fig:U-combo-PhysicaD}), so here $\mathbf{N}=\mathbb{I}$ and hence $\tilde{\mathbf{A}}(\lambda)=\mathbf{A}(\lambda)$ and $\tilde{D}(\lambda)=D(\lambda)$.  Note that as desired, only $W_{P=3}$ traverses more than one critical point at the same level of $\R\{-ih(z;-1)\}$.

\begin{figure}[h!]
\begin{center}
\includegraphics[scale=1]{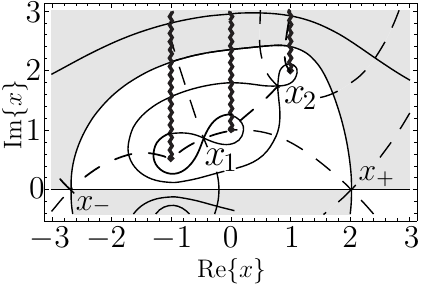}
\includegraphics[scale=1]{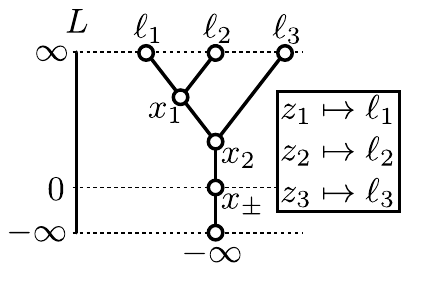}\\
\includegraphics[scale=1]{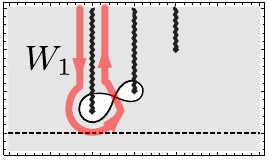}
\includegraphics[scale=1]{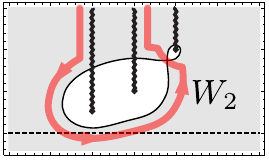}
\includegraphics[scale=1]{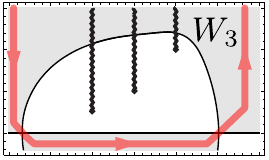}
\caption{
Top left panel: for the $P=3$ positive rKS potential $u_0$ with data given by \eqref{u0_sample1}, branch cuts of $f$ emanating from the branch points $\{z_1,z_2,z_3\}$ are shown with zigzagged lines, the solid curves are the levels of $\R\{-ih(x;-1)\}$, and the dashed curves are the levels of $\I\{-ih(x;-1)\}$ (steepest descent/ascent). The intersection points mark the critical points of $h(x;-1)$ which are numbered according to the values $\R\{-ih(x_j;-1)\}$ except for those ($x_\pm$) on the real axis. The domain $\R\{-ih(z;-1)\} < 0$ is shaded.
Top right panel: the tree associated with $h(x;-1)$, with the permutation $\mathbf{P}$ shown in the inset. 
Bottom panels: the contours $W_m\equiv U_m^<=C_m$ for which the integrals $\tilde{A}_{mp}(-1)$ are exponentially dominated by a contribution from neighborhoods of the critical point(s) over which the contour passes.  The level curve $\R\{-ih(z;-1)\}=\const$ containing the traversed critical point(s) is plotted, and the domain $\R\{-ih(z;-1)\}<\const$ is shaded in each case.
} \label{fig:ValleyHillPlot}
\end{center}
\end{figure}

To see that such a trivial outcome is not generally the case, we consider now replacing \eqref{u0_sample1} with 
\begin{equation}
\label{u0_sample2}
\begin{split}
\left(z_1,z_2,z_3\right)&=\tfrac{1}{2}\left(-4+6i,2i,2+i\right)\\
\left(c_1,c_2,c_3\right) &= -\tfrac{1}{3}i\left(3,1,3\right),
\end{split}
\end{equation}
again yielding a positive rKS potential with $P=3$.  See Figure~\ref{fig:ValleyHillPlotA}.  The matrix $\mathbf{N}$ returned by the algorithm is
\begin{equation}
\mathbf{K}\mathbf{P}\mathbf{M}^{-1}=
\begin{bmatrix}1 & 0 & 0\\1 & 1 & 1\\0 & 0 & 1\end{bmatrix}
\begin{bmatrix}0 & 0 & 1\\0 & 1 & 0\\1 & 0 & 0\end{bmatrix}
\begin{bmatrix}1 & 0 & 0\\1 & 1 & 0\\1 & 1 & 1\end{bmatrix}^{-1}=
\begin{bmatrix}
0& -1 & 1\\
0 & 0 & 1\\
1 & 0 & 0
\end{bmatrix}.
\end{equation}
\begin{figure}[h!]
\begin{center}
\includegraphics[scale=1]{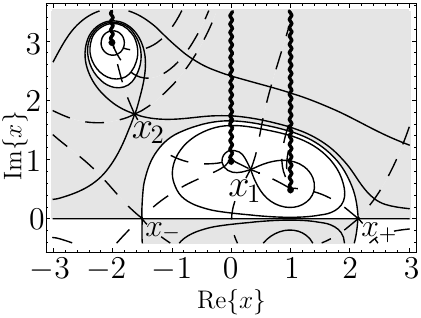}
\includegraphics[scale=1]{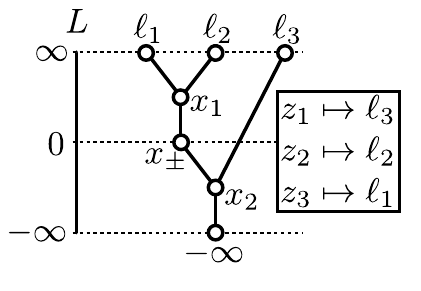}\\
\includegraphics[scale=1]{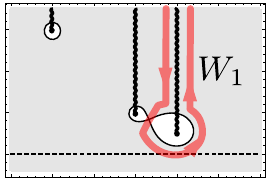}
\includegraphics[scale=1]{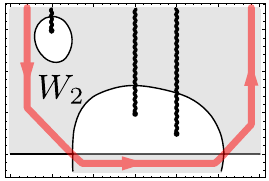}
\includegraphics[scale=1]{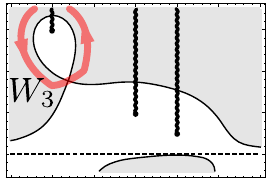}
\caption{Same as Figure~\ref{fig:ValleyHillPlot} but for the positive rKS potential with \eqref{u0_sample2}. } \label{fig:ValleyHillPlotA}
\end{center}
\end{figure}

\section{Proof of Proposition~\ref{proposition-I-epsilon}}\label{Appendix-Phase}

Recall the vector $\boldsymbol{\phi}(\lambda)$ characterized for an eigenvalue $\lambda<0$ by the
rank $P-1$ homogeneous system $\mathbf{A}(\lambda)\boldsymbol{\phi}(\lambda)=\mathbf{0}$ and the normalization condition $\mathbf{1}^\mathsf{T}\boldsymbol{\phi}(\lambda)=\lambda$, where $\mathbf{1}^\mathsf{T}:=(1,1,\dots,1)$.  Recall also the eigenvalue $\lambda_\epsilon$ nearest $\Lambda\in\mathcal{B}$.
\begin{lem}\label{Phi_eps_bound}
For each $\Lambda\in\mathcal{B}$, the elements $\phi_p(\lambda_\epsilon)$, $p=1,\dots,P$ of the vector $\boldsymbol{\phi}(\lambda_\epsilon)$ satisfy
\begin{equation}\label{Phij0_def}
\lim_{\epsilon \to 0}\phi_{p}(\lambda_\epsilon)=\phi_p^0:=
 \Lambda \underset{z=z_p}{\Res}\frac{\Psi^{-}(z;\Lambda)}{z-x_{-}(\Lambda)}
= \Lambda \frac{ \prod_{l =1}^{P-1} \left(z_p-x_l(\Lambda) \right)}{ \prod_{l \neq p} \left(z_p - z_l\right)},
\end{equation}
where $\Psi^{-}$ is defined by \eqref{Psi_defn}.
\end{lem}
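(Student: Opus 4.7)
The plan is to reduce the problem to a limiting Cauchy-type linear system via the suitable modification $\tilde{\mathbf{A}}(\lambda)=\mathbf{N}\mathbf{A}(\lambda)$ supplied by Proposition~\ref{Wm_prop}. Since $\mathbf{N}$ is invertible, $\boldsymbol{\phi}(\lambda_\epsilon)$ spans the one-dimensional kernel of $\tilde{\mathbf{A}}(\lambda_\epsilon)$ and is pinned down by the normalization $\mathbf{1}^{\mathsf{T}}\boldsymbol{\phi}(\lambda_\epsilon)=\lambda_\epsilon$.

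First I would apply the steepest-descent asymptotic \eqref{Dm_asymp} to each of the $P-1$ rows $m\ne Q$ of $\tilde{\mathbf{A}}(\lambda_\epsilon)$ --- those associated with a single dominant complex critical point. Because Corollary~\ref{cor_asymp_eig_loc} gives $|\lambda_\epsilon-\Lambda|=O(\epsilon)$, Remark~\ref{remark-steepest-descent-lambda-perturb} allows evaluation of the saddle data at $\Lambda$, producing
\begin{equation*}
\tilde{A}_{mp}(\lambda_\epsilon)=c_m(\lambda_\epsilon)\left(\frac{1}{x_m(\Lambda)-z_p}+O(\epsilon)\right),\quad m\ne Q,
\end{equation*}
with a nonvanishing row-dependent prefactor $c_m(\lambda_\epsilon)$ independent of $p$. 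Dividing row $m$ by $c_m(\lambda_\epsilon)$ and appending $\mathbf{1}^{\mathsf{T}}$ as the final row yields a $P\times P$ system $\mathbf{M}_\epsilon\boldsymbol{\phi}(\lambda_\epsilon)=\lambda_\epsilon\mathbf{e}_P$ whose matrix converges to $\mathbf{M}^0$, a Cauchy block with entries $1/(x_m(\Lambda)-z_p)$ in its first $P-1$ rows capped by $\mathbf{1}^{\mathsf{T}}$.

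Next I would verify that $\mathbf{M}^0$ is invertible by partial-fraction reasoning: any kernel vector $\boldsymbol{\phi}$ corresponds to the rational function $g(x):=\sum_{p=1}^P\phi_p/(x-z_p)$, which has simple poles at $\{z_p\}$, vanishes at $x_1(\Lambda),\dots,x_{P-1}(\Lambda)$, and must decay faster than $1/x$ at infinity (by the normalization row); hence $g\equiv 0$ and $\boldsymbol{\phi}=\mathbf{0}$. Invertibility then gives $\boldsymbol{\phi}(\lambda_\epsilon)\to\boldsymbol{\phi}^0:=(\mathbf{M}^0)^{-1}(\Lambda\mathbf{e}_P)$, the unique solution of
\begin{equation*}
\sum_{p=1}^P\frac{\phi_p^0}{x_m(\Lambda)-z_p}=0\;\;(m=1,\dots,P-1),\quad\sum_{p=1}^P\phi_p^0=\Lambda.
\end{equation*}
Imposing the infinity behavior $g(x)\sim\Lambda/x$ on $\boldsymbol{\phi}^0$'s rational function gives
\begin{equation*}
g(x)=\Lambda\,\frac{\prod_{m=1}^{P-1}(x-x_m(\Lambda))}{\prod_{p=1}^P(x-z_p)}=\Lambda\,\frac{\Psi^{-}(x;\Lambda)}{x-x_{-}(\Lambda)},
\end{equation*}
and taking residues at $z=z_p$ produces the stated formula.

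The hard part will be justifying the omission of the $Q$-th row of $\tilde{\mathbf{A}}(\lambda_\epsilon)$, which carries two stationary-phase contributions from $x_\pm(\Lambda)$ and would impose on $\boldsymbol{\phi}^0$ a nontrivial relation of the form $a_+g(x_+(\Lambda))+a_-g(x_-(\Lambda))=0$. By the calculation from the proof of Proposition~\ref{prop-Dtilde-expand}, this relation reduces at leading order to the quantization condition $\sin(\pi T_\epsilon(\lambda_\epsilon))=0$, which is automatically satisfied up to $O(\epsilon)$ at the chosen $\lambda_\epsilon$ by Theorem~\ref{theorem-uniform}. Hence the $Q$-th row is asymptotically \emph{consistent with}, rather than independent of, the $P-1$ complex-critical-point equations --- in accord with the exact rank identity $\mathrm{rank}(\mathbf{A}(\lambda_\epsilon))=P-1$.
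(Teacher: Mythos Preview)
Your proposal is correct and follows essentially the same route as the paper: pass to the suitable matrix $\tilde{\mathbf{A}}(\lambda_\epsilon)=\mathbf{N}\mathbf{A}(\lambda_\epsilon)$, normalize the $P-1$ single-saddle rows by their scalar prefactors, append the normalization row $\mathbf{1}^{\mathsf{T}}$, and show that the resulting $P\times P$ system converges to an invertible limit whose unique solution is $\boldsymbol{\phi}^0$. Your rational-function argument for invertibility and for identifying $\boldsymbol{\phi}^0$ (build $g(x)=\sum_p\phi_p/(x-z_p)$, pin its zeros and leading coefficient, read off residues) is a clean repackaging of the paper's contour-integral/Residue-Theorem verification; the content is the same.

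One remark: your final paragraph labels the omission of the $Q$-th row ``the hard part,'' but in fact no justification is needed there. Since $\boldsymbol{\phi}(\lambda_\epsilon)$ satisfies \emph{all} $P$ rows of $\tilde{\mathbf{A}}(\lambda_\epsilon)\boldsymbol{\phi}=\mathbf{0}$ exactly, it certainly satisfies the $P-1$ rows $m\neq Q$; together with the normalization this already gives $P$ equations whose limiting coefficient matrix $\mathbf{M}^0$ you have shown is invertible. That alone forces $\boldsymbol{\phi}(\lambda_\epsilon)\to\boldsymbol{\phi}^0$. The consistency check you sketch---that the $Q$-th row collapses to the quantization condition $\sin(\pi T_\epsilon(\lambda_\epsilon))=O(\epsilon)$---is correct and pleasant, but it is a corollary rather than an ingredient of the proof.
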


\begin{proof}
By taking appropriate linear combinations of the rows and multiplying on the left by a suitable invertible diagonal matrix $\mathbf{D}(\lambda_\epsilon)$ to be determined, we replace the system $\mathbf{A}(\lambda_\epsilon)\boldsymbol{\phi}(\lambda_\epsilon)=\mathbf{0}$ by the system $\mathbf{D}(\lambda_\epsilon)\tilde{\mathbf{A}}(\lambda_\epsilon)\boldsymbol{\phi}(\lambda_\epsilon)=\mathbf{0}$, maintaining the rank as $P-1$.  Next,
we replace the (redundant) $P^\text{th}$ row of $\mathbf{D}(\lambda_\epsilon)\tilde{\mathbf{A}}(\lambda_\epsilon)$ with $\mathbf{1}^\mathsf{T}$, resulting in a matrix we denote as $\hat{\mathbf{A}}(\lambda_\epsilon)$.  Since $\mathbf{1}^\mathsf{T}\boldsymbol{\phi}(\lambda_\epsilon)=\lambda_\epsilon$,  $\boldsymbol{\phi}(\lambda_\epsilon)$ is
a solution of the square inhomogeneous linear system $\hat{\mathbf{A}}(\lambda_\epsilon)\boldsymbol{\phi}(\lambda_\epsilon)=\hat{\mathbf{b}}(\lambda_\epsilon):=(0,0,\dots,0,\lambda_\epsilon)^\mathsf{T}$.  We prove the lemma by showing that for small enough $\epsilon$ this system has full rank $P$ and by explicitly constructing the unique limiting solution.  

Recalling \eqref{Dm_asymp} and Remark~\ref{remark-steepest-descent-lambda-perturb}, we see that upon choosing 
\begin{equation}
D_m(\lambda_\epsilon):=e^{ih(x_m(\Lambda);\lambda_\epsilon)/\epsilon - i\theta_m(\Lambda)}
\sqrt{\frac{|h''(x_m(\Lambda);\Lambda)|}{2\pi\epsilon}}\neq 0 
\end{equation}
for $m=1,\dots,P-1$, we obtain $\lim_{\epsilon\to 0}\hat{\mathbf{A}}(\lambda_\epsilon)=\hat{\mathbf{A}}^0$, where
\begin{equation}
\hat{A}^0_{mp}=\frac{1}{x_m(\Lambda)-z_p},\quad m=1,\dots,P-1,\quad p=1,\dots,P
\end{equation}
while the last row of $\hat{\mathbf{A}}^0$ is $\mathbf{1}^\mathsf{T}$ (unchanged in the limit).  Also $\lim_{\epsilon\to 0}\hat{\mathbf{b}}(\lambda_\epsilon)=\hat{\mathbf{b}}^0:=(0,0,\dots,0,\Lambda)^\mathsf{T}$.  By a direct computation one sees that $\det(\hat{\mathbf{A}}^0)\neq 0$ because the critical points $x_1(\Lambda),\dots,x_{P-1}(\Lambda)$ are simple and hence distinct, as are the poles $z_1,\dots,z_P$.  

It therefore only remains to show that the unique solution of the limiting system $\hat{\mathbf{A}}^0\boldsymbol{\phi}^0=\hat{\mathbf{b}}^0$ is given by \eqref{Phij0_def}, a fact that is a simple consequence of the Residue Theorem; see \cite{Schechter59} for a complementary approach.
Indeed, for $m=1,\dots,P-1$, 
\begin{equation}
\begin{split}
\sum_{p=1}^{P} \hat{A}^0_{mp}\phi^0_{p}
&= - \Lambda \sum_{p=1}^{P}  \frac{1}{z_p - x_m(\Lambda) } \underset{z=z_p}{\Res}\frac{\Psi^{-}(z;\Lambda)}{z-x_{-}(\Lambda)}\\
&= -\frac{\Lambda}{2 \pi i} \oint_C \frac{\Psi^{-}(z;\Lambda)}{\left(z-x_m(\Lambda) \right) \left(z-x_{-}(\Lambda)\right)} \, dz 
\end{split}
\end{equation}
where $C$ is a counter-clockwise contour about the poles of the integrand $\{z_p\}_{p=1}^{P}$. On the other hand, this integral vanishes because the integrand is analytic in the exterior of $C$ and is $O(z^{-2})$ as $z\to\infty$.
Similarly, the last row of the product $\hat{\mathbf{A}}^0\boldsymbol{\phi}^0$ is
\begin{equation}
\sum_{p=1}^{P}\phi^0_{p}
= \Lambda \sum_{p=1}^{P} \underset{z=z_p}{\Res}\frac{\Psi^{-}(z;\Lambda)}{z-x_{-}(\Lambda)}
= \frac{\Lambda}{2 \pi i} \oint_C \frac{\Psi^{-}(z;\Lambda)}{z-x_{-}(\Lambda)} \; dz = \Lambda,
\end{equation}
the integral being computed by a residue at $z=\infty$.
\end{proof}

Recall that $\Phi_\epsilon(x)=\Phi(x;\lambda_\epsilon)$ is given by \eqref{Phij_int} with $\lambda=\lambda_\epsilon$.  Also, since $\lambda_\epsilon$ is an eigenvalue, $\Phi_\epsilon(x)\to 0$ as $x\to+\infty$, which implies that the lower limit of integration in \eqref{Phij_int} can be replaced by $+\infty$ when $\lambda=\lambda_\epsilon$.  Using the definition of $I_\epsilon(x)$ given in \eqref{gamma_eps} then shows that  (the two signs $\pm$ are arbitrary but equal)
\begin{equation}
I_\epsilon(x)=\frac{1}{\lambda_\epsilon}\sum_{p=1}^P\sum_{q=1}^P\phi_p(\lambda_\epsilon)\phi_q(\lambda_\epsilon)^*a^\pm_{p,\epsilon}(x)a^\pm_{q,\epsilon}(x)^*,
\label{eq:I-eps-rewrite}
\end{equation}
where
\begin{equation}
a_{p,\epsilon}^\pm(x):=\frac{1}{\sqrt{2\pi\epsilon}}\int_{\pm\infty}^x\frac{e^{-ih_\epsilon(z)/\epsilon}\,dz}{z-z_p},\quad h_\epsilon(z):=h(z;\lambda_\epsilon).
\label{eq:a-p-epsilon}
\end{equation}

We prove properties 1--3 of Proposition~\ref{proposition-I-epsilon} in turn.
\paragraph{Property 1}  To prove \eqref{I_eps_tail_bound}, suppose that $x\le -L$.  Integrating by parts using $(\lambda_\epsilon+u_0(z))e^{-ih_\epsilon(z)/\epsilon}=i\epsilon \partial _ze^{-ih_\epsilon(z)/\epsilon}$ gives
\begin{multline}
a^-_{p,\epsilon}(x)=i\sqrt{\frac{\epsilon}{2\pi}}\left[\frac{e^{-ih_\epsilon(x)/\epsilon}}{(\lambda_\epsilon+u_0(x))(x-z_p)}\right. \\{}+ \left.\int_{-\infty}^x\frac{e^{-ih_\epsilon(z)/\epsilon}u_0'(z)\,dz}{(\lambda_\epsilon+u_0(z))^2(z-z_p)} +\int_{-\infty}^x\frac{e^{-ih_\epsilon(z)/\epsilon}\,dz}{(\lambda_\epsilon+u_0(z))(z-z_p)^2}\right].
\label{eq:tail-ibp-1}
\end{multline}
To control the last term we integrate by parts again:
\begin{multline}
\int_{-\infty}^x\frac{e^{-ih_\epsilon(z)/\epsilon}\,dz}{(\lambda_\epsilon+u_0(z))(z-z_p)^2}=i\epsilon\left[
\frac{e^{-ih_\epsilon(x)/\epsilon}}{(\lambda_\epsilon+u_0(x))^2(x-z_p)^2}\right.\\
{}+\left.\int_{-\infty}^x\frac{2e^{-ih_\epsilon(z)/\epsilon}u_0'(z)\,dz}{(\lambda_\epsilon+u_0(z))^3(z-z_p)^2}
+\int_{-\infty}^x\frac{2e^{-ih_\epsilon(z)/\epsilon}\,dz}{(\lambda_\epsilon+u_0(z))^2(z-z_p)^3}\right].
\label{eq:tail-ibp-2}
\end{multline}
The integrals in \eqref{eq:tail-ibp-1}--\eqref{eq:tail-ibp-2} are, unlike that on the right-hand side of \eqref{eq:a-p-epsilon}, absolutely convergent.  Since $\lambda_\epsilon+u_0(x)$ is bounded away from zero uniformly for $\epsilon$ sufficiently small and $x\le -L$, and since there exists $K>0$ such that $|u_0'(x)|\le K|x|^{-3}$ holds for all $x\in\mathbb{R}$, the dominant term in $a^-_{p,\epsilon}(x)$ comes from the first term on the right-hand side of \eqref{eq:tail-ibp-1}, and we easily obtain
\begin{equation}
\left| a^-_{p,\epsilon}(x)-i\sqrt{\frac{\epsilon}{2\pi}}\frac{e^{-ih_\epsilon(x)/\epsilon}}{\lambda_\epsilon x}\right|\le\frac{K'\sqrt{\epsilon}}{x^2},\;\; x\le -L,\;\; p=1,\dots,P.
\label{eq:a-minus-tail-estimate}
\end{equation}
for some constant $K'>0$.  Using \eqref{eq:a-minus-tail-estimate} in \eqref{eq:I-eps-rewrite} and taking into account $\mathbf{1}^\mathsf{T}\boldsymbol{\phi}(\lambda_\epsilon)=\lambda_\epsilon$ as well as the boundedness of $\boldsymbol{\phi}(\lambda_\epsilon)$ in the limit $\epsilon\to 0$ implied by Lemma~\ref{Phi_eps_bound} proves \eqref{I_eps_tail_bound} for $x\le -L$.  For $x\ge L$, one simply uses \eqref{eq:I-eps-rewrite} with $a^+_{p,\epsilon}(x)$ in place of $a^-_{p,\epsilon}(x)$.

\paragraph{Property 2}  To prove \eqref{I_eps_uniform_bound}, it follows from \eqref{eq:I-eps-rewrite} and Lemma~\ref{Phi_eps_bound} that it is sufficient to uniformly bound $a^-_{p,\epsilon}(x)$ for $x\le\overline{x}(\Lambda)$ and $a^+_{p,\epsilon}(x)$ for $x\ge\overline{x}(\Lambda)$, because $\lambda_\epsilon\to\Lambda<0$ as $\epsilon\to 0$.  We will prove that $|a_{p,\epsilon}^-(x)|\le K$ holds on $x\le\overline{x}(\Lambda)$ for all $p=1,\dots,P$ and all $\epsilon>0$ sufficiently small, with the complementary estimate of $|a^+_{p,\epsilon}(x)|$ on $x\ge\overline{x}(\Lambda)$ being similar.  Moreover, from \eqref{eq:a-minus-tail-estimate} we see that $|a^-_{p,\epsilon}(x)|\le K$ certainly holds for $x\le -L$, so it is enough to consider the difference $a^-_{p,\epsilon}(x)-a^-_{p,\epsilon}(-L)$ for $-L\le x\le \overline{x}(\Lambda)$.  

The limiting exponent function $h(z;\Lambda)$ has a unique critical point at $z=x_-(\Lambda)$ in the interval $(-L,\overline{x}(\Lambda))$ at which $h''(x_-(\Lambda);\Lambda)=u_0'(x_-(\Lambda))>0$ because $u_0$ is positive Klaus-Shaw.  The equation $h(z;\Lambda)-h(x_-(\Lambda);\Lambda)=s^2$ therefore defines a unique mapping $z=g(s)$ with $g'(0)=\sqrt{2/u_0'(x_-(\Lambda))}>0$ that is conformal on a complex neighborhood of $s=0$ containing a square $\max\{|\R\{s\}|,|\I\{s\}|\}\le\delta'$
whose conformal image in turn contains a neighborhood of the form $|z-x_-(\Lambda)|<\delta$ for some $\delta=\delta(\Lambda)>0$ (independent of $\epsilon$).  If $-L\le x\le x_-(\Lambda)-\delta$, we integrate by parts using $(\lambda_\epsilon+u_0(z))e^{-ih_\epsilon(z)/\epsilon}=i\epsilon \partial_z e^{-ih_\epsilon(z)/\epsilon}$ to obtain
\begin{equation}
\left|a^-_{p,\epsilon}(x)-a_{p,\epsilon}^-(-L)\right|=
\frac{1}{\sqrt{2\pi\epsilon}}\left|\int_{-L}^x\frac{e^{-ih_\epsilon(z)/\epsilon}\,dz}{z-z_p}\right|\le K'\sqrt{\epsilon}
\label{eq:uniform-bound-left}
\end{equation}
for some $K'=K'(\delta)$ independent of both $\epsilon$ and $x\in [-L,x_-(\Lambda)-\delta]$.  If
instead $x_-(\Lambda)+\delta\le x\le \overline{x}(\Lambda)$, we first integrate from $-L$ to $\overline{x}(\Lambda)$ and then back to $x$, and obtain by the same argument
\begin{equation}\label{eq:uniform-bound-right}
\left|a^-_{p,\epsilon}(x)-a^-_{p,\epsilon}(-L)-\frac{1}{\sqrt{2\pi\epsilon}}\int_{-L}^{\overline{x}(\Lambda)}
\frac{e^{-ih_\epsilon(z)/\epsilon}\,dz}{z-z_p}\right| = \frac{1}{\sqrt{2\pi\epsilon}}\left|\int_x^{\overline{x}(\Lambda)}\frac{e^{-ih_\epsilon(z)/\epsilon}\,dz}{z-z_p}\right| \le K'\sqrt{\epsilon}.
\end{equation}
The integral on the left-hand side (over $-L\le z\le\overline{x}(\Lambda)$) is independent of $x$, and by the method of stationary phase (appropriately generalized as described in Remark~\ref{remark-steepest-descent-lambda-perturb} using $|\lambda_\epsilon-\Lambda|\le C\epsilon$) it is $O(\sqrt{\epsilon})$ due to the simple critical point at $z=x_-(\Lambda)$.  Therefore, combining \eqref{eq:uniform-bound-left}--\eqref{eq:uniform-bound-right} shows that $|a^-_{p,\epsilon}(x)|\le K$ also holds for $-L\le x\le x_-(\Lambda)-\delta$ and for $x_-(\Lambda)+\delta\le x\le \overline{x}(\Lambda)$.
Finally, for $|x-x_-(\Lambda)|<\delta$ we write
\begin{equation}\label{eq:a-minus-diff-center}
a^-_{p,\epsilon}(x)-a^-_{p,\epsilon}(-L)=\frac{1}{\sqrt{2\pi\epsilon}}\int_{-L}^{x_-(\Lambda)}
\frac{e^{-ih_\epsilon(z)/\epsilon}\,dz}{z-z_p} +\frac{1}{\sqrt{2\pi\epsilon}} \int_0^{g^{-1}(x)}e^{-is^2/\epsilon}k_{p,\epsilon}(s)\,ds
\end{equation}
where
\begin{equation}
k_{p,\epsilon}(s):=\frac{e^{i(\Lambda-\lambda_\epsilon)g(s)/\epsilon}g'(s)}{g(s)-z_p}.
\end{equation}
The first term on the right-hand side of \eqref{eq:a-minus-diff-center} is independent of $x$ and is $O(1)$ as $\epsilon\to 0$ by (generalized) stationary phase.  To analyze the second term, we assume that $x>x_-(\Lambda)$ (the case $x<x_-(\Lambda)$ is similar) which implies that $g^{-1}(x)>0$, and apply Cauchy's Theorem
in the square $\max\{|\R\{s\}|,|\I\{s\}|\}\le\delta'$, on which $k_{p,\epsilon}(s)$ is analytic and bounded uniformly for small $\epsilon$ because $|\lambda_\epsilon-\Lambda|\le C\epsilon$, to deform the real path to a diagonal and a vertical segment:
\begin{equation}
\int_0^{g^{-1}(x)}e^{-is^2/\epsilon}k_{p,\epsilon}(s)\,ds = \int_0^{(1-i)g^{-1}(x)}e^{-is^2/\epsilon}k_{p,\epsilon}(s)\,ds +
\int_{(1-i)g^{-1}(x)}^{g^{-1}(x)}e^{-is^2/\epsilon}k_{p,\epsilon}(s)\,ds.
\label{eq:diag-vert}
\end{equation}
But because $|k_{p,\epsilon}(s)|\le K''$, 
\begin{equation}
\left|\int_0^{(1-i)g^{-1}(x)}e^{-is^2/\epsilon}k_{p,\epsilon}(s)\,ds\right|=
K''\int_0^{\sqrt{2}g^{-1}(x)}e^{-t^2/\epsilon}\,dt<\frac{K''\sqrt{\pi\epsilon}}{2}
\end{equation}
and similarly
\begin{multline}
\left|\int_{(1-i)g^{-1}(x)}^{g^{-1}(x)}e^{-is^2/\epsilon}k_{p,\epsilon}(s)\,ds\right|\le K''\int_0^{g^{-1}(x)}
e^{-g^{-1}(x)t/\epsilon}\,dt\\
{}\le K''\int_0^{g^{-1}(x)}e^{-t^2/\epsilon}\,ds<\frac{K''\sqrt{\pi\epsilon}}{2}.
\label{eq:vertical-integral-estimate}
\end{multline}
Using \eqref{eq:diag-vert}--\eqref{eq:vertical-integral-estimate} in \eqref{eq:a-minus-diff-center} then shows that $|a^-_{p,\epsilon}(x)|\le K$ holds also for $|x-x_-(\Lambda)|<\delta$.

\paragraph{Property 3}   Finally, the pointwise limit \eqref{I_eps_pointwise_limit} follows from an application of the method of stationary phase (or steepest descent) to $a_{p,\epsilon}^-(x)$ (for $x\le\overline{x}(\Lambda)$, $x\neq x_-(\Lambda)$) and $a_{p,\epsilon}^+(x)$ (for $x\ge\overline{x}(\Lambda)$, $x\neq x_+(\Lambda)$), taking into account Remark~\ref{remark-steepest-descent-lambda-perturb} and the fact that $|\lambda_\epsilon-\Lambda|\le C\epsilon$.  The integrals in \eqref{eq:a-p-epsilon} involve contributions from stationary phase points only if $x_-(\Lambda)<x<x_+(\Lambda)$, and hence $a^-_{p,\epsilon}(x)\to 0$ pointwise for $x<x_-(\Lambda)$ while $a^+_{p,\epsilon}(x)\to 0$ pointwise for $x>x_+(\Lambda)$, implying the same trivial limit for $I_\epsilon(x)$.  For $x\in (x_-(\Lambda),x_+(\Lambda))$, the integrals $a^-_{p,\epsilon}(x)$ have a unique simple stationary phase point at $z=x_-(\Lambda)$, with the result that
\begin{equation}
a^-_{p,\epsilon}(x)=\frac{e^{-i\pi/4}e^{-ih(x_-(\Lambda);\lambda_\epsilon)/\epsilon}}{(x_-(\Lambda)-z_p)\sqrt{u_0'(x_-(\Lambda))}}+O(\sqrt{\epsilon}),\quad\epsilon \to 0.
\end{equation}
Using this result in \eqref{eq:I-eps-rewrite} and using 
Lemma~\ref{Phi_eps_bound} then gives, for $x_-(\Lambda)<x<x_+(\Lambda)$,
\begin{equation}
\begin{split}
\lim_{\epsilon\to 0}I_\epsilon(x)&=\frac{1}{\Lambda u_0'(x_-(\Lambda))}\left|\sum_{p=1}^P\frac{\phi_p^0}{x_-(\Lambda)-z_p}\right|^2\\
&=\frac{\Lambda}{u_0'(x_-(\Lambda))}\left|\sum_{p=1}^P\frac{\prod_{l=1}^{P-1}(z_p-x_l(\Lambda))}{ (x_-(\Lambda)-z_p)\prod_{l\neq p}(z_p-z_l)}\right|^2\\
&=\frac{\Lambda}{u_0'(x_-(\Lambda))}\left|\frac{1}{2\pi i}\oint_C \frac{
\prod_{l=1}^{P-1}(z-x_l(\Lambda))}{(x_-(\Lambda)-z)\prod_{l=1}^p(z-z_l)}\,dz\right|^2,
\end{split}
\end{equation}
where $C$ is a positively-oriented contour enclosing $z_1,\dots,z_P$ but excluding $x_-(\Lambda)$.
Evaluating this integral by taking an exterior residue at $z=x_-(\Lambda)$ (no residue at $\infty$) and comparing with \eqref{Psi_defn} and \eqref{u_prime_poly} establishes the limit \eqref{I_eps_pointwise_limit} when $x_-(\Lambda)<x<x_+(\Lambda)$.

\end{document}